\newcommand{\EM}[1]{\ensuremath{#1}\xspace}
\newcommand{\newmathabb}[2]{\newcommand{#1}{\EM{{#2}}}}
\newcommand{\newrelation}[2]{\newmathabb{#1}{\mathtt{#2}}}
\newcommand{\newrelationann}[4]{\newmathabb{#1}{\mathtt{#2}^{#3}_{#4}}}
\newcommand{\newqueryclass}[3]{\newmathabb{#1}{\textbf{#2}^{#3}}}
\newcommand{\newabbrev}[2]{\newcommand{#1}{{#2}\xspace}}
\theoremstyle{definition}
\newtheorem{definition}{Definition}[section]
\newtheorem{proposition}[definition]{Proposition}
\newtheorem{example}[definition]{Example}
\newtheorem{theorem}[definition]{Theorem}
\newtheorem{lemma}[definition]{Lemma}
\newtheorem{corollary}[definition]{Corollary}
\newtheorem{claim}[definition]{Claim}
\newtheorem{myremark}[definition]{Remark}
\newcommand{\probdefbodynew}[3]{
\medskip
\noindent
\fbox{\parbox{\columnwidth}{
#1:\\
{\bf Input:} #2\\
{\bf Question:} #3 }}
\smallskip
}
\newcommand{\CharacterisationBox}[1]{\parbox{\linewidth-6em}{#1}}
\newcommand{\mydef}{\ensuremath{\mathrel{\smash{\stackrel{\scriptscriptstyle{
    \text{def}}}{=}}}}}
\newcommand{\IfDirection}{\smallskip\noindent\emph{(If)}\xspace}
\newcommand{\OnlyIfDirection}{\smallskip\noindent\emph{(Only if)}\xspace}
\renewcommand{\phi}{\varphi}
\newcommand{\cC}{{\cal C}}
\newabbrev{\naive}{na\"ive}
\newcommand{\newclass}[2]{\newmathabb{#1}{\EM{\textsc{#2}}}}
\newcommand{\polred}{\leq_p}
\newcommand{\containment}{\ensuremath{\text{{\sc Containment}}}}
\newcommand{\paracor}{\ensuremath{\text{{\sc Parallel-Correct}}}}
\newcommand{\paracom}{\ensuremath{\text{{\sc Parallel-Complete}}}}
\newcommand{\parasound}{\ensuremath{\text{{\sc Parallel-Sound}}}}
\newmathabb{\pitwoQBF}{\Pi_2\text{-QBF}}
\newcommand{\pitwo}{\EM{\Pi^p_2}}
\newcommand{\sigmatwo}{\EM{\Sigma^p_2}}
\newclass{\Poly}{P}
\newclass{\coNP}{coNP}
\newclass{\NP}{NP}
\newclass{\NEXPTIME}{{NEXPTIME}}
\newclass{\coNEXPTIME}{{coNEXPTIME}}
\newclass{\nexptime}{\NEXPTIME}
\newclass{\conexptime}{{\coNEXPTIME}}
\newcommand{\ssize}[1]{\EM{|#1|}} 
\newmathabb{\dom}{\mathbf{dom}} 
\newmathabb{\domk}{\mathit{dom}_k}
\newcommand{\adom}[1]{\EM{\mathit{adom}(#1)}}
\newcommand{\facts}[1]{\EM{\mathit{facts}(#1)}} 
\newcommand{\fcX}[1]{\EM{\boldsymbol{#1}}}
\newcommand{\restrict}[2]{\ensuremath{#1_{|#2}}}
\newmathabb{\const}{c}	
\newmathabb{\fc}{\fcX{f}} 
\newmathabb{\fcset}{\mathcal{F}}
\newmathabb{\fact}{\fcX{f}}
\newmathabb{\factB}{\fcX{g}}
\newmathabb{\fhead}{\fcX{h}} 
\newmathabb{\fcB}{\fcX{g}} 
\newmathabb{\qr}{\mathcal{Q}} 
\newmathabb{\sch}{\mathcal{D}} 
\newmathabb{\univ}{\mathbf{dom}} 
\newmathabb{\uvar}{\mathbf{var}} 
\newcommand{\ar}[1]{\EM{\mathit{ar}(#1)}} 
\newcommand{\tup}[2][]{\EM{\mathbf{#2}_{#1}}} 
\newmathabb{\atomset}{\mathcal{A}}
\newmathabb{\factset}{\mathcal{F}}
\newmathabb{\extend}{\mathit{ext}}
\newcommand{\layer}[1]{\ensuremath{\alpha_{#1}}}
\newcommand{\alphainv}[1]{\ensuremath{\alpha^{-1}_{#1}}}
\newcommand{\delayer}{\alphainv}
\newmathabb{\distp}{\boldsymbol{P}} 
\newmathabb{\nw}{\mathcal{N}}
\newmathabb{\fcC}{\fcX{h}}
\newcommand{\distwo}{\textit{loc-inst}}
\newcommand{\dist}[1]{\ensuremath{\textit{\distwo}_{\distp,#1}}}
\newcommand{\distvar}[2]{\ensuremath{\textit{\distwo}_{#1,#2}}}
\newmathabb{\node}{\kappa}	
\newmathabb{\nodeset}{N}	
\newmathabb{\cnodeA}{\node_1}	
\newmathabb{\cnodeB}{\node_2}		
\newcommand{\respwo}{\textit{rfacts}}
\newcommand{\respp}{\ensuremath{\respwo_\distp}}
\newcommand{\resp}[1]{\ensuremath{\respwo_\distp(#1)}}
\newcommand{\respvar}[2]{\ensuremath{\respwo_{#1}(#2)}}
\newcommand{\classpnondet}{\ensuremath{{\cal P}_{\text{npoly}}}}
\newmathabb{\classpnondetk}{\classp^k_{\text{npoly}}}
\newcommand{\classpnondetarg}[1]{\EM{\classp^{#1}_{\text{npoly}}}}
\newcommand{\Classpnondet}{\ensuremath{\mathfrak{P}_{\text{npoly}}}}
\newcommand{\Classp}{\ensuremath{\mathfrak{P}}}
\newmathabb{\classp}{{\cal P}}
\newmathabb{\classpnp}{{\cal P}_{\NP}}
\newmathabb{\classpfin}{{\cal P}_{\text{fin}}}
\newmathabb{\classpfinwild}{{\cal P}_{\text{fin}^*}} 
\newmathabb{\classprule}{\classp_{\text{rule}}}
\newmathabb{\classpenum}{\classprule}
\newmathabb{\classpenumexp}{\classp^{\text{\dom}}_{\text{rule}}}
\newmathabb{\drule}{\rho}
\newmathabb{\algNPDist}{\mathcal{A}_{\distp}}
\newmathabb{\addInput}{x}
\newmathabb{\addInputk}{x_1 \circ\dots\circ x_k}
\newmathabb{\addInputi}{x_i}
\newcommand{\one}[3]{\EM{[#1,#2](#3)}}   
\newcommand{\onePI}[1][\qr]{\one{#1}{\distp}{I}}
\newcommand{\body}[1]{\EM{\mathit{body}_{#1}}}
\newcommand{\pos}[1]{\ensuremath{\textit{pos}_{#1}}}
\newcommand{\negbody}[1]{\ensuremath{\textit{neg}_{#1}}}
\newcommand{\bodypos}[1]{\ensuremath{\textit{pos}_{#1}}}
\newcommand{\bodyneg}[1]{\ensuremath{\textit{neg}_{#1}}}
\newcommand{\head}[1]{\EM{\mathit{head}_{#1}}} 
\newcommand{\ineq}[1]{\EM{\mathit{ineq}_{#1}}}
\newcommand{\vars}[1]{\EM{\mathit{vars}(#1)}} 
\newcommand{\varmax}[1]{\EM{\mathit{varmax(#1)}}}
\newcommand{\CQ}{\ensuremath{{\text{CQ}}}\xspace}
\newcommand{\CQneg}{\ensuremath{{\text{CQ}}^{\lnot}}\xspace}
\newcommand{\CQnegineq}{\ensuremath{{\text{CQ}}^{\lnot,\neq}}\xspace}
\newmathabb{\UCQneg}{\text{UCQ}^\lnot} 
\newmathabb{\SCQneg}{\text{SCQ}^\lnot} 
\newmathabb{\FCQneg}{\text{FCQ}^\lnot}
\newmathabb{\SUCQneg}{\text{USCQ}^\lnot}
\newmathabb{\USCQneg}{\SUCQneg}
\newmathabb{\FUCQneg}{\text{UFCQ}^\lnot}
\newmathabb{\FUCQnegineq}{\text{UFCQ}^{\lnot,\neq}} 
\newmathabb{\FCQnegineq}{\text{FCQ}^{\lnot,\neq}} 
\newmathabb{\UCQineq}{\text{UCQ}^{\neq}} 
\newmathabb{\UCQineqs}{\ensuremath{\text{UCQ}^{\neq}\text{s}}} 
\newmathabb{\UCQnegineq}{\text{UCQ}^{\lnot,\neq}} 
\newmathabb{\UFCQneg}{\FUCQneg}
\newmathabb{\UFCQnegineq}{\FUCQnegineq}
\newcommand{\UCQ}{\ensuremath{{\text{UCQ}}}\xspace}
\newcommand{\UCQs}{\ensuremath{{\text{UCQs}}}\xspace}
\newcommand{\CQs}{\ensuremath{{\text{CQs}}}\xspace}
\newcommand{\CQnegineqs}{\ensuremath{{\text{CQ}}^{\lnot,\neq}}s\xspace}
\newcommand{\UCQnegineqs}{\ensuremath{{\text{UCQ}}^{\lnot,\neq}}s\xspace}
\newmathabb{\BCQ}{\text{BCQ}}
\newmathabb{\BCQneg}{\text{BCQ}^{\lnot}}
\newmathabb{\BCQineq}{\text{BCQ}^{\neq}}
\newqueryclass{\classCQ}{CQ}{}
\newqueryclass{\classCQneg}{CQ}{\lnot}
\newqueryclass{\classCQineq}{CQ}{\neq}
\newqueryclass{\classCQnegineq}{CQ}{\lnot,\neq}
\newqueryclass{\classUCQ}{UCQ}{}
\newqueryclass{\classUCQneg}{UCQ}{\lnot}
\newqueryclass{\classUCQineq}{UCQ}{\neq}
\newqueryclass{\classUCQnegineq}{UCQ}{\lnot,\neq}
\newqueryclass{\classSCQneg}{SCQ}{\lnot}
\newqueryclass{\classUSCQneg}{USCQ}{\lnot}
\newqueryclass{\classBCQ}{BCQ}{}
\newqueryclass{\classBCQneg}{BCQ}{\lnot}
\newqueryclass{\classBCQineq}{BCQ}{\neq}
\newqueryclass{\classUBCQneg}{UBCQ}{\lnot}
\newqueryclass{\classFCQneg}{FCQ}{\lnot}
\newqueryclass{\classFCQnegineq}{FCQ}{\lnot,\neq}
\newqueryclass{\classFUCQneg}{UFCQ}{\lnot}
\newqueryclass{\classUFCQneg}{UFCQ}{\lnot}
\newqueryclass{\classFUCQnegineq}{UFCQ}{\lnot,\neq}
\newqueryclass{\classUFCQnegineq}{UFCQ}{\lnot,\neq}
\newmathabb{\bw}{\boldsymbol{w}}
\newmathabb{\bx}{\boldsymbol{x}}
\newmathabb{\by}{\boldsymbol{y}}
\newmathabb{\nx}{\bar{x}}
\newmathabb{\ny}{\bar{y}}
\newmathabb{\VarTripPos}{\boldsymbol{W}^+}
\newrelation{\Global}{Global}
\newrelation{\True}{True}
\newrelation{\False}{False}
\newrelation{\Active}{Active}
\newrelation{\Var}{Var}
\newrelation{\Dom}{Bool}
\newrelation{\Neg}{Neg}
\newrelation{\Clause}{Clause}
\newrelation{\Succ}{Succ}
\newrelation{\Rel}{Rel}
\newrelation{\Type}{Type}
\newrelationann{\StartA}{Start}{}{1}
\newrelationann{\StartB}{Start}{}{2}
\newrelation{\Stop}{Stop}
\newmathabb{\qrglobal}{\qr_\Global}
\newmathabb{\qrnotglobal}{\qr_{\lnot\Global}}
\newmathabb{\qrBglobal}{\qr'_\Global}
\newmathabb{\qrBnotglobal}{\qr'_{\lnot\Global}}
\newmathabb{\IV}{I_V}
\newmathabb{\IW}{I_W}
\newmathabb{\AtomsReq}{\atomset_\sch}
\newmathabb{\AtomsPro}{\atomset^\lnot_\sch}
\newmathabb{\AtomsActive}{\atomset_{\text{act}}}
\newmathabb{\AtomsPos}{\atomset_{\text{pos}}}
\newmathabb{\AtomsCons}{\atomset_{\text{cons}}}
\newmathabb{\AtomsVar}{\atomset_{\text{var}}}
\newmathabb{\AtomsVarX}{\atomset^\bx_{\text{var}}}
\newmathabb{\AtomsVarY}{\atomset^\by_{\text{var}}}
\newmathabb{\AtomsSat}{\atomset_{\text{sat}}}
\newmathabb{\FactsPos}{\factset_{\text{pos}}}
\newmathabb{\FactsCons}{\factset_{\text{cons}}}
\newmathabb{\FactsVarX}{\factset^\bx_{\text{var}}}
\newmathabb{\Vdef}{V_{\text{def}}}
\newmathabb{\InstReq}{J_\sch}
\newmathabb{\InstPro}{J^\lnot_\sch}
\newmathabb{\nodeDef}{\node_{\text{def}}}
\newmathabb{\nodeTrue}{\node_\True}
\newmathabb{\nodeFalse}{\node_\False}
\newmathabb{\nodeComp}{\kappa}
\newmathabb{\nodeSound}{\sigma}
\newmathabb{\nodeRest}{\rho}
\newmathabb{\ConditionPC}{\mathit{C1}}
\newmathabb{\ConditionPCUCQ}{\mathit{C1'}}
\title{Parallel-Correctness and Containment for Conjunctive Queries with Union and Negation}
\author{Gaetano Geck \\ {\small TU Dortmund University} 
 \and Bas Ketsman \\ {\small Hasselt University} \and Frank Neven \\ {\small Hasselt University}
\and Thomas Schwentick \\ {\small TU Dortmund University}
}
\date{}
\begin{document}

\maketitle

\maketitle

\begin{abstract}
Single-round multiway join algorithms first reshuffle data over many servers and then evaluate the query at hand in a parallel and communication-free way. 
A key question is whether a given distribution policy for the
reshuffle is adequate for computing a given query, also referred to
as parallel-correctness. This paper extends the study of the
complexity of parallel-correctness and its constituents,
parallel-soundness and parallel-completeness, to unions of conjunctive
queries with and without negation. As a by-product it is shown that
the containment problem for conjunctive queries with negation is
\conexptime-complete.
\end{abstract}

\section{Introduction}
\label{sec:intro}

Motivated by recent in-memory systems like Spark~\cite{spark} and
Shark~\cite{shark}, Koutris and Suciu introduced the massively parallel
communication model (MPC)~\cite{DBLP:conf/pods/KoutrisS11} where
computation proceeds in a sequence of parallel steps each followed by
global synchronisation of all servers. Of particular interest in the
MPC model are queries that can be evaluated in one round of
communication~\cite{DBLP:conf/pods/BeameKS14}. In its most \naive
setting, a query $\qr$ is evaluated by reshuffling the data over many
servers, according to some distribution policy,  and then computing $\qr$ at each server in a parallel but communication-free manner. 
A notable family of distribution policies is formed within the Hypercube algorithm~\cite{AfratiUllman10,DBLP:conf/pods/BeameKS14,DBLP:conf/sigmod/ChuBS15}. 
 A property of Hypercube distributions is that for any instance $I$, the central execution of $\qr(I)$ always equals the union of the evaluations of $\qr$ at every computing node (or server). The latter guarantees the correctness of the distributed evaluation for any conjunctive query by the
 Hypercube algorithm. 

 Ameloot et al.~\cite{DBLP:conf/pods/AmelootGKNS15} introduced a general framework for reasoning about one-round evaluation algorithms under {\em arbitrary} distribution policies. They introduced {\em parallel-correctness} as a property of a query w.r.t.\ a distribution policy which states that central execution always equals distributed execution, that is, equals
 the union of the evaluations of the query at each server under the given distribution policy. One of the main results
 of \cite{DBLP:conf/pods/AmelootGKNS15} is that deciding parallel-correctness
 for conjunctive queries (\CQ{}s) is $\Pi_2^P$-complete under arbitrary distribution policies.  The upper bound follows rather directly from a semantical characterisation of parallel-correctness in terms of properties of minimal valuations. Specifically, it was shown that a conjunctive query is parallel-correct w.r.t.\ a distribution policy, if the distribution policy sends for every minimal valuation its required facts to at least one node.
 
As union and negation are fundamental operators, we extend in this paper the study of parallel-correctness to unions of conjunctive queries (\UCQ), conjunctive queries with negation (\CQneg) and unions of conjunctive queries with negation (\UCQneg). In fact, we study two addional but related notions: parallel-soundness and parallel-completeness. While parallel-correctness implies equivalence between
 centralised and distributed execution, 
 parallel-soundness (respectively, parallel-completeness) requires
 that distributed execution is contained in (respectively, contains)
  centralised execution.
  Of course, parallel-soundness and
  parallel-completeness together are equivalent to parallel-correctness. Furthermore, since all monotone queries are parallel-sound, on this class parallel-correctness is equivalent to parallel-completeness.

 We start by investigating parallel-correctness for 
 \UCQ. Interestingly, for a \UCQ to be parallel-correct under a certain distribution policy it is not required that
 every disjunct is parallel-correct.  We extend the characterisation for parallel-correctness in terms of
 minimal valuations for \CQs to \UCQs and thereby obtain membership in $\Pi_2^P$. The matching lower bound follows, of course,  from the lower bound for \CQs~\cite{DBLP:conf/pods/AmelootGKNS15}.

   Next, we study parallel-correctness for (unions of) conjunctive
   queries with negation. Sadly, when negation comes into play,
   parallel-correctness can no longer be characterised in terms of
   properties of valuations. Instead our algorithms are based on counter-examples of exponential size, yielding \conexptime upper bounds. It turns out that this is optimal, though, as our corresponding lower bounds show.  The proof of the lower bounds comes along an unexpected route:
 we exhibit a reduction from query
   containment for $\CQneg$ to parallel-correctness of \CQneg (and its
   two variants) and show that query containment
   for $\CQneg$ is \conexptime-complete. This is considerably different from what we
   thought was folklore knowledge of the community. Indeed,
   the $\Pi_2^p$-completeness result for query containment for
   $\CQneg$ mentioned in \cite{DBLP:journals/tcs/Ullman00} only seems
   to hold for fixed database schemas (or a fixed arity
   bound, for that matter). We note that Mugnier et al.~\cite{MugnierST12}
   provide a $\pitwo$ upper bound proof for $\CQneg$ containment and explicitly mention that it holds under the assumption that
   the arity of predicates is bounded by a constant. 
Altogether, parallel-correctness (and its
   variants) for (unions of) conjunctive queries with negation is thus complete for \conexptime. 
 
Finally, a natural question is how the high complexity of
  parallel-correctness in the presence of negation can be lowered. We identify
  two 
  cases in which the complexity drops. 
  More specifically, 
  the complexity decreases from \conexptime to
 \pitwo if the database schema is fixed or the arity of relations is bounded, and to \coNP for unions of \emph{full} conjunctive queries with negation. In the latter case, we again employ a reduction from containment of full conjunctive queries (with negation) and obtain novel results on the containment problem in this setting as well.
   All upper bounds 
   hold
   for queries with inequalities.

\bigskip
\noindent
{\bf Outline.} This paper is further organised as follows.
In Section~\ref{sec:relwork}, we discuss related work. 
In Section~\ref{sec:defs}, we introduce the necessary definitions.
We address parallel-correctness for unions of conjunctive queries in Section~\ref{sec:ucq}. We consider containment of conjunctive queries with negation in Section~\ref{sec:containment} and parallel-correctness together with its variants in Section~\ref{sec:negation}.
We discuss the restriction to full conjunctive queries in Section~\ref{sec:full}.
We conclude in Section~\ref{sec:discussion}.

\section{Related work}
\label{sec:relwork}

As mentioned in the introduction, Koutris and Suciu introduced the massively
parallel communication model (MPC)~\cite{DBLP:conf/pods/KoutrisS11}. A key property is that computation proceeds in a sequence of parallel steps, each followed by global synchronisation of all computing nodes. In this model, evaluation of conjunctive queries~\cite{DBLP:conf/pods/BeameKS13,DBLP:conf/pods/KoutrisS11} and skyline queries~\cite{DBLP:conf/icdt/AfratiKSU12} has been considered. Beame, Koutris and Suciu~\cite{DBLP:conf/pods/BeameKS14} proved a matching upper and lower bound for the amount of communication needed to compute a full conjunctive query without self-joins in one communication round. The upper bound is provided by a randomised algorithm called \emph{Hypercube} which uses a technique that can be traced back to Ganguly, Silberschatz, and Tsur~\cite{DBLP:journals/jlp/GangulyST92} and is described in the context of map-reduce by Afrati and Ullman~\cite{AfratiUllman10}.

Ameloot et al.~\cite{DBLP:conf/pods/AmelootGKNS15} introduced a general framework for reasoning about one-round evaluation algorithms under {\em arbitrary} distribution policies. They introduced the notion of {\em parallel-correctness} and proved its associated decision problem to be $\pitwo$-complete for conjunctive queries. In addition, towards optimisation in MPC, they considered parallel-correctness \emph{transfer}. Here, parallel-correctness transfers from $\qr$ to $\qr'$ when $\qr'$ is parallel-correct under every distribution policy for which $\qr$ is parallel-correct. The associated decision problem for conjunctive queries is shown to be $\Pi_3^p$-complete. In addition, some restricted cases (e.g., transferability under Hypercube distributions), are shown to be \NP-complete.

Our definition of a distribution policy is borrowed from Ameloot et
al.~\cite{DBLP:conf/pods/AmelootKNZ14} (but already surfaces in the work of Zinn et
al.~\cite{DBLP:conf/icdt/ZinnGL12}), where distribution policies are used to define the
class of policy-aware transducer networks. 
The work by Ameloot et al.~\cite{DBLP:journals/jacm/AmelootNB13,DBLP:conf/pods/AmelootKNZ14} relates coordination-free computation with definability in variants of Datalog. One-round communication algorithms in MPC can be seen as
very restrictive coordination-free computation.

The complexity of query containment for conjunctive queries is proved to be \NP-complete by Chandra and Merlin~\cite{DBLP:conf/stoc/ChandraM77}. Levy and Sagiv provide a test for query containment of conjunctive queries with negation~\cite{DBLP:conf/vldb/LevyS93} that involves exploring an exponential
number of possible counter-example instances. In the context of information
integration, Ullman~\cite{DBLP:journals/tcs/Ullman00} gives a comprehensive overview of query containment (with and without negation) and states the complexity of query containment for \CQneg to be $\Pi_2^p$-complete. As mentioned in the introduction, the latter apparently only holds when the database schema is fixed
or the arity of relations is considered to be bounded.
A proof for the $\Pi_2^p$-lowerbound is given by
Farr\'e et al. \cite{DBLP:conf/icdt/FarreNTU07}.
Based on \cite{DBLP:conf/vldb/LevyS93}, Wei and Lausen~\cite{DBLP:conf/icdt/WeiL03} study a method for testing
containment that exploits containment mappings for the positive parts of queries, and additionally provide a characterisation for \UCQneg{} containment.

\section{Definitions}
\label{sec:defs}

\subsection{Queries and instances}
We assume an infinite set \dom of data values that can be represented
by strings over some fixed alphabet. By $\dom_n$ we denote the set of
data values represented by strings of length at most $n$.
A \emph{database schema} $\sch$ is a finite set of relation names $R$,
each with some arity $\ar R$. We also write $R^{(k)}$ as a shorthand to denote that $R$ is a relation of arity $k$. 
We call $R(\tup t)$ a \emph{fact} when $R$ is a relation name and
$\tup t$ a tuple over \dom of appropriate arity.
We say that a fact $R(\tup t)$ is {\em over} a database
schema $\sch$ if $R \in \sch$. For a subset $U\subseteq \dom$ we write 
$\facts{\sch,U}$ for the set of possible facts over schema $\sch$ and
$U$ and by $\facts \sch$ we denote $\facts{\sch,\dom}$.
A \emph{(database) instance} $I$ over $\sch$ is a finite set of facts
over $\sch$.  By $\adom{I}$ we denote the set of data values occurring
in $I$. 
A \emph{query $\qr$ over input schema $\sch_1$ and output schema
  $\sch_2$} is a generic mapping from instances over $\sch_1$ to
instances over $\sch_2$.  Genericity means that for every permutation
$\pi$ of \dom\ and every instance $I$, $\qr(\pi(I)) =
\pi(\qr(I))$.
We say that $\qr$ is {\em contained} in $\qr'$, denoted $\qr\subseteq\qr'$ iff
for all instances $I$, $\qr(I)\subseteq \qr'(I)$.

\subsection{Unions of conjunctive queries with negation}
\label{sec:cq}

Let $\uvar$ be an infinite set of variables, disjoint from $\dom$.
An \emph{atom} over schema \sch is of the form $R(\tup{x})$, where $R$
is a relation name from \sch and $\tup{x}=(x_1,\ldots,x_k)$ is a tuple
of variables in $\uvar$ with $k = \ar R$. 
A \emph{conjunctive query $\qr$ with negation and inequalities} over input schema $\sch$ is an expression of the form
$$T(\tup{x}) \leftarrow R_1(\tup{y_1}), \ldots, R_m(\tup{y_m}), \lnot
S_1(\tup{z_1}), \ldots, \lnot S_n(\tup{z_n}),\beta_1,\ldots,\beta_p$$ 
where all $R_i(\tup[i] y)$ and $S_i(\tup[j] z)$ are atoms over $\sch$, every $\beta_i$ is
an inequality of the form $s\neq s'$ where $s,s'$ are distinct variables occurring in some $\tup[i]
y$ or $\tup[j] z$, and $T(\tup x)$ is an atom for which $T \not \in \sch$. Additionally, for safety, we require that every variable in $\tup x$
occurs in some $\tup[i] y$
and that every variable occurring in a
negated atom has to occur in a positive atom as well 
(\emph{safe} negation). We refer to the \emph{head atom} $T(\tup
x)$ as $\head{\qr}$, to the set $\{R_1(\tup{y_1}), \ldots,
R_m(\tup{y_m}),S_1(\tup{z_1}), \ldots,$ $S_n(\tup{z_n})\}$ as
$\body{\qr}$, and to the set $\{\beta_1,\dots,\beta_p\}$ as $\ineq{\qr}$.
Specifically, we refer to $\{R_1(\tup{y_1}), \ldots, R_m(\tup{y_m})\}$ as the positive atoms in $\qr$, denoted $\pos{\qr}$, and to $\{S_1(\tup{z_1}), \ldots, S_n(\tup{z_n})\}$ as the \emph{negated} atoms of $\qr$, denoted $\negbody{\qr}$. We denote by $\vars{\qr}$ the set of all variables occurring in $\qr$.
We refer to the class of conjunctive queries with negation and inequalities by
\classCQnegineq, its restriction to queries without inequalities,
without negated atoms, and without both by \classCQneg,
\classCQineq, and \classCQ, respectively. As a shorthand we refer to queries from
\classCQnegineq as \CQnegineqs and similarly for the other classes.

A \emph{pre-valuation} for a \CQnegineq~\qr is a total function
$V:\vars{\qr} \to \dom$, which naturally extends to atoms and sets of atoms.
It is {\em consistent} for $\qr$, if $V(\pos{\qr})
\cap V(\negbody{\qr}) = \emptyset$, and
$V(s)\not=V(s')$, for every inequality $s\not=s'$ of \qr, in which case it is called a valuation.
Of course, for a conjunctive query without negated atoms and without inequalities, every pre-valuation is also a valuation.
We refer to $V(\pos{\qr})$ as the facts \emph{required} by $V$, and to $V(\negbody{\qr})$ as the facts
\emph{prohibited} by $V$.

A valuation $V$ \emph{satisfies} $\qr$ on instance $I$ if all facts required by $V$ are in $I$ while no fact prohibited by $V$ is in $I$, that is, if $V(\bodypos{\qr}) \subseteq I$ and $V(\bodyneg{\qr}) \cap I = \emptyset$.
In that case, $V$ \emph{derives} the fact $V(\head{\qr})$. The \emph{result
of $\qr$ on instance $I$}, denoted $\qr(I)$, is defined as the set of
facts that can be derived by satisfying valuations for $\qr$ on
$I$.

A {\em union of conjunctive queries with negation and inequalities} 
is a finite union of \CQnegineqs. That is, $\qr$ is of the form $\bigcup^n_{i=1}
\qr_i$ where all subqueries $\qr_1,\dots,\qr_n$ have the same relation name in their head atoms.
We assume disjoint variable sets among different disjuncts in 
$\qr$. That is, $\vars{\qr_i} \cap \vars{\qr_j}=\emptyset$ for $i\neq j$
and, in particular, $\vars{\head{\qr_i}}\neq \vars{\head{\qr_j}}$. 
By $\varmax{\qr}$ we denote the maximum number of variables that
occurs in any disjunct of $\qr$. 
By
\classUCQnegineq we denote the class of unions of conjunctive
queries with negation and inequalities and its fragments are denoted correspondingly. 

A \CQnegineq is called \emph{full} if all of its variables occur in its head. A \UCQnegineq is \emph{full} if all its subqueries are full.

The \emph{result of $\qr$ on instance $I$} is
$\qr(I)=\bigcup_{i=1}^n \qr_i(I)$. Accordingly, a mapping from variables to data values is a \emph{valuation} for a
\UCQnegineq~$\qr$ if it is a valuation for one of its
subqueries.

\subsection{Networks, data distribution, and policies}

A \emph{network} $\nw$ is a nonempty finite set of values from \dom,
which we call \emph{(computing) nodes} (or servers).
A \emph{distribution policy} $\distp=(U,\respp)$ for a database schema $\sch$ and
a network $\nw$ consists of a universe $U$ and  a total function
$\respp$ that maps each node of $\nw$ to a
set of facts from  $\facts{\sch,U}$.
A node~$\node$ is \emph{responsible for fact~$\fc$} (under policy~$\distp$) if $\fc \in \respp(\node)$.
As a shorthand (and slight abuse
of notation), we denote the set
of nodes $\kappa$ that are responsible for some given fact \fc by $\distp(\fc)$. 
For a distribution policy $\distp$ and an instance $I$ over $\sch$, let
$\dist I$ denote the function that maps each $\node\in\nw$ to $I\cap \resp{\node}$,
that is, the set of facts in $I$ for which $\node$ is responsible. 
We sometimes refer to a given instance~$I$ as the \emph{global instance} and to
$\dist{I}(\node)$ as the \emph{local instance at node~$\node$}. 

We note that for some facts from $\facts{\sch,U}$ there are no
responsible nodes. This gives our framework some additional
flexibility. However, it does not affect our results: in the lower bound proofs we only use
distributions for which all facts from $\facts{\sch,U}$ have some
responsible nodes.
Each distribution policy implicitly induces a network and
each query implicitly defines a database (sub-) schema. Therefore, we often
omit the explicit notation for networks and schemas.

Given some policy~$\distp$ that is defined over a network~$\nw$, the \emph{result $\onePI$ of the distributed evaluation of a query~$\qr$ on an instance~$I$ in one round} is defined as the union of the results of the query evaluated on each node's local instance. Formally,
\[
\onePI\mydef\bigcup_{\node\in\nw}{\qr\big(\dist I(\node)\big)}.
\]

In the decision problem for parallel
correctness (to be formalised later), the input consists of a query
$\qr$ and a distribution policy $\distp$. However, it is not obvious
how distribution policies should be specified. In principle, they
could be defined in an arbitrary fashion, but it is reasonable to
assume that given a potential fact \fc, a node \node and a policy \distp, it is
not too hard to find out whether \node is responsible for \fc under
\distp.

For \UCQineqs, which are monotone,
our complexity results are remarkably robust with respect to
the choice of the representation of distribution policies. In fact, the complexity results coincide for the two extreme
possible choices that we consider in this article. In the first case,  distribution policies are specified by an explicit list of tuple-node-pairs,
whereas in the second case the test whether a given node is responsible for a
given tuple can be carried out by a non-deterministic polynomial-time algorithm. However, we do require that some bound $n$ on the length of strings that represent node names and data values is given. Without
such a restriction, no upper complexity bounds would be possible as nodes
with names of super-polynomial length in the size of the input 
would not be accessible.

Considering queries with negated
atoms, however, these two settings (seem to) differ, complexity-wise.
Thus, we consider a third option, $\classpenum$, in which the universe $U$ of a policy is
explicitly enumerated and the responsibilities are defined by simple
constraints (described below). The latter representation enjoys the same
complexity properties as the full \NP-test based case.

Now we give more precise definitions of classes of policies and 
their representations as inputs of algorithmic problems.
As said before, policies $\distp=(U,\respp)$ from $\classpfin$ are specified by an
explicit enumeration of $U$ and of all pairs $(\node,\fc)$  where
$\node\in\distp(\fc)$.
A policy $\distp=(U,\respp)$ from $\classpenum$ is given by an
explicit enumeration of $U$ and a list of \emph{rules} of the form
$\drule=(A, \kappa)$, where $A$ is an atom with variables and/or constants from $U$, and a network node $\kappa$.
The semantics of such a rule is as follows: for
every substitution $\mu: \uvar \cup \dom \to
\dom$ that maps variables to values from $U$ and leaves constants
from $U$ unchanged, the node \node is responsible for the fact $\mu(A)$.
A rule is a \emph{fact rule} if its atom does not
contain any variables, that is, $A = R(a_1,\dots,a_n)$, where
$a_1,\dots,a_n \in U$. In particular, $\classpfin \subseteq \classpenum$.

\begin{example}
	Let distribution policy~$\distp$ over schema~$\{\Rel^{(3)}\}$ and network
	$\{\cnodeA,\cnodeB\}$ be given by $U=\{1,\ldots,10\}$ and the
        rules 
	$\big(\Rel(1,x,x), \cnodeA\big), \big(\Rel(2,x,y), \cnodeB\big)$. On global
	instance $I=\{\Rel(1,7,7), \Rel(1,7,8), \Rel(2,9,8), \Rel(2,9,9)
	\}$, policy $\distp$ induces local instances $\dist{I}(\cnodeA) =
	\{\Rel(1,7,7)\}$ and $\dist{I}(\cnodeB) = \{\Rel(2,9,8),
	\Rel(2,9,9)\}$. \hfill $\Box$
\end{example}

The most general classes of policies allow to specify policies by
means of a \enquote{test algorithm} with time bound $\ell^k$, where $\ell$ is the length of the input and
$k$ some constant. Such an algorithm decides, for an input consisting of a
node~$\node$ 
and fact~$\fc$, whether $\node$ is responsible for
$\fc$.%
\footnote{We note that it is important that for each
class of policies there is a fixed $k$ that bounds the exponent in the
test algorithm as otherwise we could not expect a polynomial bound for
all policies of that class.}
 A policy $\distp=(U,\respp)$ from $\classpnondetk$
is specified by a pair $(n,\algNPDist)$, where $n$ is a natural number in unary representation and $\algNPDist$ is a non-deterministic algorithm.\footnote{For
concreteness, say, a non-deterministic Turing machine.} The universe $U$ of
$\distp$ is the set of all data values that can be represented by strings of
length at most $n$ (for some given fixed alphabet) and the underlying network
consists of all nodes which are represented by strings of length at most $n$, that is, $\nw=\dom_n$. A
node \node is responsible for a fact \fc if $\algNPDist$, on input
$(\node,\fc)$, has an accepting run of at most $|(\node,\fc)|^k$ steps. Clearly, each
policy of $\classpfin$ can be described in
$\classpnondetarg{2}$. Let $\Classpnondet$ denote the set\footnote{Since \enquote{linear time} is
a subtle notion, we rather not consider
$\classpnondetarg{1}$.}
 $\{\classpnondetk
\mid k\ge 2\}$ of distribution policies  and by $\Classp$ the set $\{\classpfin,\classpenum\}\cup
\Classpnondet$.

\subsection{Parallel-correctness, soundness, and completeness}
\label{sec:notions}

In this paper, we mainly consider the one-round evaluation
algorithm for a query \qr that first distributes (reshuffles) the data
over the computing nodes according to $\distp$, then evaluates $Q$ in
a parallel step at every computing node, and finally
outputs all facts that are obtained in this way.\footnote{We note
  that, since $\distp$ is defined on the granularity of a fact, the reshuffling does not depend on the current distribution of the data and can be done in parallel as well.}  
As formalised next, the one-round evaluation algorithm is correct (sound,
complete) if the query \qr is parallel-correct (parallel-sound, parallel-complete) under $\distp$.

\begin{definition}
    \label{def:pc-inst}
Let \qr be a query, $I$ an instance, and $\distp$ a distribution policy.
\begin{itemize}
	\item \qr is \emph{parallel-sound on $I$ under \distp}
	if $\qr(I) \supseteq \onePI$.
	\item \qr is \emph{parallel-complete on $I$ under \distp}
	if $\qr(I) \subseteq \onePI$; and,
	\item \qr is \emph{parallel-correct on $I$ under \distp} if
	$\qr(I) = \onePI$, that is,\\
	if it is parallel-sound and parallel-complete.
\end{itemize}
\end{definition}

\begin{definition}
\label{def:pc:general}
A query $\qr$ is \emph{parallel-correct (respectively, parallel-sound and parallel-complete) under distribution policy} $\distp=(U,\respp)$, if $\qr$ 
is parallel-correct (respectively, parallel-sound and
parallel-complete) on all instances $I\subseteq\facts{\sch,U}$.
\end{definition}

In \cite{DBLP:conf/pods/AmelootGKNS15}, parallel-correctness is characterised
in terms of minimal valuations as defined next:
\begin{definition}\label{def:minimal:cq}
Let $\qr$ be a CQ. A valuation $V$ for $\qr$ is \emph{minimal} for $\qr$ if there exists no valuation $V'$ for $\qr$ such that $V(\head{\qr})=V'(\head{\qr})$ and $V'(\body{\qr}) \subsetneq V(\body{\qr})$.
\end{definition}

The following lemma is key in obtaining the $\pitwo$ upper bound on
the complexity of testing parallel-correctness for conjunctive
queries:
\begin{lemma}[Characterisation of parallel-correctness
	for $\CQ$s~\cite{DBLP:conf/pods/AmelootGKNS15}]
    A \CQ~$\qr$ is parallel-correct under distribution
    policy $\distp=(U,\respp)$ if and only if the following holds:
	\begin{equation}
	        \tag{\ConditionPC}
	        \label{lem:pc:pods}
	        \CharacterisationBox{
	        	For every minimal valuation $V$ for \qr over $U$,
			there is a node $\node \in \nw$ such that
			\begin{math}
				V(\body{\qr}) \subseteq \respp(\node).
			\end{math}
		}
	\end{equation}
\end{lemma}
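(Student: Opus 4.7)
The plan is to prove both directions of the characterisation by exploiting monotonicity of CQs and the precise shape of the minimality condition.

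For the \emph{if} direction, assume the right-hand side holds. Parallel-soundness, i.e.\ $\onePI \subseteq \qr(I)$, is immediate for every instance $I \subseteq \facts{\sch,U}$ because each local instance $\dist{I}(\node)$ is a subset of $I$ and \CQs are monotone. For parallel-completeness, take an arbitrary fact $\fhead \in \qr(I)$. By definition there is a valuation $V_0$ with $V_0(\body{\qr}) \subseteq I$ and $V_0(\head{\qr}) = \fhead$. Iteratively replacing $V_0$ by any valuation producing the same head with a strictly smaller body (possible only finitely often, since $\body{\qr}$ is finite), I obtain a minimal valuation $V$ for $\qr$ with $V(\head{\qr}) = \fhead$ and $V(\body{\qr}) \subseteq I$; note that all images are in $U \supseteq \adom{I}$, so the assumption applies. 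Hence some node $\node$ satisfies $V(\body{\qr}) \subseteq \respp(\node)$, giving $V(\body{\qr}) \subseteq I \cap \respp(\node) = \dist{I}(\node)$, and therefore $\fhead \in \qr(\dist{I}(\node)) \subseteq \onePI$.

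For the \emph{only if} direction, I argue by contraposition: assuming the condition fails, I exhibit a witness instance on which \qr is not parallel-correct. Let $V$ be a minimal valuation for \qr over $U$ such that no node is responsible for all of $V(\body{\qr})$, and take $I \mydef V(\body{\qr})$, which indeed lies in $\facts{\sch,U}$. Clearly $V$ satisfies \qr on $I$, so $V(\head{\qr}) \in \qr(I)$. It remains to show that $V(\head{\qr}) \notin \onePI$. Suppose, for contradiction, some node $\node$ derives $V(\head{\qr})$ via a valuation $V'$ with $V'(\body{\qr}) \subseteq \dist{I}(\node) \subseteq I = V(\body{\qr})$. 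Minimality of $V$ rules out $V'(\body{\qr}) \subsetneq V(\body{\qr})$, so $V'(\body{\qr}) = V(\body{\qr})$, yielding $V(\body{\qr}) \subseteq \dist{I}(\node) \subseteq \respp(\node)$, a contradiction.

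The only genuinely delicate point is ensuring that the valuation obtained in the \emph{if} direction is both minimal \emph{and} still maps into $U$; this is automatic because the shrinking process keeps the same head image and only selects strictly smaller subsets of a body already contained in $I \subseteq \facts{\sch,U}$, so the image of any variable appearing positively stays in $\adom{I} \subseteq U$, and for variables that might be dropped from the positive body there is no safety issue since \CQs have no negation. The \emph{only if} direction is essentially a one-line construction — picking $I = V(\body{\qr})$ — so I expect no serious obstacles there.
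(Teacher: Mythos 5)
Your proof is correct and follows essentially the same route the paper takes for the generalisation of this lemma to \UCQineqs (Lemma~\ref{lem:pc:ucq}): monotonicity handles soundness, a minimal valuation deriving the given fact handles completeness, and the canonical counter-example $I = V(\body{\qr})$ handles the converse by contraposition. Your explicit treatment of the case $V'(\body{\qr}) = V(\body{\qr})$ in the only-if direction is a small point the paper's version glosses over, but both arguments are the same in substance.
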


\begin{myremark}\label{remark:C1} 
Informally, condition (C1) states that there is a node in the network where all facts required for $V$ meet. 
\end{myremark}

\subsection{Algorithmic problems}
\label{sec:def:decproblems}

We consider the following decision
problems for various sub-classes  $\cC$ and $\cC'$ of  \classUCQnegineq and classes
\classp of distribution policies from $\{\classpfin, \classpenum\}\cup\Classpnondet$.

\begin{center}
\begin{minipage}{0.48\linewidth}
	\probdefbodynew{$\containment(\cC,\cC')$}
		{$\qr \in \cC$ and $\qr' \in \cC'$}
		{Is $\qr \subseteq \qr'$?}
\end{minipage}~\hspace*{0.5em}
\begin{minipage}{0.47\linewidth}
	\probdefbodynew{$\parasound(\cC,\classp)$}
		{$\qr \in \cC$, $\distp \in \classp$}
		{Is $\qr$ parallel-sound under $\distp$?}
\end{minipage}\hfill~
\end{center}
\vspace{-3mm}
\begin{center}
\begin{minipage}{0.48\linewidth}
	\probdefbodynew{$\paracom(\cC,\classp)$}
		{$\qr \in \cC$, $\distp \in \classp$}
		{Is $\qr$ parallel-complete under $\distp$?}
\end{minipage}~\hspace*{0.5em}
\begin{minipage}{0.47\linewidth}
	\probdefbodynew{$\paracor(\cC,\classp)$}
		{$\qr \in \cC$, $\distp \in \classp$}
		{Is $\qr$ parallel-correct under $\distp$?}
\end{minipage}\hfill~
\end{center}

\section{Parallel-correctness: unions of conjunctive queries}
\label{sec:ucq}

Parallel-correctness of unions of conjunctive queries (without
negation) reduces to parallel-completeness for the simple reason that
 these queries are monotone and therefore
parallel-sound for every distribution policy. We show below that
parallel-completeness remains in $\pitwo$. Hardness already 
follows from $\pitwo$-hardness of $\paracor(\classCQ,\classpfin)$~\cite{DBLP:conf/pods/AmelootGKNS15}.

As a $\UCQ$ is 
parallel-complete under a policy $\distp$ when all its disjuncts are,
it might be tempting to assume that this condition is also necessary. However, as the following example illustrates, this is not the case.

\begin{example}
	\label{ex:complete-ucq-incomplete-disjunct}
	Let $\qr=\qr_1\cup\qr_2$, where $\qr_1$ and $\qr_2$ are the
        following \CQs:
	\begin{displaymath}
		\begin{array}{llcl}
			\qr_1: & H(x,x) & \gets & R(x,x), \\
			\qr_2: & H(y,z) & \gets & R(y,z), S(y,z).
		\end{array}
	\end{displaymath}
	Further, let $\distp$ be the policy over network $\{\cnodeA,\cnodeB\}$ that
	maps facts $R(a,a)$ to node~$\cnodeA$, for all $a \in \dom$, and all other
	$R$-facts and all $S$-facts to node~$\cnodeB$.
	
	We argue that $\qr$ is parallel-complete under $\distp$ on all instances. Indeed, assume $H(a,b)\in \qr(I)$ for some instance $I$ and
	$a,b \in \dom$. If $a \neq b$, only the valuation $\{y \mapsto a, z
	\mapsto b\}$ can derive $H(a,b)$. This means that $\{R(a,b),
	S(a,b)\}\subseteq I$. Furthermore, $\{R(a,b),
	S(a,b)\}\subseteq \respwo_\distp(\cnodeB)$. Hence, 
	$H(a,b)\in \qr(\dist I(\cnodeB))$. 
	If $a = b$, then $R(a,a)\in I$. So, $R(a,a)\in \respwo_\distp(\cnodeA)$ and $H(a,a)\in \qr(\dist I(\cnodeA))$.
 On the other hand, $\qr_2$ is 
	not parallel-complete under $\distp$ on instance $I=\{R(0,0),S(0,0)\}$.
	Indeed, $H(0,0) \in \qr_2(I)$ but $\qr_2\big(\dist{I}(\cnodeA)\big) =
	\qr_2\big(\{R(0,0)\}\big) = \emptyset$ and $\qr_2\big(\dist{I}(\cnodeB)\big) =
	\qr_2\big(\{S(0,0)\}\big) = \emptyset$. 
	\hfill $\Box$
\end{example}

We recall from Section~\ref{sec:cq} that disjuncts in unions of
conjunctive queries use disjoint
variable sets and a valuation for $\qr$ is a valuation for exactly one disjunct.
As formalised next, the notion of minimality for valuations given in Definition~\ref{def:minimal:cq} naturally extends to \classUCQineq.

\begin{definition}
	Let $\qr=\bigcup^n_{i=1} \qr_i$ be a $\UCQineq$. A
       valuation~$V_i$ for $\qr_i$, with $i\in\{1,\dots,n\}$,
	is \emph{minimal} for $\qr$, if for no $j\in\{1,\dots,n\}$
        there is a valuation $V_j$ for $\qr_j$, such that
	$V_j(\head{\qr_j})=V_i(\head{\qr_i})$ and $V_j(\body{\qr_j}) \subsetneq
	V_i(\body{\qr_i})$.
\end{definition}

\begin{example}
	Consider a simple $\UCQineq$ $\qr=\qr_1\cup\qr_2$ where  $\qr_1,\qr_2\in
	\classCQineq$ are as follows:
	\begin{displaymath}
		\begin{array}{llcl}
			\qr_1: & H(u,v) & \gets &
			R(u,v), R(v,u), R(u,u), \\
			\qr_2: & H(x,y) & \gets &
			R(x,y), R(y,z), y \neq z. \\
		\end{array}
	\end{displaymath}
	Valuation $V_2 \mydef \{x \mapsto 0, y \mapsto 0, z \mapsto 1\}$ is not minimal
	for $\qr$ because valuation $V_1 \mydef \{u \mapsto 0, v \mapsto 0\}$ derives
	the same fact $H(0,0)$ requiring only $\{R(0,0)\}\subsetneq\{R(0,0),R(0,1)\}$. 	
	Similarly, valuation $W_1 \mydef \{u \mapsto 0, v \mapsto 1\}$, requiring 
	$\{R(0,1), R(1,0), R(0,0)\}$, is not minimal for $\qr$ because valuation $W_2
	\mydef \{x \mapsto 0, y \mapsto 1, z \mapsto 0\}$ only requires $\{R(0,1),
	R(1,0)\}$. \hfill $\Box$
\end{example}

The
notion of minimality leads to basically the same simple characterisation of
parallel-completeness:
\begin{lemma}\label{lem:pc:ucq}
    A \UCQineq~$\qr$ is parallel-correct under distribution
    policy $\distp=(U,\respp)$ if and only if the following holds:
	\begin{equation}
	        \tag{\ConditionPCUCQ}
	        \label{lem:pc:pods} 
	        \CharacterisationBox{
	        	For every minimal valuation $V$ for \qr over $U$,
			there is a node $\node \in \nw$ such that
			\begin{math}
				V(\body{\qr}) \subseteq \respp(\node).
			\end{math}
		}
	\end{equation}
\end{lemma}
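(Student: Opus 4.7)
The plan is to follow the template of the \CQ\ characterisation (\ConditionPC) and adapt it to account for (i) multiple disjuncts and (ii) inequalities. Since every disjunct of a \UCQineq is monotone---inequalities only restrict which valuations qualify and do not depend on the instance---$\qr$ is parallel-sound under every policy, so it suffices to characterise parallel-completeness. The two directions of (\ConditionPCUCQ) then boil down to containment statements between $\qr(I)$ and $\onePI$.

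For the (if) direction, I would fix an instance $I \subseteq \facts{\sch,U}$ and a fact $f \in \qr(I)$; some valuation $V$ for a disjunct $\qr_i$ then satisfies $\qr_i$ on $I$ with $V(\head{\qr_i}) = f$. If $V$ is not minimal for $\qr$, the definition supplies a valuation $V'$ for some $\qr_j$ with $V'(\head{\qr_j}) = f$ and $V'(\body{\qr_j}) \subsetneq V(\body{\qr_i}) \subseteq I$; this $V'$ again satisfies $\qr_j$ on $I$, because (in the absence of negated atoms) satisfaction of $\qr_j$ on $I$ depends on $I$ only through the positive body, while $V'$ respects its own inequalities by virtue of being a valuation. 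Because $|V'(\body{\qr_j})|$ strictly drops along such a chain, iteration terminates at a minimal valuation $V^*$ for some $\qr_{i^*}$ with $V^*(\body{\qr_{i^*}}) \subseteq I$ and $V^*(\head{\qr_{i^*}}) = f$. Safety (every variable in the head or in an inequality must occur in some positive atom) guarantees that $V^*$ ranges in $U$. Applying (\ConditionPCUCQ) to $V^*$ yields a node $\kappa$ with $V^*(\body{\qr_{i^*}}) \subseteq \respp(\kappa) \cap I = \dist{I}(\kappa)$, so $f \in \qr_{i^*}(\dist{I}(\kappa)) \subseteq \onePI$.

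For the (only if) direction, I would argue contrapositively. Suppose $V^*$ is a minimal valuation for $\qr$ over $U$ that satisfies some $\qr_{i^*}$ and derives fact $f$, and assume no node is responsible for all of $V^*(\body{\qr_{i^*}})$. Take $I \mydef V^*(\body{\qr_{i^*}}) \subseteq \facts{\sch,U}$ as a witness instance; since $V^*$ still satisfies $\qr_{i^*}$ on $I$, we have $f \in \qr(I)$. Parallel-correctness forces some node $\kappa$ to produce $f$ via a valuation $W$ for some disjunct $\qr_j$ with $W(\body{\qr_j}) \subseteq \dist{I}(\kappa) = I \cap \respp(\kappa) \subseteq V^*(\body{\qr_{i^*}})$. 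If $W(\body{\qr_j}) = V^*(\body{\qr_{i^*}})$ then $V^*(\body{\qr_{i^*}}) \subseteq \respp(\kappa)$, contradicting the hypothesis; otherwise $W(\body{\qr_j}) \subsetneq V^*(\body{\qr_{i^*}})$, contradicting the minimality of $V^*$.

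The only subtlety I foresee is the inequality bookkeeping: I must verify that every shrunk valuation along the chain (and the witness $W$ in the converse) indeed satisfies its disjunct on the relevant instance. This is immediate in the chosen formalism, because satisfaction of a \UCQineq disjunct on an instance depends on that instance solely through its positive body, while a valuation is by definition required to satisfy its own inequalities. Apart from this, the proof is a near-mechanical disjunct-by-disjunct lift of the \CQ\ case, with termination of the body-shrinking chain guaranteed by the strict $\subsetneq$ in the definition of minimality.
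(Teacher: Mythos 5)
Your proposal is correct and follows essentially the same route as the paper's proof: reduce to parallel-completeness via monotonicity, pass to a minimal valuation and apply \eqref{lem:pc:pods} for the (if) direction, and use the singleton-body instance $I=V^*(\body{\qr_{i^*}})$ together with minimality for the contrapositive of the (only if) direction. The only difference is that you spell out details the paper leaves implicit (the terminating body-shrinking chain and the two-case analysis of the local witness $W$), which is fine.
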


\begin{proof}
	\IfDirection
	Assume $(C1')$ holds. Because of monotonicity, we only need to show that
	$\qr(I) \subseteq \bigcup_{\node\in\nw}{\qr(\dist I(\node))}$ for
	every instance $I$. To this end, let $\fc$ be an arbitrary fact that is derived
	by some valuation $V$ for $\qr$ on $I$. Then, there is also a minimal
	valuation $V'$ that is satisfying on $I$ and which derives $\fc$.
	Because of $(C1')$, there is a node $\node \in \nw$ where all facts required
	by $V'$ meet (cf.\ Remark~\ref{remark:C1}). Hence, $\fc\in \bigcup_{\node\in\nw}{\qr(\dist I(\node))}$, i.e.
	query~$\qr$ is parallel-correct under policy~$\distp$.

    \OnlyIfDirection
    For a proof by contraposition, suppose that there is a minimal
    valuation $V'$ for $\qr$ for which the required facts do \emph{not} meet under
    \distp.
    Consider the input instance $I=V'(\body{\qr})$. By definition of
    minimality, there is no valuation that agrees on the head variables and is
    satisfying for $\qr$ on a strict subset of
    $V'(\body{\qr})$. Therefore, $V'(\head{\qr})$ is in $\qr(I)$ but
    it is not derived on any node and thus  query~$\qr$
    is not parallel-complete under policy~$\distp$.
\end{proof}

\noindent
The characterisation in Lemma~\ref{lem:pc:ucq}, in turn, can be used to prove a $\pitwo$ upper bound.
\begin{lemma}
	\label{lem:paracor-ucqineq-pi2}
	$\paracor(\classUCQineq,\classp)$ is in
        $\pitwo$, for every $\classp\in\Classp$.
\end{lemma}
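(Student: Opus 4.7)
The plan is to build a $\pitwo$ algorithm directly from the characterisation in Lemma~\ref{lem:pc:ucq}. Expanding $(C1')$ together with the definition of minimality, I would express parallel-correctness of $\qr$ under $\distp=(U,\respp)$ as follows: for every pre-valuation $V$ for some disjunct $\qr_i$ of $\qr$ with values in $U$, at least one of the following holds --- (i) $V$ is not a (consistent) valuation for $\qr_i$, (ii) there exists a valuation $V_j$ for some disjunct $\qr_j$ with $V_j(\head{\qr_j}) = V(\head{\qr_i})$ and $V_j(\body{\qr_j}) \subsetneq V(\body{\qr_i})$, witnessing that $V$ is not minimal, or (iii) there exists a node $\node$ with $V(\body{\qr_i}) \subseteq \respp(\node)$. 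Condition (i) is a polynomial-time check on $V$, and the existential witnesses for (ii) and (iii) are both of polynomial size.

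The $\pitwo$ algorithm then universally guesses $V$ --- a map from the at most $\varmax{\qr}$ variables of one disjunct into $U$, whose elements are either listed explicitly (for $\classpfin$ and $\classpenum$) or representable by strings of length at most $n$ (for $\classpnondet$, where $n$ is part of the input) --- and existentially guesses either a refuting valuation $V_j$ (a witness for (ii)) or a covering node $\node$ (a witness for (iii)). Verifying (ii) reduces to syntactic comparison of the ground atoms $V_j(\body{\qr_j})$ and $V(\body{\qr_i})$, while (iii) asks, for each of the polynomially many facts $\fc \in V(\body{\qr_i})$, whether $\fc \in \respp(\node)$.

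The main obstacle is making step (iii) polynomial-time verifiable uniformly over all three families of policy representations. For $\classpfin$ this is immediate since responsibilities are listed, and for $\classpenum$ one can match $\fc$ against each listed rule atom in polynomial time. For $\classpnondet$, however, responsibility is itself decided nondeterministically in time $(|\node|+|\fc|)^k$, so a deterministic verifier cannot in general check $\fc \in \respp(\node)$ directly. I would resolve this by absorbing the extra nondeterminism into the existential witness: attach to $\node$ one accepting run of $\algNPDist$ on input $(\node,\fc)$ for each $\fc \in V(\body{\qr_i})$. Each run has polynomial length and is deterministically verifiable, so the augmented witness remains polynomial and the whole procedure stays in $\pitwo$ for every $\classp \in \Classp$.
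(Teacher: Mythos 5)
Your proposal is correct and takes essentially the same route as the paper: both arguments rest on the characterisation of Lemma~\ref{lem:pc:ucq} and give a two-alternation algorithm whose outer quantifier ranges over polynomial-size valuations and whose inner quantifier supplies either a minimality-refuting valuation or a covering node. The only cosmetic difference is direction: the paper shows the \emph{complement} is in $\sigmatwo$ (guessing a bad minimal valuation and universally quantifying over nodes, competing valuations, and runs of the policy-test algorithm), whereas you state the $\pitwo$ procedure directly and absorb the accepting runs of $\algNPDist$ into the existential witness --- two equivalent ways of handling the nondeterministic responsibility test.
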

 \begin{proof}
	It suffices to show that the complement of
	$\paracom(\classUCQineq,\classpnondet^k)$ is in $\sigmatwo$ for arbitrary
	$k \ge 2$. Let $\distp=(n,T)$ be a policy from $\classpnondet^k$. We have to
	consider only instances whose data values can be represented by strings of
	length~$n$ over networks whose nodes can be represented by strings of
	length~$n$.
	
	By Lemma~\ref{lem:pc:ucq}, a query~$\qr$ is not parallel-correct under
	distribution policy~$\distp$ if and only if there exists a minimal
	valuation~$V$ that satisfies $\qr$ on some instance $I$ with $\adom{I}
	\subseteq \dom_n$ such that no node in $\dom_n$ is responsible for all facts
	from $V(\body{\qr})$.
	
	First, the algorithm non-deterministically guesses a 
	valuation~$V$, which can be represented by a string in length polynomial in $\qr$ and $n$.
	Subsequently, it checks for all valuations~$V'$, all nodes~$\node$,
	and all strings $x$ of polynomial length whether $V'$ contradicts minimality of
	$V$ (in which case the algorithm rejects the input) and, by use of
	algorithm~$T$, whether node~$\node$ is not responsible for at least one fact
	from $V(\body{\qr})$ (if so, the algorithm continues, otherwise it rejects).
	All tests can be done in polynomial time.
\end{proof}

\noindent
From \cite{DBLP:conf/pods/AmelootGKNS15}  we know the following result.
\begin{theorem}[\cite{DBLP:conf/pods/AmelootGKNS15}]
	\label{thm:paracor-pi2-complete}
$\paracor(\classCQ,\classpfin)$ is $\pitwo$-complete.
\end{theorem}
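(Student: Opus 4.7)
The plan is to obtain the theorem by combining a routine upper bound with a more involved lower bound reduction from $\pitwoQBF$. The upper bound is immediate from what has already been proved in the paper: since $\classCQ \subseteq \classUCQineq$ and $\classpfin \in \Classp$, Lemma~\ref{lem:paracor-ucqineq-pi2} yields membership in $\pitwo$ for free. Thus the whole work is in the lower bound.

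For $\pitwo$-hardness I would reduce from the standard $\pitwoQBF$ problem: deciding validity of formulas of the form $\Phi = \forall \bx \exists \by\ \phi(\bx,\by)$, where $\phi$ is a 3-CNF. The guiding idea is the characterisation (C1) from Lemma~\ref{lem:pc:pods}: parallel-correctness asks that \emph{for every} minimal valuation $V$ \emph{there exists} a node gathering all facts in $V(\body{\qr})$. This $\forall\exists$ pattern is precisely what the two quantifier alternations of $\pitwoQBF$ have to be mapped to. Concretely, I would design a conjunctive query $\qr$ whose variables correspond to the $\bx$-variables (so that a minimal valuation picks a truth assignment to $\bx$) and a finite distribution policy $\distp \in \classpfin$ whose nodes correspond to assignments to $\by$, such that a node is responsible for all facts $V(\body{\qr})$ iff the combined assignment satisfies $\phi$.

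In more detail, the schema would include a binary relation $\Dom$ representing Boolean values $\{0,1\}$ and a relation $\Clause$ (of arity three, one position per literal in a clause) recording the clauses of $\phi$. The query $\qr$ would contain, for each clause of $\phi$, a $\Clause$-atom whose arguments are variables from $\bx$, together with $\Dom$-atoms constraining every $\bx$-variable to a Boolean value. This way, minimal valuations are exactly in bijection with assignments $\alpha\!:\!\bx\to\{0,1\}$. Nodes are named by assignments $\beta\!:\!\by\to\{0,1\}$, and $\respp(\node_\beta)$ contains precisely those $\Dom$- and $\Clause$-facts that are consistent with some clause being satisfied by $\alpha\cup\beta$. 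A small gadget (e.g.\ padding every clause-atom with a second copy of the variables and an auxiliary atom forcing head-equality, or using a dedicated head projection) is needed so that minimality does not collapse different $\bx$-assignments.

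The main obstacle is getting this bijection between minimal valuations and $\bx$-assignments to survive in the presence of the distribution policy. A valuation that just picks a single clause-fact and assigns all other variables to a convenient constant may be \enquote{more minimal} than the intended one, trivialising the reduction. Handling this requires (i) wiring each $\bx$-variable into \emph{every} clause-atom (or into a separate witness relation) so that no valuation can omit any $\bx$-variable, and (ii) carefully designing the head atom so that valuations corresponding to different $\bx$-assignments derive different head facts and thus cannot dominate one another in the minimality order. Once this bookkeeping is in place, the correctness argument is a direct unrolling: $\qr$ is parallel-correct under $\distp$ iff for every $\alpha$ there is a $\beta$ with all clause-facts of $\alpha$ held by $\node_\beta$, iff for every $\alpha$ there is a $\beta$ with $\phi(\alpha,\beta)=1$, iff $\Phi$ is valid. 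Since the construction is clearly polynomial, this yields $\pitwo$-hardness and completes the theorem.
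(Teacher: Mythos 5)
First, note that the paper does not actually prove this theorem: it is imported from \cite{DBLP:conf/pods/AmelootGKNS15}, so there is no in-paper proof to compare against. Your upper bound is nevertheless exactly the route the paper takes for the generalisation in Theorem~\ref{theo:ucq:compl}: membership follows from Lemma~\ref{lem:paracor-ucqineq-pi2} because $\classCQ\subseteq\classUCQineq$ and $\classpfin\in\Classp$, and that lemma is proved independently of the present theorem, so there is no circularity.

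The lower bound, however, has a genuine gap. You realise the quantifier pattern of \pitwoQBF by letting minimal valuations range over assignments $\alpha$ to $\bx$ and letting \emph{nodes} range over assignments $\beta$ to $\by$. But the theorem concerns $\classpfin$, where a policy is specified by an explicit enumeration of all pairs $(\node,\fc)$; a network with one node per $\beta\colon\by\to\{0,1\}$ has $2^{|\by|}$ nodes, so the constructed policy has exponential size and the reduction is not polynomial (and shrinking $|\by|$ to logarithmic size to keep the network small destroys $\pitwo$-hardness of the source problem). More structurally: with polynomially many explicitly listed nodes, the clause \enquote{there is a node $\node$ with $V(\body{\qr})\subseteq\respp(\node)$} is a deterministic polynomial-time check, so the only place an existential quantifier over polynomial-length witnesses can live is in the \emph{minimality} of $V$, that is, in the existence of a dominating valuation $V'$ with the same head and strictly smaller body. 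The known proof in \cite{DBLP:conf/pods/AmelootGKNS15} exploits exactly this: the universal block is carried by the candidate valuations, the existential block by the witnesses of non-minimality, and the (polynomial-size) network only serves to make the intended valuations violate condition (\ConditionPC) unless they are dominated. Your node-based encoding could be made to work for succinctly represented policies such as those in \classpnondetk, where the network is all of $\dom_n$, but not for $\classpfin$, which is the class the statement (and the transfer of hardness to the rest of $\Classp$) actually requires.
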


Together with Lemma~\ref{lem:paracor-ucqineq-pi2} we get the following result.

\begin{theorem}\label{theo:ucq:compl}
	\label{thm:paracor-ucq-pi2-complete}
$\paracor(\classUCQineq,\classp)$ is $\pitwo$-complete, for every $\classp\in\Classp$.
\end{theorem}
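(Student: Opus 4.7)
The plan is to obtain the theorem by combining the upper bound just proved in Lemma~\ref{lem:paracor-ucqineq-pi2} with the lower bound from Theorem~\ref{thm:paracor-pi2-complete}. Since both bounds are already available, the task reduces to verifying that the known hardness result for $\paracor(\classCQ,\classpfin)$ transfers to $\paracor(\classUCQineq,\classp)$ for each policy class $\classp \in \Classp$.

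For the upper bound, I would simply appeal to Lemma~\ref{lem:paracor-ucqineq-pi2}, which already gives $\pitwo$ membership for every $\classp \in \Classp$, since the proof there was stated uniformly in $k$ for $\classpnondetk$ and the classes $\classpfin$ and $\classpenum$ are subsumed by $\classpnondetarg{2}$ (as noted in the definitions). No additional work is needed here.

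For the lower bound, I would observe that every \CQ~is trivially a \UCQineq~(a single disjunct with no inequalities), so every instance of $\paracor(\classCQ,\classpfin)$ is already syntactically an instance of $\paracor(\classUCQineq,\classpfin)$; this directly yields $\pitwo$-hardness for $\classp = \classpfin$. To transfer hardness to the other policy classes in $\Classp$, I would exhibit, for each $\classp \in \Classp$, a polynomial-time translation of an explicit finite policy $\distp \in \classpfin$ to an equivalent representation in $\classp$. For $\classpenum$ this is immediate since $\classpfin \subseteq \classpenum$ (an explicit list of responsibility pairs corresponds to a list of fact rules with no variables). For $\classpnondetk$ with any $k \ge 2$, the test algorithm $\algNPDist$ can simply scan the input list of pairs, and the universe bound $n$ is taken to be the maximum length of any data value or node name appearing in the explicit policy, padded suitably; this is polynomial-time computable and preserves the set of responsibilities.

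The only subtlety is to verify that the translation of a $\classpfin$-policy into any $\classpnondetk$ policy is faithful: one must make sure that the universe and network induced by the string-length bound $n$ do not introduce spurious data values or nodes that could change the answer to the parallel-correctness question. I expect this to be the main (mild) obstacle, and it can be dealt with by restricting the simulated algorithm $\algNPDist$ to reject on any input whose node or fact name is not literally present in the explicit enumeration of the original $\classpfin$-policy, so that the relevant universe and network coincide with those of the original instance. Combining the upper bound from Lemma~\ref{lem:paracor-ucqineq-pi2} with the transferred lower bound then yields $\pitwo$-completeness uniformly for every $\classp \in \Classp$.
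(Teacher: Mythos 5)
Your top-level argument is exactly the paper's: Theorem~\ref{thm:paracor-ucq-pi2-complete} is obtained by combining the upper bound of Lemma~\ref{lem:paracor-ucqineq-pi2} with the $\pitwo$-hardness of $\paracor(\classCQ,\classpfin)$ from Theorem~\ref{thm:paracor-pi2-complete}, observing that a \CQ is a one-disjunct \UCQineq. The paper says no more than this and relies on the containments $\classpfin\subseteq\classpenum$ and $\classpfin\subseteq\classpnondetarg{2}$ asserted in Section~\ref{sec:defs}, so for $\classpfin$ and $\classpenum$ your account is complete and matches the paper.

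The one place where you go beyond the paper --- the faithful translation of a $\classpfin$-policy into $\classpnondetk$ --- contains a gap. By definition, a $\classpnondetk$-policy $(n,\algNPDist)$ has universe $U=\dom_n$ and network $\nw=\dom_n$ \emph{regardless} of which inputs $\algNPDist$ accepts, and Definition~\ref{def:pc:general} quantifies over all instances $I\subseteq\facts{\sch,U}$. Making $\algNPDist$ reject on every pair not listed in the original enumeration therefore does not make ``the relevant universe coincide with the original instance'': it merely leaves every fact involving a data value outside the original universe with no responsible node. Take any instance built from such fresh values that satisfies the positive body of some disjunct (possible for any constant-free query with a nonempty body): the head fact is derived globally but at no node, so the translated policy fails parallel-completeness even when the original $\classpfin$-instance was parallel-correct. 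The extra \emph{nodes} are indeed harmless (their local instances are empty), but the extra \emph{data values} are not, so your translation is not answer-preserving. To repair it you must either arrange the hardness construction so that its universe is exactly $\dom_n$ for a suitable $n$ and every fact of $\facts{\sch,\dom_n}$ has a responsible node (this is what the remark in Section~\ref{sec:defs} about lower-bound proofs implicitly licenses), or extend the responsibilities to facts over fresh values in a way that is proved to preserve the answer; ``reject on unknown names'' does neither.
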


\newcommand{\pred}[1]{\EM{\texttt{#1}}}
\renewcommand{\vec}[1]{\EM{\boldsymbol{#1}}}
\newcommand{\wronglycoloredsubquery}{\EM{\qr_2^{2}}}
\newcommand{\missinglabelsubquery}{\EM{\qr_2^{1}}}
\newcommand{\labelfunction}{\EM{\textit{label}}}

\section{Containment of \CQneg and \UCQneg}
\label{sec:containment}

In this section, we establish the complexity of containment for
\classCQneg and \classUCQneg. We need these results to establish
lower bounds on parallel-correctness and its constituents in the next section. 
Whereas containment for \classCQ has been intensively studied in the literature,
the analogous problems for \classCQneg and \classUCQneg have hardly been addressed
and seem to belong to folklore. In fact, we only found a reference of a complexity result for containment of \classCQneg in \cite{DBLP:journals/tcs/Ullman00}, where a $\pitwo$-algorithm for the problem is given, based on
observations in \cite{DBLP:conf/vldb/LevyS93}, and the existence of a matching lower bound is mentioned. 
However, as we show below, although the problem is indeed in $\pitwo$
for queries defined over a fixed schema (or when the arity of
relations is bounded), it is \conexptime-complete in the general case.

We first show the lower bounds. They actually already hold for Boolean queries.
We show that $\containment(\classBCQneg,\classUBCQneg)$ is
\conexptime-hard by a reduction from the succinct 3-colorability
  problem and afterwards that $\containment(\classBCQneg,\classUBCQneg)$ can be
  reduced to $\containment(\classBCQneg,\classBCQneg)$. Here,
  $\classBCQneg$ and $\classUBCQneg$ denote the class of Boolean
    \CQneg{}s and unions of Boolean \CQneg{}s, respectively. Together this
  establishes that $\containment(\classBCQneg,\classBCQneg)$ and
  therefore also $\containment(\classCQneg,\classCQneg)$ are \conexptime-hard.

\begin{proposition}\label{prop:cqneg-hard}
    $\containment(\classBCQneg,\classUBCQneg)$ is \conexptime-hard.
\end{proposition}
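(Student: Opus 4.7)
The plan is to reduce from \emph{succinct} 3-colorability, which is \nexptime-complete; a polynomial-time reduction from an \nexptime-complete problem to the complement of $\containment(\classBCQneg,\classUBCQneg)$ yields \conexptime-hardness. An instance is a Boolean circuit $C$ on $2n$ input wires, interpreted as the edge-indicator of a graph $G$ on vertex set $\{0,1\}^n$: a pair $(\vec u,\vec v)$ is an edge of $G$ iff $C(\vec u,\vec v)=1$. From $C$ I would build in polynomial time a query $\qr\in\classBCQneg$ and a query $\qr'\in\classUBCQneg$ such that $\qr\not\subseteq\qr'$ iff $G$ is 3-colorable. A counterexample instance $I$ will be exponentially large and will encode: (i) a \emph{skeleton} of two distinguished bit values, three distinguished color values, and pairwise-distinctness witnesses; (ii) a purported coloring $\chi:\{0,1\}^n\to\{r,g,b\}$ stored in a relation $\texttt{Chrom}$ of arity $n+1$; and (iii) the full evaluation trace of $C$ on every input pair, stored via one relation $\texttt{Val}_i$ of arity $2n+1$ per gate $g_i$ of $C$.

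Concretely, $\qr$ is a single $\classBCQneg$ whose body existentially binds the skeleton variables---two bit values, three colors, and the associated distinctness witnesses---together with at least one $\texttt{Chrom}$-fact, so that $\qr(I)$ holds as soon as $I$ carries the minimum structure required to be interpretable. The union $\qr'$ is a disjunction of polynomially many \emph{defect-catching} subqueries, each firing on some inconsistency in the encoding: a \emph{skeleton} defect (two supposedly distinct skeleton elements actually coincide, or a distinctness witness is missing); for each gate $g_i$ of $C$, a \emph{gate} defect asserting that $\texttt{Val}_i(\vec u,\vec v,\mathit{out})$ disagrees with the operation of $g_i$ applied to the $\texttt{Val}_j$-facts of its input gates (one disjunct per input-pattern, using variables for $\vec u,\vec v$ and the input/output values); a \emph{missing-color} defect stating that for some $\vec v$, $\texttt{Chrom}(\vec v,c)$ fails for every skeleton color $c$; and the key \emph{bad-edge} disjunct $\gets \texttt{Val}_m(\vec u,\vec v,\mathtt{1}),\texttt{Chrom}(\vec u,c),\texttt{Chrom}(\vec v,c)$, where $g_m$ is the output gate of $C$.

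For correctness, if $\chi$ is a proper 3-coloring of $G$, the canonical instance $I_\chi$ spelling out the skeleton, the full gate-trace of $C$ on every input pair, and the facts $\{\texttt{Chrom}(\vec v,\chi(\vec v)):\vec v\in\{0,1\}^n\}$ satisfies $\qr$ while no disjunct of $\qr'$ fires, witnessing $\qr\not\subseteq\qr'$. Conversely, any counterexample $I$ must defeat every defect-disjunct of $\qr'$: hence the skeleton is coherent, the recorded $\texttt{Val}$-facts agree with $C$'s gate-operations and therefore, by a straightforward induction up the circuit, $\texttt{Val}_m(\vec u,\vec v,\mathtt{1})$ holds exactly when $(\vec u,\vec v)$ is an edge of $G$; moreover every vertex carries at least one color and no edge is monochromatic, so a proper 3-coloring of $G$ can be read off from $\texttt{Chrom}$.

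The main obstacle is that $\classCQneg$ in this paper admits neither constants nor $\neq$-atoms, so every ``constant role'' in the encoding (bit values, color values, the ``output $=\mathtt{1}$'' marker, etc.) must be smuggled in as a distinguished element of $I$ whose role is witnessed purely by positive atoms of $\qr$ and negated atoms of $\qr'$. In particular, the distinctness of skeleton elements has to be enforced indirectly via a pre-populated $\texttt{Diff}$-relation used positively in $\qr$ and negatively in the skeleton-defect disjuncts of $\qr'$, and the universally quantified property ``every vertex has a color'' has to be dualised into the existential defect ``there is a $\vec v$ with $\texttt{Chrom}(\vec v,r),\texttt{Chrom}(\vec v,g),\texttt{Chrom}(\vec v,b)$ all absent,'' with $r,g,b$ grounded to the skeleton by positive atoms as required by safe negation. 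Once this bookkeeping is in place, both $\qr$ and $\qr'$ have size polynomial in $|C|$ and $n$, so the reduction runs in polynomial time.
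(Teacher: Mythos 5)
Your overall strategy matches the paper's: a polynomial-time reduction from succinct 3-colorability (which is \nexptime-complete) producing $\qr\in\classBCQneg$ and $\qr'\in\classUBCQneg$ with $\qr\not\subseteq\qr'$ iff the described graph is 3-colorable, using a positively-enforced ``skeleton'' to simulate constants and distinctness, a missing-color disjunct, and a bad-edge disjunct. However, there is a genuine gap in your treatment of the circuit. You materialise the evaluation trace of $C$ on every input pair in relations $\texttt{Val}_i$ and police it only with \emph{gate-defect} disjuncts, which fire when a recorded output value \emph{disagrees} with the gate operation applied to recorded input values. Nothing in your list of disjuncts forces the trace to be \emph{present}: an adversarial counter-example can simply omit all $\texttt{Val}$-facts (or omit them for a single inconvenient edge $(\bar u,\bar v)$). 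Then no gate-defect disjunct fires (they all require positive $\texttt{Val}$-atoms), the bad-edge disjunct $\gets \texttt{Val}_m(\bar u,\bar v,1),\texttt{Chrom}(\bar u,c),\texttt{Chrom}(\bar v,c)$ never fires, and $\qr$ is still satisfied. Hence $\qr\not\subseteq\qr'$ would hold for every graph, colorable or not, and the ``only if'' direction of your correctness claim (``by a straightforward induction up the circuit, $\texttt{Val}_m(\bar u,\bar v,1)$ holds exactly when $(\bar u,\bar v)$ is an edge'') is false as stated: the induction gives consistency of whatever is recorded, not totality. The fix is a further family of presence disjuncts, one per gate $g_i$, of the same shape as your missing-color disjunct: bind $\bar u,\bar v$ by $\texttt{Bool}$-atoms and the two truth values by the skeleton, and assert $\lnot\texttt{Val}_i(\bar u,\bar v,w_0),\lnot\texttt{Val}_i(\bar u,\bar v,w_1)$. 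You explicitly dualise ``every vertex has a color'' this way but never dualise ``every gate value is recorded for every input pair,'' and the latter is indispensable in your encoding.

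It is worth noting that the paper's construction sidesteps this issue entirely by evaluating the circuit \emph{inside} the query rather than in the instance: the wrongly-colored disjunct carries one $\pred{And}$/$\pred{Or}$/$\pred{Neg}$ atom per gate over existentially quantified gate-value variables, and $\qr_1$ positively forces only the constant-size truth tables of these three relations into any satisfying instance. There is then no exponential trace whose totality must be enforced by negated atoms, which is why the paper needs only two disjuncts in $\qr_2$ where you need $\Theta(|C|)$. Your route can be repaired as described above and remains polynomial, but as written the reduction fails.
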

\begin{proof}
  The proof is by a reduction from the succinct 3-colorability
  problem, which asks, whether a graph $G$, which is implicitly given
  by a circuit with binary AND- and OR- and unary NEG-gates, is $3$-colorable. The latter problem is known to be
\NEXPTIME-complete \cite{DBLP:journals/iandc/PapadimitriouY86}.
We say that a circuit $C$, with $2\ell$ Boolean inputs, describes a graph
$G=(N,E)$, when $N = \{0,1\}^\ell$, and there is an edge $(n_1, n_2) \in N^2$ if and only if $C$ outputs true on input $n_1n_2$. 

Let $C$ be an input for the succinct 3-colorability problem with $2\ell$ Boolean
inputs. We construct queries $\qr_1$ and $\qr_2$ such that $\qr_1 \not\subseteq \qr_2$ if and only if the graph described by $C$ is 3-colorable. 

Both queries are over schema $\sch$, which consists of relation names
$\pred{DomainValues}^{(3)}$, $\pred{Bool}^{(1)}$, $\pred{And}^{(3)}$,
$\pred{Or}^{(3)}$, $\pred{Neg}^{(2)}$, and $\pred{Label}^{(\ell+1)}$.
Intuitively, satisfaction of $\qr_1$ will guarantee that there is a
tuple $(a_0,a_1,a_2)$ with three different values in relation
$\pred{DomainValues}$. We will use, for some such tuple,  $a_0,a_1,a_2$ as colors and $a_0,a_1$ as
truth values. We will often assume without loss of generality that
$(a_0,a_1,a_2)=(0,1,2)$. In particular, for such a tuple, $a_0$
is interpreted as false while $a_1$ is interpreted as true.
The unary relation $\pred{Bool}$ will be forced by $\qr_1$ to contain
at least $a_0$ and $a_1$.

Relations $\pred{And}$, $\pred{Or}$, and $\pred{Neg}$ are intended to
represent the respective logical functions. The first two attributes represent input
values, and the last attribute represents the output. Again, $\qr_1$
will guarantee that at least all triples of Boolean values that are
consistent with the semantics of AND, OR, and NEG are present in these relations.
Tuples in relation $\pred{Label}$
represent nodes together with their respective color (one can think of the
representation of a node by $\ell$-ary addresses over a ternary alphabet).

\noindent
We define query $\qr_1$ as follows:
\begin{align*}
    T() \leftarrow & \pred{DomainValues}(w_0, w_1, w_2), \lnot \pred{DomainValues}(w_1,
w_0,
w_2),\\
& \lnot \pred{DomainValues}(w_2, w_1, w_0), \lnot \pred{DomainValues}(w_0,
w_2, w_1), \\
& \pred{Bool}(w_0), \pred{Bool}(w_1),  \pred{Neg}(w_1, w_0), \pred{Neg}(w_0, w_1),\\
& \pred{And}(w_0,w_0, w_0), \pred{And}(w_0,w_1, w_0), \pred{And}(w_1, w_0,
    w_0), \pred{And}(w_1, w_1, w_1), \\
& \pred{Or}(w_0, w_0, w_0), \pred{Or}(w_0, w_1, w_1), \pred{Or}(w_1,w_0,
w_1), \pred{Or}(w_1, w_1, w_1). 
\end{align*}
It is easy to see that $\qr_1$ enforces the conditions mentioned above.

In the following, we denote sequences $x_1,\ldots,x_\ell$ of $\ell$
variables by $\vec x$.

\noindent
We define $\qr_2$ as the union of the queries
$\missinglabelsubquery$ and $\wronglycoloredsubquery$,
where subquery $\missinglabelsubquery$ is defined as:
\begin{align*}
    T() \leftarrow &\pred{Bool}(x_1), \pred{Bool}(x_2), \ldots, \pred{Bool}(x_\ell),
    \pred{DomainValues}(y_r, y_g, y_b), \\
    & \lnot \pred{Label}(\vec x, y_r),
    \lnot \pred{Label}(\vec x, y_g),
    \lnot \pred{Label}(\vec x, y_b).
\end{align*}
Intuitively, $\missinglabelsubquery$ can be satisfied in a database if
for some node, represented by $\vec x$, there is no color.  

Subquery $\wronglycoloredsubquery$ deals with the correctness of a
coloring and uses a set $\textsc{circuit}$ of atoms that is intended
to check whether for two nodes $u$ and $v$, represented by $\vec y$
and $\vec z$, respectively, there is an edge between $u$ and $v$.

To this end, $\textsc{circuit}$ uses the variables
$y_1,\ldots,y_\ell$, $z_1,\ldots,z_\ell$, representing the input and, at
the same time, the $2\ell$ input gates of $C$, and an additional variable~$u_i$,
for each gate of $C$, with the exception of the output
gate. The output gate is represented by variable $w_1$.
For each AND-gate represented by variable
$v_1$ with incoming edges from gates
represented by variables $u_1$ and $u_2$, $\textsc{circuit}$ contains
an atom $\pred{And}(u_1,u_2,v_1)$. Likewise for OR- and NEG-gates. 

Subquery $\wronglycoloredsubquery$ is defined as:
\begin{align*}
T() \leftarrow &  \pred{DomainValues}(w_0, w_1, w_2), 
\textsc{circuit},
\pred{Label}(\vec{y}, u), \pred{Label}(\vec{z}, u).
\end{align*}

Intuitively, $\wronglycoloredsubquery$ returns true when two nodes, witnessed to be adjacent by the
circuit, have the same color. 

We show in the appendix that $C$ is 3-colorable if 
and only if  $\qr_1 \not\subseteq \qr_2$.
\end{proof}

Next, we provide the above mentioned reduction.

\begin{proposition} \label{prop:red:simplify:containment}
$\containment(\classBCQneg,\classUBCQneg) \polred
\containment(\classBCQneg,\classBCQneg)$ 
\end{proposition}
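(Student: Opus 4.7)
The plan is to reduce an instance $(\qr,\ \qr_1\cup\cdots\cup\qr_n)$ of $\containment(\classBCQneg,\classUBCQneg)$ to an instance $(\hat\qr,\hat\qr')$ of $\containment(\classBCQneg,\classBCQneg)$ in polynomial time. Although \classBCQneg is not closed under disjunction, a single \classBCQneg with a fresh existentially quantified selector variable can simulate $\qr_1\vee\cdots\vee\qr_n$, provided the counter-examples are arranged so that each value of the selector activates exactly one disjunct.

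Concretely, I would extend the schema with a fresh unary relation $\Rel$ serving as a selector domain and, for each original relation $R$ and each disjunct index $i$, a fresh relation $R^{(i)}$ of the same arity as $R$. The single CQneg $\hat\qr'$ introduces a fresh existential variable $d$ with atom $\Rel(d)$ and, for each $i$, a copy of the atoms of $\qr_i$ with every $R(\vec y)$ (resp.\ $\neg R(\vec y)$) replaced by an atom over $R^{(i)}$ parametrised by $d$; each copy is further padded so that, on the slice $d=c_j$ with $j\neq i$, the $i$-th copy's atoms are trivially satisfiable (positive atoms via dummy facts provided by the LHS, negative atoms via the absence of corresponding facts). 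The LHS $\hat\qr$ then consists of $\qr$ over the original schema, atoms forcing counter-examples to contain $n$ selector constants $c_1,\dots,c_n$ and the prescribed dummy structure, and, for each positive/negative atom of $\qr$ and each index $j$, a matching positive/negative atom over $R^{(j)}$ at $c_j$; this way the $c_j$-slice of any counter-example faithfully mirrors the $\qr_j$-relevant part of the original-schema restriction.

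For correctness, I would proceed by translating counter-examples in both directions. In $(\Rightarrow)$: given $I$ with $\qr(I)$ and all $\qr_i(I)$ false, build $\hat I$ by adding the selector facts, the indexed copies, and the dummies; $\hat\qr(\hat I)$ holds by construction and $\hat\qr'(\hat I)$ fails because every selector value $d=c_j$ ultimately demands that $\qr_j$ hold on the $c_j$-slice, which mirrors $I$. In $(\Leftarrow)$: given a counter-example $\hat I$, extract $I$ as the restriction of $\hat I$ to the original schema and argue via the enforced structure in $\hat\qr$ that each $\qr_j(I)$ must fail, since otherwise the witness lifts to a satisfying valuation of $\hat\qr'$ on $\hat I$ through $d=c_j$. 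The main obstacle I anticipate is the careful design of the padding so that on $d=c_j$ the non-$j$-th disjunct copies all trivialise while the $j$-th copy genuinely tests $\qr_j$; a secondary subtlety, since \classBCQneg has no inequalities, is to enforce that the $n$ selector values are distinct, which I would handle by using $n$ fresh $0$-ary relations $\pred{Sel}_1,\dots,\pred{Sel}_n$ in place of the single unary $\Rel$, or by distinguishing the $c_j$'s through unique dummy signatures provided by $\hat\qr$.
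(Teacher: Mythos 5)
Your high-level idea---turning the union on the right-hand side into a single conjunction of per-disjunct copies, with a selector that activates exactly one copy while the others trivialise---is exactly the mechanism the paper uses. However, your concrete architecture has a genuine gap in the direction showing that $\qr\subseteq\qr_1\cup\dots\cup\qr_n$ implies $\hat{\qr}\subseteq\hat{\qr}'$ (equivalently, in extracting an original counter-example from a primed one). You evaluate $\qr$ over the original relations but each copy of $\qr_i$ over a separate family $R^{(i)}$, and you rely on the body of $\hat{\qr}$ to make the $c_j$-slice of any counter-example ``faithfully mirror'' the original-schema part of $\hat I$. A \CQneg body can only force the presence of the finitely many facts matched by its positive atoms and the absence of those matched by its negated atoms under the particular left-hand valuation; it cannot force an adversarial instance $\hat I$ to make the facts $R^{(j)}(c_j,\tup t)$ a faithful copy of the facts $R(\tup t)$. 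Concretely, if $\qr_j$ holds on $I=\hat I|_{\sch}$ via a valuation $W$, the facts $W(\pos{\qr_j})$ need not appear in the $c_j$-slice of $R^{(j)}$, and $\hat I$ may contain stray facts $R^{(j)}(c_j,\cdot)$ violating the negated atoms of the $j$-th copy; so the witness does not lift through $d=c_j$, and you cannot conclude that every $\qr_j(I)$ fails. Repairing this by applying the containment hypothesis to the $c_j$-slice itself is circular: containment then yields \emph{some} disjunct that holds on that slice, not necessarily the $j$-th one.

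The paper sidesteps this by keeping a single copy of each relation, extended by one extra label position, and by giving each disjunct-copy its \emph{own} label variable $z_i$, constrained by an atom $\mathtt{Active}(w_0,w_1;z_1,\dots,z_m)$ whose admissible tuples (forced by the left-hand query to include exactly the unit vectors in the constructed counter-example) guarantee that at least one $z_i$ equals the ``real'' label $w_1$. The left-hand query tests $\qr_1$ directly on the label-$w_1$ slice, and the extracted instance is defined as the \emph{entire} set of facts of the counter-example carrying label $1$; since the activated copy is evaluated on that very slice, witnesses transfer verbatim in both directions, and the choice of which copy to activate is made only after containment has produced a satisfied disjunct. (Your secondary worry about enforcing distinctness of selector values without inequalities is also resolved there: only two values $w_0\neq w_1$ are ever needed, and their distinctness is forced by a single negated $\mathtt{Active}$ atom with permuted arguments.)
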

\begin{proof}
    Let $\qr_1$ be in \classBCQneg and $\qr_2=\bigcup_{i=1}^m \qr_2^i$ be in \classUBCQneg over some database schema $\sch$. Recall our assumption, that each disjunct is defined over a disjoint set of variables.
    Next, we construct \CQs $\qr_1'$ and $\qr_2'$ such that
    $\qr_1' \subseteq \qr_2'$ if and only if, $\qr_1 \subseteq \qr_2$. 
    
    We explain the intuition behind the reduction by means of an example. To
    this end, let $\qr_1$ be $H() \gets A(x,y)$ and let $\qr_2$ be
    the $\qr^1_2 \cup \qr^2_2$, where $\qr_2^1$ is $H()
    \gets A(u_1,v_1),B(u_1,v_1)$ and $\qr_2^2$ is $H() \gets
    A(u_2,v_2),\neg B(u_2,v_2)$, both formulated over the schema $\sch =
    \{A^{(2)},B^{(2)}\}$. The query $\qr'_2$ takes the following form: 
     \begin{align*}    
     H() \gets
     \pred{Active}(x_0,x_1;\ell_1,\ell_2),
     \underbrace{\alpha(\ell_1, \qr^1_2)}_{{\qr'_{2,1}}}, \underbrace{\alpha(\ell_2, \qr^2_2)}_{{\qr'_{2,2}}},
    \end{align*}
    where $\alpha(w,\qr)$ denotes the modification of the body of $\qr$ by
replacing every atom $\pred{R}(\vec{x})$ by $\pred{R}'(w,
\vec{x})$.
     Both queries are defined over the schema $\sch'
     =\{A'^{(3)},B'^{(3)},{\pred{Active}}^{(4)}\}.$ Notice that $\qr'_2$ contains
     a concatenation of the disjuncts of $\qr_2$. In addition, relations
     $A$ and $B$ are extended with a new first column with the purpose of
     labelling tuples. This labelling allows to encode two (or even more)
     instances over $\sch$ by one instance over $\sch'$. Specifically, $\body{\qr'_1}$ (not shown)
     is constructed in such a way that when there is a satisfying valuation for
    $\qr'_1$ there are two different data values, say 0 and 1. So, an instance $I$ over
    $\sch$ can be encoded as $I^0 = \{A'(0,a,b)\mid A(a,b)\in I\} \cup \{B'(0,a,b)\mid
    B(a,b)\in I\}$ or as $I^1 = \{A'(1,a,b)\mid A(a,b)\in I\} \cup \{B'(1,a,b)\mid B(a,b)\in
    I\}$. In addition, when there is a satisfying valuation for $\qr'_1$, there is an
    instance $I_{2}$ on which 
    every disjunct of $\qr_2$ is true, and there
    is an instance $I_1$ on which $\qr_1$ is true. So, both $\qr'_{2,1}$ and $\qr'_{2,2}$
    evaluate to true on $I_{2}^0$ when $\ell_1$ and $\ell_2$ are interpreted by label
    $0$. However, for $\qr_1$ to be contained in $\qr_2$, we need that at least one of the
    disjuncts $Q'_{2,1}$ or $Q'_{2,2}$ evaluates to true over $I_1^1$,
    that is, when its labelling variable is interpreted as 1. Atom
    $\pred{Active}(x_0,x_1;\ell_1,\ell_2)$
    will ensure that $x_0$
    and $x_1$ correspond with the values 0 and 1, and that at least one of the labelling variables $\ell_1$ or $\ell_2$ is equal to $1$. In other words, $\pred{Active}$ chooses which disjunct to activate over $I_1$. So, at least one disjunct of $\qr_2$ evaluates to true on the instance $I_1$ on which $\qr_1$ is satisfied.

        We explain the reduction in more detail in the appendix.
    \end{proof}

Combining Propositions~\ref{prop:cqneg-hard} and 
\ref{prop:red:simplify:containment} we get the following corollary:
\begin{corollary}\label{coro:cqneg-hard}
    $\containment(\classCQneg,\classCQneg)$ is \conexptime-hard.
\end{corollary}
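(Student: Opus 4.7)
The plan is to simply chain the two preceding results and observe that Boolean conjunctive queries with negation form a syntactic subclass of \CQneg. First I would invoke Proposition~\ref{prop:cqneg-hard} to obtain that $\containment(\classBCQneg,\classUBCQneg)$ is \conexptime-hard. Then I would compose this with the polynomial-time reduction from Proposition~\ref{prop:red:simplify:containment}, which transfers hardness to $\containment(\classBCQneg,\classBCQneg)$. Since \conexptime is closed under polynomial-time reductions, this step is immediate.

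The final step is to lift hardness from the Boolean setting to the general one. Here I would note that every \BCQneg{} is, by definition, a \CQneg{} (namely one whose head atom has no variables), and the semantic notion of containment coincides: $\qr \subseteq \qr'$ means $\qr(I) \subseteq \qr'(I)$ for every instance $I$, regardless of whether the heads are nullary. Consequently, the identity mapping witnesses a trivial polynomial-time reduction $\containment(\classBCQneg,\classBCQneg) \polred \containment(\classCQneg,\classCQneg)$, and the claimed \conexptime-hardness follows.

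I do not anticipate any genuine obstacle here; the entire argument is a routine composition of reductions, with the main conceptual work already carried out in Propositions~\ref{prop:cqneg-hard} and~\ref{prop:red:simplify:containment}. The only thing worth stating explicitly in the write-up is the (trivial) observation that Boolean queries embed into general queries without changing the containment question, so that the reduction chain indeed terminates at $\containment(\classCQneg,\classCQneg)$.
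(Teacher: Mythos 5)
Your proposal is correct and matches the paper's own argument exactly: the corollary is obtained by composing Proposition~\ref{prop:cqneg-hard} with the reduction of Proposition~\ref{prop:red:simplify:containment}, and then using the trivial inclusion of Boolean queries into general \CQnegs. Nothing further is needed.
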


\noindent
The corresponding upper bounds hold also in the presence of
inequalities and are shown by small model (i.e.,
counter-example) properties. To this end, we make use of a restricted
monotonicity property of \UCQnegineqs which was already observed in Proposition 2.4 of \cite{AfratiCY95}. For an instance $I$ and a
set $D$ of data values we denote by $\restrict{I}{D}$ the restriction
of $I$ to facts that only use values from $D$.

\begin{lemma}[\cite{AfratiCY95}]\label{lem:restrict}
  For $\qr\in\classUCQnegineq$, $I$ an instance
  with a compatible
  schema, and $D$ a set of data values, it holds that $\qr(\restrict{I}{D})\subseteq \qr(I)$.
\end{lemma}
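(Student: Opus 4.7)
The plan is to show directly that any tuple derived on $\restrict{I}{D}$ is also derived on $I$, by lifting a witnessing valuation from the restriction to the full instance. The only potentially nontrivial aspect is handling the negated atoms, since restricting $I$ removes facts and hence, in principle, could cause a valuation whose prohibited facts sit outside $\restrict{I}{D}$ to be satisfying on $\restrict{I}{D}$ but not on $I$. I would argue that this cannot happen because of the safety condition on negation.

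Concretely, I would let $\fact\in \qr(\restrict{I}{D})$ be arbitrary, pick a disjunct $\qr_i$ of $\qr$ and a valuation $V$ for $\qr_i$ witnessing $\fact = V(\head{\qr_i})$ with $V(\pos{\qr_i})\subseteq \restrict{I}{D}$ and $V(\negbody{\qr_i})\cap \restrict{I}{D}=\emptyset$. Since the positive atoms are mapped into $\restrict{I}{D}$, every variable appearing in $\pos{\qr_i}$ is sent by $V$ into $D$. By the safety requirement from Section~\ref{sec:cq}, every variable occurring in a negated atom or in an inequality also appears in some positive atom. Hence \emph{all} variables of $\qr_i$ are mapped into $D$ by $V$, so every fact in $V(\negbody{\qr_i})$ uses only values from $D$.

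From this it follows that $V(\negbody{\qr_i})\cap I = V(\negbody{\qr_i})\cap \restrict{I}{D} = \emptyset$, because any fact of $I$ outside $\restrict{I}{D}$ involves some value outside $D$ and therefore cannot match a fact of $V(\negbody{\qr_i})$. Combining this with the monotone inclusion $V(\pos{\qr_i})\subseteq \restrict{I}{D}\subseteq I$, and the fact that inequalities of $\qr_i$ are satisfied by $V$ independently of the instance, we conclude that $V$ still satisfies $\qr_i$ on $I$, so $\fact\in \qr_i(I)\subseteq \qr(I)$.

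The main (and essentially only) obstacle is recognising that safety is the right hypothesis to use: without it, one could have a variable that appears exclusively in $\negbody{\qr_i}$ and map it to a value outside $D$, whence prohibited facts involving values outside $D$ could lie in $I\setminus\restrict{I}{D}$ and spoil the argument. Once safety is invoked, the proof reduces to a short one-line bookkeeping observation, so I would present it as a brief paragraph rather than a formal induction.
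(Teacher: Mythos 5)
Your proof is correct and follows exactly the paper's argument: lift the witnessing valuation, note that safety forces every variable (in particular those in negated atoms) into $D$ via the positive atoms, and conclude $V(\negbody{\qr_i})\cap I = V(\negbody{\qr_i})\cap \restrict{I}{D}=\emptyset$. Your identification of safety as the essential hypothesis is precisely the step the paper relies on.
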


\begin{proof}
  Let $\fc\in \qr(\restrict{I}{D})$ via a valuation $V$ for a disjunct
  $\qr_i$ of \qr. Thus, $V(\pos{\qr_i}) \subseteq\restrict{I}{D}\subseteq I$. 
By definition, every variable $x$ of $\qr_i$ occurs in a
  positive atom and therefore $V(x)\in D$. Thus, $V(\negbody{\qr_i})
  \cap I = V(\negbody{\qr_i}) \cap \restrict{I}{D} =
\emptyset$ and $\fc\in \qr(I)$ as claimed.
\end{proof}

Now we can establish the following small model property
for testing containment.

  \begin{lemma}
Let $\qr_1,\qr_2 \in \classUCQnegineq$. 
If there is an instance $I$, where $\qr_1(I) \not\subseteq \qr_2(I)$, 
then there is also an instance $J\subseteq I$, where $\qr_1(J) \not\subseteq \qr_2(J)$, and $\ssize{\adom{J}} \le \varmax{\qr_1}$.
\label{lem:reduction_helper}%
\end{lemma}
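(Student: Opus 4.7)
The plan is to extract a small counter-example directly from the image of a satisfying valuation, and to use Lemma~\ref{lem:restrict} to transport the non-containment from $I$ down to this smaller instance.

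First I would pick a witness fact $\fc\in \qr_1(I)\setminus \qr_2(I)$. Since $\fc\in \qr_1(I)$, there is some disjunct $\qr_{1,i}$ of $\qr_1$ and a satisfying valuation $V$ for $\qr_{1,i}$ on $I$ that derives $\fc$. Let $D \mydef V(\vars{\qr_{1,i}})$. By definition of $\varmax{\cdot}$ we have $|D|\le|\vars{\qr_{1,i}}|\le\varmax{\qr_1}$. Now set $J\mydef \restrict{I}{D}$. By construction $J\subseteq I$ and $\adom{J}\subseteq D$, so $|\adom{J}|\le\varmax{\qr_1}$.

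Next I would check that $J$ still separates the two queries. For the lower side: since every variable of $\qr_{1,i}$ is mapped by $V$ into $D$, all required facts $V(\pos{\qr_{1,i}})$ lie in $\restrict{I}{D}=J$; moreover $V(\negbody{\qr_{1,i}})\cap J\subseteq V(\negbody{\qr_{1,i}})\cap I=\emptyset$ because $V$ was satisfying on $I$; inequalities of $\qr_{1,i}$ depend only on $V$ and are unaffected. Hence $V$ still satisfies $\qr_{1,i}$ on $J$, so $\fc\in\qr_1(J)$. For the upper side, Lemma~\ref{lem:restrict} applied to $\qr_2$ gives $\qr_2(J)=\qr_2(\restrict{I}{D})\subseteq \qr_2(I)$, and since $\fc\notin \qr_2(I)$ we conclude $\fc\notin\qr_2(J)$.

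Combining the two observations yields $\qr_1(J)\not\subseteq\qr_2(J)$ with $|\adom{J}|\le\varmax{\qr_1}$, as required. There is no real obstacle here: the only subtlety is making sure the negated atoms and inequalities of the witnessing disjunct of $\qr_1$ remain satisfied after the restriction, which follows from safety (every variable appears positively, so $V$ stays within $D$) and from $J\subseteq I$ (restricting can only remove prohibited facts, never add them).
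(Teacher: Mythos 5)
Your proof is correct and follows essentially the same route as the paper: restrict $I$ to the values in the image of the witnessing valuation and invoke Lemma~\ref{lem:restrict} for the $\qr_2$ side (your $D=V(\vars{\qr_{1,i}})$ equals the paper's $\adom{V(\pos{\qr_1^i})}$ by safety). The extra care you take with negated atoms and inequalities is exactly the detail the paper compresses into ``clearly, $V$ is still a satisfying valuation over $J$.''
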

\begin{proof}
    Let $I$ be as in the lemma and let $\fc$ be a fact with $\fc \in \qr_1(I)$ and
    $\fc \not \in \qr_2(I)$.  Let $V$ be a valuation that derives \fc via
    some disjunct
$\qr^i_1$ of $\qr_1$.

Let $D\mydef \adom{V(\pos{\qr_1^i})}$ and  $J\mydef \restrict{I}{D}$ the set of all facts in $I$ using only values from
$\adom{V(\pos{\qr_i})}$.
By definition, $\ssize{\adom{J}} \le \varmax{\qr_1}$. Clearly, $V$ is still a
satisfying valuation for $\qr^i_1$ over $J$. However, by
Lemma~\ref{lem:restrict}, $\fc\not\in\qr_2(J)=\qr_2(\restrict{I}{D})$.  
\end{proof}

The upper bounds follow easily from Lemma~\ref{lem:reduction_helper}.
\begin{proposition}\label{prop:UCQneg:upper} 
The following upper bounds hold:
\begin{enumerate}
\item $\containment(\classUCQnegineq,\classUCQnegineq)$ is in \conexptime.
\item For every $k$, containment of $\classUCQnegineq$-queries over
  schemas with arity bound $k$ is in $\pitwo$.
\end{enumerate}
\end{proposition}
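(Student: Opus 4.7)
The plan for both parts is to invoke Lemma~\ref{lem:reduction_helper}, which guarantees that whenever $\qr_1 \not\subseteq \qr_2$, a counter-example instance $J$ already exists with $\ssize{\adom{J}} \le \varmax{\qr_1}$, a quantity polynomial in the input size. The remaining question is how large $J$ itself can become, and this depends only on the maximum arity $a$ of the relations in the schema, as $J$ contains at most $\ssize{\sch}\cdot\varmax{\qr_1}^{a}$ facts.

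For part~1, where $a$ can be linear in the input size, $J$ may be exponentially large. A nondeterministic procedure witnessing non-containment will first guess such an instance $J$ of exponential size, together with a disjunct $\qr_1^i$ of $\qr_1$ and a valuation $V$ for $\qr_1^i$ that is satisfying on $J$. It sets $\fc \mydef V(\head{\qr_1^i})$ and then deterministically verifies that $\fc \notin \qr_2(J)$ by enumerating, for each disjunct $\qr_2^j$ of $\qr_2$, every valuation for its variables into $\adom{J}$; there are at most $\ssize{\adom{J}}^{\ssize{\vars{\qr_2^j}}} = 2^{O(n\log n)}$ such valuations, and checking each one for satisfaction on $J$ and for derivation of $\fc$ takes time polynomial in $\ssize{J}$. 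The whole procedure runs in exponential time, placing non-containment in \nexptime and hence containment in \conexptime.

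For part~2, once the arity is bounded by the constant $k$, the same lemma yields a counter-example $J$ with at most $\ssize{\sch}\cdot\varmax{\qr_1}^{k}$ facts, i.e., $J$ has polynomial size. Non-containment then admits a $\sigmatwo$ decision procedure: existentially guess a polynomial-size $J$, a disjunct $\qr_1^i$, and a satisfying valuation $V$ for $\qr_1^i$ on $J$, producing the fact $\fc \mydef V(\head{\qr_1^i})$; then universally quantify over all disjuncts $\qr_2^j$ and all valuations $V'$ for them over $\adom{J}$, verifying that none of these is simultaneously satisfying on $J$ and derives $\fc$. This puts containment in $\pitwo$.

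No genuine obstacle is expected beyond this counting argument; the proof is a direct consequence of Lemma~\ref{lem:reduction_helper} combined with a case distinction on the arity. The only subtle point is that negation and inequalities are already fully handled by the lemma, so they require no extra treatment beyond routine verification of positive atoms, negated atoms, and inequality constraints inside the deterministic (respectively, universal) step.
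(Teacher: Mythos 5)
Your proposal is correct and follows essentially the same route as the paper: both rely on the small-counter-example property of Lemma~\ref{lem:reduction_helper} to bound $\ssize{\adom{J}}$ by $\varmax{\qr_1}$, then guess $J$ (exponential-size in general, polynomial-size under an arity bound) and verify $\fc\in\qr_1(J)$ existentially and $\fc\notin\qr_2(J)$ by exhaustive (respectively, universal) enumeration of valuations. The only cosmetic difference is that you guess the witnessing valuation for $\qr_1$ rather than cycling through all of them, which changes nothing.
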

\begin{proof}
	In both cases, we consider the complement of
	$\containment(\classUCQnegineq,\classUCQnegineq)$.
    Let $m\mydef \varmax{\qr_1}$. 
  \begin{enumerate}
  \item 
    A \nexptime algorithm, on input $\qr_1,\qr_2$, can simply guess an
    instance $J$ with a domain of at most $m$ elements and a fact \fc,
    and verifies that $\fc\in\qr_1(J)$ but   $\fc\not\in\qr_2(J)$. For
    the latter tests, it can simply cycle, in exponential time, to all
    valuations over $J$ for $\qr_1$ and $\qr_2$. 
  \item For a fixed arity bound, the minimal counter-example $J$ is of
    size at most $m^k$. It can thus be guessed in polynomial
    time. That  $\fc\in\qr_1(J)$ can be verified
    non-deterministically. That $\fc\not\in\qr_2(J)$ can
    be verified by a universal computation in polynomial time.
  \end{enumerate}
\end{proof}
A claim of a $\pitwo$ upper bound for containment of CQs with negation
can be found in \cite{DBLP:journals/tcs/Ullman00}. It was not made
clear there, that this claim assumes bounded arity of the schema. That
the containment problem is $\pitwo$-complete for schemas of bounded
arity has been explicitly shown  in \cite{MugnierST12}. Clearly,
Proposition~\ref{prop:UCQneg:upper}.2  follows directly and
\ref{prop:UCQneg:upper}.1 is only
a variation of it.
From Proposition~\ref{prop:UCQneg:upper} and
Corollary~\ref{coro:cqneg-hard} the main result of this section
immediately follows. 

\begin{theorem}
	\label{thm:containment-cqneg-conexptime}
	$\containment(\classBCQneg,\classBCQneg)$ and
        $\containment(\classUCQnegineq,\classUCQnegineq)$ are
        \conexptime-complete.
\end{theorem}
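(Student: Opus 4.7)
The plan is to assemble the theorem by combining the two matching bounds that have been prepared by the preceding results, so the argument is just the composition of a chain of reductions together with the small-model algorithm from Proposition~\ref{prop:UCQneg:upper}.

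For the upper bound, I would observe that $\classBCQneg \subseteq \classUCQnegineq$, so both $\containment(\classBCQneg,\classBCQneg)$ and $\containment(\classUCQnegineq,\classUCQnegineq)$ are instances of the containment problem covered by Proposition~\ref{prop:UCQneg:upper}.1, which already gives the \conexptime upper bound (by guessing a counter-example instance $J$ of domain size at most $\varmax{\qr_1}$ together with a witnessing fact, and then checking $\qr_1$ and $\qr_2$ on $J$ in exponential time). No further work is needed here.

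For the lower bound, I would first invoke Proposition~\ref{prop:cqneg-hard}, which reduces succinct $3$-colorability (\NEXPTIME-hard) to $\containment(\classBCQneg,\classUBCQneg)$, yielding \conexptime-hardness of the latter problem. Then I would compose this reduction with the polynomial-time reduction of Proposition~\ref{prop:red:simplify:containment} that turns an instance of $\containment(\classBCQneg,\classUBCQneg)$ into one of $\containment(\classBCQneg,\classBCQneg)$. This establishes \conexptime-hardness of $\containment(\classBCQneg,\classBCQneg)$; since $\classBCQneg \subseteq \classUCQnegineq$ on both sides, the hardness transfers to $\containment(\classUCQnegineq,\classUCQnegineq)$ as well.

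There is no real obstacle at this stage: all the substantive work has been carried out in Propositions~\ref{prop:cqneg-hard}, \ref{prop:red:simplify:containment}, and \ref{prop:UCQneg:upper}. The only thing worth double-checking is that the Boolean and non-Boolean, as well as the single-disjunct and union, reductions chain together without introducing inequalities (they do not) and that the queries produced by the Propositions~\ref{prop:cqneg-hard} and \ref{prop:red:simplify:containment} reductions indeed lie in $\classBCQneg$, so that the hardness statement matches the theorem's syntactic class on the nose.
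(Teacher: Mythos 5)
Your proposal is correct and matches the paper's own argument exactly: the upper bound is Proposition~\ref{prop:UCQneg:upper}.1 applied to both problems (via $\classBCQneg\subseteq\classUCQnegineq$), and the lower bound is the composition of Propositions~\ref{prop:cqneg-hard} and~\ref{prop:red:simplify:containment} (which is precisely Corollary~\ref{coro:cqneg-hard}), with hardness transferring to the larger class by inclusion. No further comment is needed.
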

Of course, the theorem also holds for all classes $\cal C$ of queries
with $\classBCQneg\subseteq\cal C\subseteq \classUCQnegineq$.

\section{Parallel-correctness: unions of conjunctive queries with negation}
\label{sec:negation}

As mentioned in Section~\ref{sec:ucq}, for conjunctive queries without
negation parallel-soundness always holds and thus parallel-correctness
and parallel-completeness coincide, thanks to monotonicity. For queries with
negation the situation is different. Distributed evaluation can be complete but not sound, or vice versa. For this reason, we
have to distinguish all three problems separately: correctness, soundness, and
completeness. However, the complexity is the same in all three cases.

Our results show a second, more crucial difference. Whereas parallel
completeness for CQs without negation could be characterised in terms
of valuations, that is, objects of polynomial size, our
algorithms for CQs with negation involve counter-examples of
exponential size (if the arity of  schemas is not bounded) and the $\coNEXPTIME$ lower bound results indicate that this is unavoidable.
We illustrate the observation that counter-examples might need an
exponential number of tuples by the following example.

\begin{example}
	Let $\qr$ be the following conjunctive query with negation:
	\begin{displaymath}
		\begin{array}{lcl}
			H()
			& \gets &
			\Dom(w_0,w_0), \Dom(w_1,w_1),
			\Dom(x_1,x_1),\dots,\Dom(x_n,x_n), \\
			& & \lnot\Dom(w_0,w_1), \lnot \Rel(x_1,\dots,x_n).
		\end{array}
	\end{displaymath} 
	Let $\distp$ be the policy defined over universe $U=\{0,1\}$ and
	two-node network $\{\cnodeA,\cnodeB\}$, which distributes all facts except
	$\Rel(0,\dots,0)$ to node~$\cnodeA$ and only fact $\Rel(0,\dots,0)$ to
	node~$\cnodeB$.

	Query~$\qr$ is not parallel-sound under policy~$\distp$, but
        the smallest counter-example $I$ is of
	exponential size as we argue next. 
    Indeed, let $I \mydef \{\Dom(0,0), \Dom(1,1)\} \cup \{\Rel(a_1,\dots,a_n) \mid
	(a_1,\dots,a_n) \in \{0,1\}^n\}$. Furthermore, let valuation $V$ map variables
	$w_1$ and $w_0$ to $1$ and $0$, respectively, and map $x_i$ to 0, for every
	$i \in \{1,\dots,n\}$. Then, valuation~$V$ satisfies $\qr$ on instance
	$\dist{I}(\cnodeA) = I \setminus \{\Rel(0,\dots,0)\}$ because neither
	$\Dom(0,1)$ nor $\Rel(0,\dots,0)$ is contained in the local instance.
	Furthermore, there is no satisfying valuation~$W$ for $\qr$ on the global
	instance~$I$ because $W$ would have to map each $x_i$ to either $0$ or $1$
	implying that 
    $W\big(\Rel(x_1,\dots,x_n)\big)\in I$.

	However, there is no smaller instance: let $I^\ast$ be some instance
	over universe $U$ that has a locally satisfying valuation~$V$. The
	combination of atoms $\Dom(w_0,w_0),\Dom(w_1,w_1)$, and $\lnot\Dom(w_0,w_1)$ in
	query~$\qr$ then implies existence of both facts $\Dom(0,0)$ and $\Dom(1,1)$
	because variables $w_0$ and $w_1$ cannot be mapped onto the same data value. 

	Assume
	that fact $\Rel(a_1,\dots,a_n)$, for some $(a_1,\dots,a_n) \in \{0,1\}^n$ is missing from $I^\ast$. Then the valuation~$W$ that maps $w_0 \mapsto 0, w_1
	\mapsto 1$ and $x_i \mapsto a_i$, for every $i \in \{1,\dots,n\}$, satisfies
	$\qr$ also globally, on instance~$I^\ast$, and can therefore be no example
	against parallel-soundness, which contradicts our choice of $I^\ast$. Thus,
	$\Rel(a_1,\dots,a_n) \in I^\ast$, for every $(a_1,\dots,a_n) \in \{0,1\}^n$. We
	therefore have $I \subseteq I^\ast$ and, in particular, instance~$I^\ast$
	contains at least as many facts as instance~$I$. \hfill $\Box$
\end{example}

The results of this section are summarised in the following theorem:

\begin{theorem}\label{theo:paracomp-main}
For every class
                  $\classp\in\{\classprule\}\cup\Classpnondet$ of
                  distribution policies, the following problems  are
                 $\coNEXPTIME$-complete.
                 \begin{itemize}
                 \item $\parasound(\classUCQneg,\classp)$
                 \item
                   $\paracom(\classUCQneg,\classp)$
                 \item                    $\paracor(\classUCQneg,\classp)$
                 \end{itemize}
\end{theorem}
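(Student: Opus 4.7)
The plan is to prove matching $\coNEXPTIME$ upper and lower bounds for all three variants simultaneously; counter-examples to parallel-soundness or parallel-completeness may indeed need exponentially many facts, as the example preceding the theorem already exhibits, so a polynomial-certificate argument in the style of Lemma~\ref{lem:paracor-ucqineq-pi2} cannot work and a doubly nondeterministic approach is needed.

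For the upper bound, I would show that the complement of each problem lies in $\NEXPTIME$. The key observation is that for both classes of policies the total number of potential facts $|\facts{\sch,U}|$ is at most singly exponential in the input: for $\classp=\classprule$ the universe $U$ is given explicitly and is polynomial, while for $\classp=\classpnondetk$ we have $|U|\le 2^n$ with $n$ in unary and the relation arities are polynomially bounded, so $|U|^{\ar{R}}\le 2^{\text{poly}}$. A $\NEXPTIME$ algorithm can therefore guess an instance $I\subseteq\facts{\sch,U}$, a node $\node$, a disjunct $\qr_i$ of $\qr$, a valuation $V$ witnessing local satisfaction, and polynomial-size accepting computations of $\algNPDist$ certifying that $\node$ is responsible for the required positive facts of $V$. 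It then deterministically verifies in exponential time that $V$ satisfies $\qr_i$ on $\dist{I}(\node)$ and that the derived head fact is \emph{not} in $\qr(I)$ by enumerating the at most $|U|^{\varmax{\qr}}$ candidate valuations for each disjunct. Failure of parallel-completeness is handled by swapping the roles of the global and local witnesses, and failure of parallel-correctness is the disjunction of the two. The coNP non-responsibility tests arising under $\classpnondet$ cause no trouble because a deterministic exponential-time subroutine can enumerate all polynomial-length runs of $\algNPDist$.

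For the matching lower bound I would reduce $\containment(\classCQneg,\classCQneg)$, shown $\coNEXPTIME$-hard in Corollary~\ref{coro:cqneg-hard}, to each of $\parasound$, $\paracom$, and $\paracor$. Given input queries $\qr_1,\qr_2$ over schema $\sch$, the goal is to construct a $\classUCQneg$-query $\qr$ and a policy $\distp$ such that $\qr_1\subseteq\qr_2$ iff $\qr$ is parallel-sound (respectively complete or correct) under $\distp$. The driving intuition, already visible in the example preceding the theorem, is that the non-monotonicity of negation allows a valuation to satisfy a subquery on a local view and yet fail globally because a negated atom is hit by a fact stored at another node. By arranging $\qr$ so that one disjunct essentially mimics $\qr_1$ on a designated portion of the instance while some of its negated atoms refer to the body of $\qr_2$ on a parallel portion, a failure of parallel-soundness will correspond exactly to an instance separating $\qr_1$ from $\qr_2$. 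Marker relations distributed via explicit fact rules of $\distp$ segregate the two portions and prevent spurious local matches.

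The principal obstacle is making the two universal quantifications, over instances over $\sch$ for containment and over instances over $U$ for parallel-correctness, line up exactly; in particular the reduction must rule out local matches triggered by facts that do not encode a meaningful $\sch$-instance, which I plan to enforce through auxiliary predicates whose intended interpretation is pinned down by fact rules in $\distp$. Hardness for $\parasound$ and $\paracom$ would then be handled by dual constructions, and $\paracor$-hardness follows from either since correctness is the conjunction of soundness and completeness. Combining these reductions with the $\NEXPTIME$ upper bound above yields the stated $\coNEXPTIME$-completeness for all three problems and both families of policy representations.
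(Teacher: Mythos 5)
Your upper bound is correct but takes a more direct route than the paper. You guess a counter-example instance over the full fact space $\facts{\sch,U}$, which is singly exponential because $|U|$ is at most exponential (in the unary parameter $n$) and arities are bounded by the input size, and then verify globally/locally by exhaustive enumeration of valuations, nodes, and runs of $\algNPDist$. The paper instead first proves a small-model claim via the restriction lemma (Lemma~\ref{lem:restrict}): any counter-example can be shrunk to one whose active domain has at most $\varmax{\qr}$ elements. Both arguments give \conexptime; the paper's refinement is what additionally yields the \pitwo upper bound for schemas of bounded arity, which you do not need for this theorem. So on the upper bound you lose nothing essential.

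The lower bound, however, has a genuine gap. You propose that one disjunct of $\qr$ mimic $\qr_1$ while ``some of its negated atoms refer to the body of $\qr_2$,'' so that a local match that fails globally witnesses $\qr_1(I)\not\subseteq\qr_2(I)$. This cannot work as stated: the statement ``$\qr_2$ does not derive the fact on $I$'' is a \emph{universal} statement over all valuations of $\qr_2$'s variables, whereas the negated atoms of a single $\CQneg$ (or of one disjunct of a $\UCQneg$) are evaluated under \emph{one} valuation and only assert the absence of finitely many specific facts. No placement of $\lnot$-literals in front of (parts of) $\body{\qr_2}$ expresses non-satisfiability of $\qr_2$. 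The paper's reduction avoids this by including $\body{\qr_2}$ \emph{positively} in the combined query, on a second layer tagged by a variable $\ell_2$, alongside $\alpha(\ell_1,\body{\qr_1})$, together with the gadget atoms over $\Type$, $\StartA$, $\StartB$, $\Stop$. The distribution policy is engineered so that on the ``completeness'' nodes the available $\StartB$- and $\Stop$-facts force $\ell_1=\ell_2$, i.e.\ force $\qr_2$ to be satisfied on the \emph{same} layer as $\qr_1$, while globally (or on the ``soundness'' nodes) the two layers may differ. The universal quantification that captures ``no valuation of $\qr_2$ succeeds on the $\qr_1$-layer'' is then supplied by the universal quantifier already built into the definitions of parallel-soundness and parallel-completeness (``for no node / for no valuation''), not by negated atoms. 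Without this — or an equivalent — mechanism, your reduction does not go through, and the claimed correspondence between failure of parallel-soundness and failure of containment is unsubstantiated. Note also that the paper obtains all three hardness results from this \emph{single} reduction, and already for a single $\CQneg$ rather than a union.
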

Theorem~\ref{theo:paracomp-main} follow from
Propositions~\ref{prop:paracomp-upper} and \ref{prop:paracomp-lower}
below. It also holds for
$\classUCQnegineq$. It is easy to show that, when restricted to
schemas with some fixed (but sufficiently large, for hardness) arity bound, all these problems are $\pitwo$-complete.

\subsection{Upper bounds}

In this section, we show the upper bounds of
Theorem~\ref{theo:paracomp-main}, summarised in the following
proposition.
\begin{proposition}\label{prop:paracomp-upper}
  
	 $\parasound(\classUCQnegineq,\classp)$,
		 $\paracom(\classUCQnegineq,\classp)$, and
		 $\paracor(\classUCQnegineq,\classp)$ are
                 in $\coNEXPTIME$, for every class
                  $\classp\in\Classp$ of
                  distribution policies. If the arity of schemas is
                  bounded by some fixed number, these problems are in \pitwo.
\end{proposition}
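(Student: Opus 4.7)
The plan is to lift the small-model argument underlying Lemma~\ref{lem:reduction_helper} from containment to parallel-correctness and its one-sided variants. Concretely, I will show that if $\qr$ fails to be parallel-complete (respectively, parallel-sound or parallel-correct) under $\distp$, then there is already a counter-example instance $J$ with $|\adom{J}|\le \varmax{\qr}$. Both bounds then follow: in the general case $J$ has at most $|\sch|\cdot\varmax{\qr}^a$ facts, which is single-exponential in the input, yielding a \nexptime algorithm for the complement; under a fixed arity bound $J$ has polynomial size and the same structure collapses to a \sigmatwo algorithm for the complement.

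For the completeness case of the small-model step, start with a counter-example $(I,\fc)$ such that $\fc\in \qr(I)$ but $\fc\notin \qr(\dist{I}(\node))$ for every node~$\node$. Pick a satisfying valuation $V$ for some disjunct $\qr_i$ of $\qr$ on $I$ with $V(\head{\qr_i})=\fc$, set $D\mydef \adom{V(\body{\qr_i})}$ and $J\mydef \restrict{I}{D}$. Then $|\adom{J}|\le \varmax{\qr}$, and $V$ still satisfies $\qr_i$ on $J$: the required facts $V(\pos{\qr_i})$ already use values from $D$ and thus lie in $J$, while the prohibited facts $V(\negbody{\qr_i})$ were disjoint from $I$, hence from $J\subseteq I$. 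For the other side, observe that $\dist{J}(\node)=\restrict{\dist{I}(\node)}{D}$, so Lemma~\ref{lem:restrict} gives $\qr(\dist{J}(\node))\subseteq \qr(\dist{I}(\node))$ and hence $\fc\notin \qr(\dist{J}(\node))$, as needed. The soundness case is symmetric: restrict to the active domain of the \emph{local} witness $W$ and apply Lemma~\ref{lem:restrict} on the global side to preserve $\fc\notin\qr(J)$; correctness follows by combining the two.

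Equipped with the small-model property, a \nexptime algorithm for the complement of each decision problem guesses the small instance $J$, the fact $\fc$, and the satisfying valuation witnessing the offending side, and then deterministically enumerates all nodes (at most $2^n$ of them) and all valuations on the other side, verifying that none witnesses a matching derivation. Each responsibility test runs within the policy's promised time---polynomial for $\classpfin$ and $\classprule$, non-deterministic polynomial for $\Classpnondet$---and fits comfortably within the exponential-time budget. Under a fixed arity bound the same structure yields a \sigmatwo machine for the complement: the outer existential guesses $(J,\fc,V,\qr_i)$ of polynomial size, the inner universal rules out polynomial-size rival derivations $(\node,W,\qr_j)$, and the innermost check runs in polynomial time.

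The principal obstacle is the small-model step itself: because negation is non-monotone, restricting $I$ to a sub-domain could \emph{a priori} introduce new satisfying valuations and thereby destroy the counter-example. Lemma~\ref{lem:restrict} is exactly what rescues this step, stating that restriction can only shrink $\qr(\cdot)$---the direction needed both to preserve the existing violation and to ensure no new witness appears at any node's local instance.
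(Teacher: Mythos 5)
Your proposal is correct and follows essentially the same route as the paper: the same small-model property (restrict the counter-example to the active domain of the witnessing valuation's positive body, bounded by $\varmax{\qr}$, and use Lemma~\ref{lem:restrict} to show the restriction preserves the violation on the other side), followed by the same guess-and-verify algorithm yielding \coNEXPTIME in general and \pitwo under a fixed arity bound. The only cosmetic difference is that you take $D$ to be the active domain of $V(\body{\qr_i})$ rather than $V(\pos{\qr_i})$, which coincide by safety of negation.
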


\begin{proof}
  As already indicated above, the proof relies on a bound on the size of
a smallest counter-example. More specifically, we first show the
following claim.
\begin{claim}\label{claim:smallmodel}
Let $\qr\in\classUCQnegineq$ and let \distp be an arbitrary distribution
policy. Then the following statements hold:
  \begin{enumerate}
  \item If \qr is not parallel-complete under \distp, then there
    is an instance $J$ over a domain with at most $\varmax{\qr}$
    elements such that \qr is not parallel-complete on $J$ under \distp.
  \item If \qr is not parallel-sound under \distp, then there
    is an instance $J$ over a domain with at most $\varmax{\qr}$
    elements such that \qr is not parallel-sound on $J$ under \distp.
  \end{enumerate}
\end{claim}
Towards (1) let us assume that \qr is not parallel-complete on some
instance $I$ under \distp. Let $V$ be a valuation of a disjunct
$\qr_i$ of $\qr$ that derives a fact
\fc globally that is not derived on any node of the network. Let
$D\mydef\adom{V(\pos{\qr_i})}$ and
$J\mydef\restrict{I}{D}$. Clearly, $|D|\le\varmax{\qr}$ and 
$V$ still derives \fc globally on
instance $J$ via
$\qr_i$. On the other hand, for every node \node,
$\qr\big(\dist{J}(\node)\big)
= \qr\big(\restrict{\dist{I}(\node)}{D}\big) \subseteq
\qr\big(\dist{I}(\node)\big)$, thanks to Lemma~\ref{lem:restrict}.
Therefore \fc is not derived on \node, and thus $J$ witnesses the lack of
parallel-completeness of \qr under \distp.

The proof of (2) is completely analogous. Given a counter-example $I$
and a valuation $V$ that
derives a fact \fc on some node \node via $\qr_i$, for which $\fc$ is not
derived globally, we define $D\mydef\restrict{I}{\adom{V(\pos{\qr_i})}}$ and
show that $J\mydef\restrict{I}{D}$ is the desired counter-example.

In the appendix, we describe an algorithm that tests the complement of parallel
completeness non-deterministically.
\end{proof}

\subsection{Lower bounds}
\label{sec:reduction:containment}
The lower bounds stated in Theorem~\ref{theo:paracomp-main} follow from a
polynomial time reduction from problem
$\containment(\classBCQneg,\classBCQneg)$, for which we showed  $\coNEXPTIME$-hardness in Section~\ref{sec:containment}.

\begin{proposition}
	\label{prop:paracomp-lower}
	 $\paracom(\classCQneg,\classpenum)$,  $\parasound(\classCQneg,\classpenum)$,
         and $\paracor(\classCQneg,\classpenum)$ are
         $\coNEXPTIME$-hard.
\end{proposition}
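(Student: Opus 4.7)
The plan is a polynomial-time reduction from $\containment(\classBCQneg, \classBCQneg)$, which is $\coNEXPTIME$-hard by Corollary~\ref{coro:cqneg-hard}, to each of $\paracom$, $\parasound$, and $\paracor$ at once. Given Boolean $\qr_1, \qr_2 \in \classBCQneg$ over schema $\sch$, the goal is to construct, in polynomial time, a query $\qr \in \classCQneg$ and a distribution policy $\distp \in \classpenum$ such that $\qr_1 \subseteq \qr_2$ iff $\qr$ is parallel-sound (equivalently, complete, equivalently, correct) under $\distp$. If I can arrange the construction so that one of parallel-soundness or parallel-completeness holds for $\qr$ under $\distp$ regardless of the input, then the remaining property coincides with parallel-correctness, and a single reduction settles all three lower bounds simultaneously.

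First I would set up an extended schema $\sch'$ on top of $\sch$. In the spirit of Proposition~\ref{prop:red:simplify:containment}, I would introduce shadow copies of the relations in $\sch$ together with a small activation relation (e.g.\ \pred{Active}) used to select among the different syntactic roles that appear in $\qr$'s body. The network of $\distp$ would consist of a handful of distinguished nodes, one per role. The policy $\distp$ itself is then specified by a bounded number of $\classpenum$-rules that send each fact (original or shadow) to the nodes associated with its role. The crucial feature is that $\classpenum$-rules may carry variables, so a single rule can assign responsibility for an entire pattern of facts; this is what lets a polynomial-size policy compactly encode the potentially exponential space of candidate satisfying valuations of $\qr_2$.

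The query $\qr$ would be a single CQ with negation whose body combines: (a)~a rewriting of $\qr_1$'s body over one copy of the schema; (b)~atoms derived from $\qr_2$ that are intended to encode ``no satisfying valuation of $\qr_2$ is witnessed here''; and (c)~auxiliary atoms coordinated with \pred{Active}. On an instance $I$, global satisfaction of $\qr$ should correspond to a valuation of $\qr_1$ on $I$ together with the simultaneous failure of every candidate valuation of $\qr_2$. At a single node $\node$, on the other hand, the local view sees only a restriction of the shadow facts, so a local satisfying valuation corresponds to $\qr_1$ holding on $I$ together with \emph{a particular} candidate valuation of $\qr_2$ failing. The discrepancy between the ``all candidates fail globally'' and the ``some candidate fails at this node'' views is exactly what encodes the containment question.

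The main obstacle, I expect, will be calibrating $\distp$ and the body of $\qr$ so that the local views collectively tile the space of candidate $\qr_2$-valuations, covering each candidate at some node and only that candidate. The auxiliary machinery (shadow schema, activation relation, disjoint variable sets across the two "halves" of $\qr$) is the standard way of ruling out spurious mixing of candidates at a single node, but getting the bookkeeping right so that no intended valuation is lost and no unintended valuation is created is where the technical work will live. Once the construction is in place, the correctness proof follows the usual two directions: a containment counter-example $I^\ast$ with $\qr_1(I^\ast)$ true and $\qr_2(I^\ast)$ false lifts to an instance on which global and distributed evaluations of $\qr$ disagree, and conversely any parallel-(un)soundness or parallel-(in)completeness witness, after projecting away the shadow and activation content, yields an instance separating $\qr_1$ from $\qr_2$.
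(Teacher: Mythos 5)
You correctly identify the reduction source ($\containment(\classBCQneg,\classBCQneg)$, via Corollary~\ref{coro:cqneg-hard}) and several of the right ingredients (a labelled/shadow copy of the schema, an activation mechanism in the style of Proposition~\ref{prop:red:simplify:containment}, a single reduction serving all three problems). But the core of your construction rests on a step that cannot work: you want the body of the single \CQneg~$\qr$ to contain ``atoms \ldots intended to encode `no satisfying valuation of $\qr_2$ is witnessed here','' so that global satisfaction of $\qr$ means ``$\qr_1$ holds and every candidate valuation of $\qr_2$ fails.'' The body of a \CQneg is an existentially quantified conjunction of literals; its negated atoms only assert the absence of the finitely many specific facts determined by one valuation. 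The condition $\qr_2(I)=\emptyset$ is a universal statement over exponentially many valuations, and no polynomial-size \CQneg body expresses it --- indeed, if it did, containment itself would not be $\coNEXPTIME$-hard. Your fallback, that the local views ``tile the space of candidate $\qr_2$-valuations, covering each candidate at some node,'' also fails: a $\classpenum$ policy of polynomial size has polynomially many nodes, while there are exponentially many candidate valuations, and the rules-with-variables only compress the assignment of \emph{facts} to nodes, not an enumeration of valuations.

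The paper's reduction avoids this by inverting the role of $\qr_2$: the combined query \emph{positively} requires a satisfying valuation of $\qr_2$, on a layer named by a variable $\ell_2$, alongside a satisfying valuation of $\qr_1$ on layer $\ell_1$, plus control literals $\StartA(\ell_1),\StartB(\ell_1),\StartB(\ell_2),\Type(t),\lnot\Stop(\ell_1),\lnot\Stop(\ell_2)$. The policy distributes the control facts asymmetrically so that globally $\ell_1$ and $\ell_2$ may name different layers (so on the counter-example instance $\alpha(1,I_1)\cup\alpha(2,I_2)\cup\{\ldots\}$ the query is satisfied \emph{because} $\qr_2$ is satisfied --- on layer $2$), whereas on the relevant nodes the available control facts force $\ell_2=\ell_1$, so that local derivability entails a satisfying valuation of $\qr_2$ on $\qr_1$'s own layer. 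The universal quantification over valuations of $\qr_2$ is thus supplied for free by the clause ``the fact is \emph{not} derived globally (respectively, locally)'' in the definitions of parallel-soundness and -completeness, not by the query body. Two further points you would need in any event: one must pre-test that $\qr_2$ is satisfiable (to build the auxiliary layer $I_2$), and one must invoke the small-model Lemma~\ref{lem:reduction_helper} so that the containment counter-example fits inside the finite universe $U$ of the constructed policy. Finally, your hope of making one of soundness or completeness hold unconditionally is not how the paper proceeds (and is not needed): the same instance constructions witness failure of both properties when $\qr_1\not\subseteq\qr_2$, and containment implies both.
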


\begin{proof}
Interestingly, all three results are shown by the \emph{same} reduction from
decision problem $\containment(\classBCQneg,\classBCQneg)$. 

	The basic idea for this reduction is very simple: it combines both
	queries $\qr_1,\qr_2 \in \classBCQneg$ of the given  containment instance into a
	single query $\qr \in \classBCQneg$ and infers an appropriate
        distribution policy $\distp$.
	To emulate separate derivation for both queries in the combined query, 
    an activation mechanism is used that resembles the proof of
    Proposition~\ref{prop:red:simplify:containment}.
    In this fashion, the two queries can be evaluated over different subsets of the
    considered instance by annotating both the facts in the instance as well as the atoms
    of the query.
   
We next describe the reduction in detail. Let thus
$\qr_1,\qr_2\in\classBCQneg$ be queries over some schema $\sch$ and let
$m\mydef \max \big\{\varmax{\qr_1},\varmax{\qr_2}\big\}$. Without loss of generality, we assume  the variable
	sets of $\qr_1$ and $\qr_2$ to be disjoint. We will also
        assume in the following that both $\qr_1$ and $\qr_2$ are 
        satisfiable. This is the case (for $\qr_1$) if and only if
        $\bodypos{\qr_1}\cap\bodyneg{\qr_1}=\emptyset$ and can
        therefore be easily tested in polynomial time. If one of the
        test fails, some appropriate constant instance of
        $\paracom(\classCQneg,\classpenum)$ or one of the other problem
        variants, respectively, can be computed.

	We define a (Boolean) query $\qr \in \classBCQneg$ and a policy $\distp \in
	\classpenum$ over domain $\{1,\dots,m\}$ that  can be computed
        from $\qr_1$ and $\qr_2$ in polynomial time. The schema for \qr is $\sch' \mydef
	\{R'^{(k+1)} \mid R^{(k)} \in \sch\}$. That is, each relation
        name $R$ of \sch occurs as $R'$ in $\sch'$ with an arity incremented
        by one. Additionally, $\qr$ uses relation names $\Type$, $\StartA$,
        $\StartB$, and $\Stop$, which we assume not to occur in schema~$\sch$. Besides the variables of
        $\qr_1$ and $\qr_2$, query~$\qr$ uses variables $\ell_1,\ell_2, t$. 

We use the function $\alpha$, defined in the proof of Proposition~\ref{prop:red:simplify:containment}, which adds its first parameter as
first component to every tuple in its second parameter and translates
relation names $R$ into $R'$. In
Proposition~\ref{prop:red:simplify:containment}, the first parameter
was always a variable and the second a set of atoms, but we use
$\alpha$ also for a data value as first and a set of facts as second
parameter in the obvious way. We write $\alphainv{a}$ for the function
mapping sets of facts over $\sch'$ to sets of facts over $\sch$,
by selecting, from a set of facts, all facts with first parameter $a$,
deleting this parameter and replacing each name $R'$ by
$R$. Finally, $\pi_a(I)\mydef\alpha\big(a,\alphainv{a}(I)\big)$ is the
restriction of $I$ to all facts with $a$ in their first component.

\medskip
\noindent
	The combined query~$\qr$ has $\head{\qr} \mydef H()$ and body
	\begin{displaymath}
		\begin{array}{lcl}
			\body{\qr} &
			\mydef &
			\alpha(\ell_1,\body{\qr_1})
			\cup
				\alpha(\ell_2,\body{\qr_2}) \\
			& \cup &
			\underbrace{\{\Type(t), \StartA(\ell_1), \StartB(\ell_1),
			\StartB(\ell_2)\}}_{\atomset}
			\cup
			\underbrace{\{\lnot\Stop(\ell_1), \lnot\Stop(\ell_2)\}}_{\atomset^\lnot}.
		\end{array}
	\end{displaymath}

	\noindent
	Policy~$\distp$ is defined over universe $U \mydef
        \{1,\dots,m\}$, schema $\sch' \cup \{\Type,\StartA,\StartB,\Stop\}$ and network $\nw \mydef
	\{\nodeComp_1,\dots,\nodeComp_m,\nodeSound_1,\dots,\nodeSound_m,\nodeRest\}$.
	Facts are distributed as follows:

        \begin{itemize}
        \item Every node~$\nodeComp_i$ is responsible for the facts
          $\Type(1),\StartA(i),\StartB(i),\Stop(i)$, and all facts
          from $\facts{\sch',U}$.
        \item Every node~$\nodeSound_i$ is
          responsible for the facts $\Type(2),\StartA(i),\Stop(i)$,
          all $\StartB$-facts, and all facts  from $\facts{\sch',U}$.
        \item Finally, node~$\nodeRest$ is responsible for facts
          $\Type(3),\dots,\Type(m)$, and \emph{all} facts over other
          relation names.
        \end{itemize}
It is easy to see that $\distp$ can be
        expressed by a polynomial number of rules and that \qr and
        \distp can be computed in polynomial time.
        In the appendix, we show  that the described function is indeed the desired reduction.
\end{proof}

\section{Full conjunctive queries}
\label{sec:full}

In this section, we focus attention on full conjunctive queries, in an attempt to lower the complexity of testing parallel-correctness.
Requiring queries to be full is a very natural restriction which is known to have practical benefits.
For example, the Hypercube algorithm, which describes an optimal way to compute \CQ{}s in a setting very similar to
ours, completely ignores projections when shuffling data, and only applies them when computing the query locally.
The latter is possible because correctness for the full-variant of a query is in a sense more strict than correctness for the query itself.

Formally, a (union of) conjunctive queries is called \emph{full} if all variables
of the body also occur in the head.
We denote by $\classFCQnegineq$ and $\classFUCQnegineq$ the class of full $\CQnegineq$
and full $\UCQnegineq$ queries, respectively, and likewise for other fragments.  

The presentation is similar to that of Section~\ref{sec:containment} and \ref{sec:negation}. First, we establish
the complexity of query containment. Then, we show that containment reduces
to parallel-correctness (and variants). Finally, we obtain matching upper bounds.

The following theorem shows that unlike for general conjunctive queries the complexity of deciding containment for \FCQneg and
\UFCQneg do not coincide. 
\begin{theorem}
    \begin{enumerate}
        \item $\text{\sc Containment}(\classFCQneg, \classFCQneg)$ is in $\Poly$;
        \item 
    $\text{\sc Containment}(\classFCQneg, \classFUCQneg)$ is \coNP-complete; and
    \item 
    $\text{\sc Containment}(\classUFCQneg, \classFUCQneg)$ is \coNP-complete.
    \end{enumerate}\label{theo:full_containment_new}
All these results also hold for queries with inequalities.
\end{theorem}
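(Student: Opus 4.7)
My plan handles the three parts in sequence. The crucial observation throughout is that, since every variable of a full query occurs in its head, any satisfying valuation $V$ of a full query is uniquely determined by the output tuple $V(\head{\qr})$.

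For part~1, I will establish the syntactic characterisation: for $\qr_1,\qr_2\in\classFCQneg$, $\qr_1\subseteq\qr_2$ holds iff either $\pos{\qr_1}\cap\negbody{\qr_1}\neq\emptyset$ (so $\qr_1$ is unsatisfiable on every instance) or there exists a substitution $\sigma:\vars{\qr_2}\to\vars{\qr_1}$ determined by the equation $\sigma(\head{\qr_2})=\head{\qr_1}$ with $\sigma(\pos{\qr_2})\subseteq\pos{\qr_1}$ and $\sigma(\negbody{\qr_2})\subseteq\negbody{\qr_1}$. The \emph{if} direction is a composition argument: given any $V_1$ satisfying $\qr_1$ on $I$, the valuation $V_2\mydef V_1\circ\sigma$ yields $V_2(\head{\qr_2})=V_1(\head{\qr_1})$, $V_2(\pos{\qr_2})\subseteq V_1(\pos{\qr_1})\subseteq I$, and $V_2(\negbody{\qr_2})\subseteq V_1(\negbody{\qr_1})$, which is disjoint from $I$ because $V_1$ satisfies $\qr_1$ on $I$. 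For the \emph{only if} direction, I use the identity valuation $V_1^{\ast}$ of $\qr_1$ (variables as pairwise distinct constants): if $\sigma$ is syntactically ill-defined or $\sigma(\pos{\qr_2})\not\subseteq\pos{\qr_1}$, the canonical instance $I\mydef\pos{\qr_1}$ already witnesses $\qr_1\not\subseteq\qr_2$ because fullness of $\qr_2$ forces $V_2=\sigma$ as the only candidate and it fails; if only $\sigma(\negbody{\qr_2})\not\subseteq\negbody{\qr_1}$, extending $I$ by any $f\in\sigma(\negbody{\qr_2})\setminus\negbody{\qr_1}$ preserves $V_1^{\ast}$'s satisfaction of $\qr_1$ (since $f\notin\negbody{\qr_1}$) while destroying $V_2$. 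All checks are polynomial; inequalities are accommodated by additionally requiring $\sigma$ to respect $\ineq{\qr_2}$.

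For part~2, the \coNP upper bound uses a small-model argument in two stages. First, by Lemma~\ref{lem:reduction_helper} the minimal counter-example can be taken over a domain of size at most $\varmax{\qr_1}$, which for a full query is polynomial. Second, since each disjunct $\qr_2^j$ is full, its candidate $V_2^j$ is uniquely determined by $V_1(\head{\qr_1})$, so only the polynomially many facts appearing in the images of $V_1$ or of some $V_2^j$ are \emph{relevant}: any other fact can be freely added to or removed from $I$ without affecting satisfaction of $V_1$ or any $V_2^j$. A non-containment witness is therefore a polynomial-size pair $(V_1,I)$, whose verification -- that $V_1$ satisfies $\qr_1$ on $I$ while every $V_2^j$ fails on $I$ -- is polynomial, yielding \coNP. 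For \coNP-hardness I reduce from UNSAT: given a CNF $\varphi=\bigwedge_{j=1}^{m} C_j$ over $x_1,\dots,x_n$, let $\qr_1$ produce, from a Boolean-domain predicate, every tuple in $\{0,1\}^n$; for each clause $C_j$ let $\qr_2^j$ be a full CQ with negation that outputs exactly those tuples falsifying $C_j$, using positive or negated atoms over predicates $\True$ and $\False$ to pin each literal of $C_j$ to its unsatisfying value, while constraining the remaining variables to be Boolean (so every body variable still occurs in the head). Then $\qr_1\subseteq\bigcup_j\qr_2^j$ iff every Boolean assignment falsifies some clause, i.e., iff $\varphi$ is unsatisfiable.

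Part~3 follows immediately from part~2: $\bigcup_i\qr_1^i\subseteq\qr_2$ iff $\qr_1^i\subseteq\qr_2$ for every $i$, a polynomial conjunction of \coNP-tests and therefore still in \coNP; the lower bound is inherited via $\classFCQneg\subseteq\classUFCQneg$. The inequality versions of all three parts go through by carrying $\ineq{\qr}$ into both the syntactic characterisation and the verification step. The main technical obstacle I anticipate is the hardness reduction in part~2: ensuring every $\qr_2^j$ is genuinely \emph{full}, i.e., every body variable appears in the head, while still expressing "clause $C_j$ is falsified" requires the variables absent from $C_j$ to be pinned down by auxiliary Boolean-domain atoms, and it is precisely here that the combination of $\True/\False$ predicates with negation becomes essential.
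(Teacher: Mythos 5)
Your overall architecture matches the paper's: the same syntactic characterisation for part~1 (a substitution $h$ with $h(\head{\qr_2})=\head{\qr_1}$, $h(\pos{\qr_2})\subseteq\pos{\qr_1}$, $h(\negbody{\qr_2})\subseteq\negbody{\qr_1}$), the same ``one candidate valuation per disjunct, so a counter-example needs only the required facts plus one prohibited fact per disjunct'' small-witness argument for the \coNP upper bounds, and the same observation that a union on the left is a polynomial conjunction of containment tests. Your only-if direction for part~1 replaces the paper's maximal canonical database $I^+$ by adding a single offending frozen fact to $\pos{\qr_1}$, which is an equivalent, slightly more local argument; you also explicitly handle the case of an unsatisfiable $\qr_1$, which the paper tacitly assumes away. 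The genuine divergence is the \coNP-hardness proof of part~2: the paper reduces from graph $3$-colorability, where the ``too many colours'' disjuncts need inequalities, and must then invest an additional lemma encoding $=$ and $\neq$ as relations $\pred{Eq}$/$\pred{Neq}$ (with symmetry, consistency and totality checks) to remove them. Your reduction from \textsc{Unsat} avoids inequalities from the start and is therefore shorter and self-contained; what the paper's route buys in exchange is a hardness proof that is uniform with its \conexptime{} reduction in Section~\ref{sec:containment} (both are colourability-based) and the reusable inequality-elimination lemma.

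One detail of your \textsc{Unsat} reduction must be fixed when instantiating it: the literal-pinning atoms cannot all be positive. If a positive literal $x_i\in C_j$ is pinned by a positive atom $\pred{False}(x_i)$ (and $\lnot x_i$ by $\pred{True}(x_i)$), then on an instance with $\pred{Dom}=\{a\}$ and $\pred{True}=\pred{False}=\emptyset$ the tuple $(a,\dots,a)$ is derived by $\qr_1$ but by no disjunct of $\qr_2$, so containment fails even for unsatisfiable $\varphi$. The correct encoding uses a single predicate and one negated polarity, e.g.\ $\lnot\True(x_i)$ for $x_i\in C_j$ and $\True(x_i)$ for $\lnot x_i\in C_j$, with the truth value of $x_i$ \emph{defined} as membership of $a_i$ in $\True$; then every Boolean assignment is realisable, every instance induces a total assignment, safety holds via the $\pred{Dom}(x_i)$ atoms, and the equivalence with unsatisfiability goes through. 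You flag that negation is essential here, so this is a matter of writing the encoding down carefully rather than a missing idea.
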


As one can reduce from $\text{\sc Containment}(\classFCQneg, \classFUCQneg)$ to parallel-soundness, completeness,
and correctness, we obtain the following hardness results:

\begin{proposition}
    $\parasound(\classUFCQneg,
            \classp)$, $\paracom(\classUFCQneg,
            \classp)$, and $\paracor(\classUFCQneg,\classp)$ are \coNP-hard, for
            every $\classp \in \{\classpenum\} \cup \Classpnondet$.
            \label{the:full_parallelXreductions:new}
\end{proposition}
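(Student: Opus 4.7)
The plan is to adapt the reduction of Proposition~\ref{prop:paracomp-lower} to the full-query setting, reducing from $\containment(\classFCQneg,\classFUCQneg)$, which is \coNP-hard by Theorem~\ref{theo:full_containment_new}, to each of the three parallel problems for \classUFCQneg via a \emph{single} polynomial-time construction. Given $\qr_1\in\classFCQneg$ and $\qr_2=\bigcup_{i=1}^n\qr_2^i\in\classFUCQneg$ over schema \sch, I set $m\mydef\max\{\varmax{\qr_1},\varmax{\qr_2^i}\}$ and produce, in polynomial time, a query $\qr\in\classUFCQneg$ together with a policy $\distp\in\classpenum$ over universe $\{1,\dots,m\}$ using fresh relations $\Type,\StartA,\StartB,\Stop$. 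Trivial satisfiability cases for $\qr_1$ or the $\qr_2^i$ are handled in advance exactly as in Proposition~\ref{prop:paracomp-lower}.

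The combined query $\qr$ is the union of annotated copies $\tilde\qr_1,\tilde\qr_2^1,\dots,\tilde\qr_2^n$ of the inputs. Each $\tilde\qr$ is obtained by applying the operator $\alpha(\ell,\cdot)$ from Proposition~\ref{prop:red:simplify:containment} to its body (so every atom over~$\sch$ receives an initial label component) and by appending the control atoms $\atomset$ and $\atomset^\lnot$ that appear in the body of the combined query of Proposition~\ref{prop:paracomp-lower}. To respect fullness, all disjuncts share a single head relation whose arity equals the maximum number of variables occurring in any disjunct; each disjunct's head lists all of its variables, padded when necessary with repetitions of its label variable~$\ell$. The policy \distp is precisely the one from Proposition~\ref{prop:paracomp-lower}, with node classes $\nodeComp_i,\nodeSound_i,\nodeRest$ and identical responsibilities for control and schema facts.

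The correctness argument then tracks Proposition~\ref{prop:paracomp-lower} almost verbatim. If $\qr_1\subseteq\qr_2$, every global $\tilde\qr_1$-derivation is matched, using the containment, by a local $\tilde\qr_2^i$-derivation on a suitable $\nodeComp_i$-node, and symmetrically no local $\tilde\qr_2^i$-derivation lacks a global counterpart, so parallel-correctness, -soundness and -completeness hold simultaneously. Conversely, any counter-example $I^*$ to containment lifts via the label mechanism to an instance over the expanded schema on which $\tilde\qr_1$ fires globally at label~$1$ while every $\tilde\qr_2^i$ fails locally on every node of~$\nw$, thereby violating both soundness and completeness at once, and hence also correctness.

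The main obstacle, and the only substantive departure from Proposition~\ref{prop:paracomp-lower}, is the head design forced by fullness: the Boolean head of Proposition~\ref{prop:paracomp-lower} was invisible to derivations, whereas here the head atoms of $\tilde\qr_1$ and of each $\tilde\qr_2^i$ must be aligned so that corresponding derivations produce the \emph{same} output fact. I expect to resolve this by exploiting the fact that the control atoms $\StartA,\StartB,\lnot\Stop,\Type$ pin down the label and type variables to fixed values in every satisfying valuation, so that padding the head with repetitions of these variables introduces no spurious equalities among derived facts; a careful case analysis that the padded construction preserves all three equivalences will constitute the bulk of the formal write-up.
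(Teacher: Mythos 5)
There is a genuine gap: replacing the single conjunctive query of Proposition~\ref{prop:paracomp-lower} by a \emph{union} of separately annotated copies destroys the mechanism that encodes containment. In that reduction the bodies of $\qr_1$ and $\qr_2$ are \emph{conjoined} inside one \CQneg, so that any satisfying valuation must make the $\qr_2$-part true as well, and the $\StartA$/$\StartB$/$\Stop$ responsibilities force the two label variables to coincide locally; that conjunction is exactly where $\qr_1\subseteq\qr_2$ gets tested. In your union, by contrast, each disjunct is a self-contained full query whose required facts (its $\alpha(a,\cdot)$-annotated schema facts together with $\Type(b),\StartA(a),\StartB(a)$) all meet at one of the nodes $\nodeComp_a$, $\nodeSound_a$ or $\nodeRest$, and each of these nodes is also responsible for every fact the disjunct prohibits, namely $\Stop(a)$ and all of $\facts{\sch',U}$. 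Consequently every global derivation of a disjunct is reproduced verbatim by the \emph{same} valuation on such a node, and every local derivation lifts to the global instance, so the constructed query is parallel-sound, parallel-complete and parallel-correct under the stated policy for \emph{every} input pair --- already for $\qr_1:H(x)\leftarrow A(x)$ and $\qr_2:H(z)\leftarrow A(z),B(z)$, where containment fails. Your claim that a global derivation via the $\qr_1$-copy must be ``matched by a local $\qr_2^i$-derivation'' overlooks that it is always matched by itself, so the reduction never detects $\qr_1\not\subseteq\qr_2$.

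Keeping the conjunction instead would not rescue the plan: fullness then forces the variables of $\qr_2$ (and of every disjunct) into the head, and the local ``repair'' valuation $W^b_a$ supplied by containment generally disagrees with the global valuation on those variables and on $\ell_2$, hence derives a \emph{different} head fact; the Boolean head is essential to Proposition~\ref{prop:paracomp-lower}, and handling a union on the right-hand side would additionally require the $\pred{Active}$-concatenation of Proposition~\ref{prop:red:simplify:containment}. The paper's proof therefore abandons the labelling machinery altogether and uses a two-node $\Global()$-switch: for soundness it takes $\qr^\ast=\qrnotglobal\cup\qrBglobal$ with $\Global()$ assigned to an isolated node, so that the local instance at $\node_1$ effectively evaluates $\qr$ while the global instance evaluates $\qr'$; moreover it needs three \emph{different} combined queries for the three problems, whereas you assume a single construction suffices.
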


The following theorem determines the complexity for the upper bounds:

\begin{theorem}
    The following problems are $\coNP$-complete:
\begin{enumerate}
\item $\parasound(\classFUCQneg,\classpenum)$;
\item $\paracom(\classFUCQneg,\classpenum)$;
\item $\paracor(\classFUCQneg,\classpenum)$.
\end{enumerate}\label{the:parallelX_conpcomplete_full}
The result also holds for queries with inequalities.
\end{theorem}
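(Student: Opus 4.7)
The matching coNP lower bounds are already established by Proposition~\ref{the:full_parallelXreductions:new}, so the plan is to prove the three coNP upper bounds uniformly via a small counter-example property. The crucial structural observation I would exploit is that for a full UCQ with negation, once a derived fact $\fc$ and its source disjunct $\qr_i$ are fixed, the valuation $V_i$ that derives $\fc$ is uniquely determined; consequently the candidate satisfying valuations for every disjunct are fixed as soon as $\fc$ is chosen.

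For parallel-completeness, I would first argue that any counter-example $(I_0,\fc)$ can be shrunk to a counter-example $I \subseteq I_0$ of polynomial size. Writing $V_i$ for the valuation of some $\qr_i$ deriving $\fc$ globally on $I_0$, I would set $I \mydef V_i(\pos{\qr_i}) \cup B$, where $B$ is built as follows: for every pair $(\kappa,\qr_j)$ whose unique valuation $V_j$ deriving $\fc$ is consistent and satisfies $V_j(\pos{\qr_j}) \subseteq \respp(\kappa)$, inspect how $V_j$ fails on $\dist{I_0}(\kappa)$. If some positive fact of $V_j(\pos{\qr_j})$ is missing from $I_0$, add nothing (the same omission blocks the pair on $I \subseteq I_0$); otherwise pick a single $h \in V_j(\negbody{\qr_j}) \cap I_0 \cap \respp(\kappa)$ and add it to $B$. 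Because every added $h$ lies in $I_0$ and $V_i(\negbody{\qr_i}) \cap I_0 = \emptyset$, no blocker ever belongs to $V_i(\negbody{\qr_i})$, so $V_i$ continues to derive $\fc$ on $I$. The resulting instance has at most $|\pos{\qr_i}| + |\nw|\cdot n$ facts, which is polynomial because a $\classpenum$-policy has polynomially many rules and hence a polynomial network.

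The coNP verifier for the complement of parallel-completeness then guesses $\qr_i$, $\fc$ (fixing every $V_j$), and a polynomially bounded $I$, and verifies in polynomial time that $V_i$ satisfies $\qr_i$ on $I$ while, for every $\kappa \in \nw$ and every disjunct $\qr_j$, the local instance $\dist{I}(\kappa)$ blocks $V_j$. For parallel-soundness the plan is symmetric: starting from a counter-example $(I_0,\kappa,\fc)$ in which $V_i$ derives $\fc$ on $\dist{I_0}(\kappa)$ but not on $I_0$, the inclusion $V_i(\pos{\qr_i}) \subseteq I_0$ forces $V_i(\negbody{\qr_i}) \cap I_0 \neq \emptyset$, and any such witness $g$ must satisfy $g \notin \respp(\kappa)$; setting $I \mydef V_i(\pos{\qr_i}) \cup \{g\}$ and adding at most one blocker per remaining disjunct by the same recipe but at the global level yields a polynomial-size counter-example. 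Parallel-correctness failure is the disjunction of the other two, so its coNP upper bound follows at once.

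The main delicate point I anticipate is arguing that added blockers neither enable a previously blocked pair nor violate global satisfaction of $V_i$. I plan to handle both issues uniformly by selecting every added fact from $I_0$: this preserves $V_i(\negbody{\qr_i}) \cap I = \emptyset$ in the completeness case and $V_i(\negbody{\qr_i}) \cap \dist{I}(\kappa) = \emptyset$ in the soundness case, while keeping the blocking monotone with respect to the original instance. The extension to $\classUFCQnegineq$ will only require adding polynomial-time consistency checks for the inequality constraints of the fixed valuations and does not affect the argument.
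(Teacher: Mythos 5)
Your proposal is correct and follows essentially the same route as the paper's proof: it exploits fullness to fix, for each disjunct, the unique candidate valuation determined by the output fact, shrinks any counter-example to one of size at most $\ssize{\bodypos{\qr_i}}$ plus one blocking fact per disjunct (soundness) or per node--disjunct pair (completeness), and uses this polynomial witness for a \coNP verifier, with hardness inherited from the containment reduction. Your explicit check that blockers drawn from $I_0$ cannot disturb the witnessing valuation is a point the paper glosses over with ``it is easy to see,'' but the argument is the same.
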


\section{Discussion}
\label{sec:discussion}

In this paper, we continued the study of parallel-correctness initiated by Ameloot et al.~\cite{DBLP:conf/pods/AmelootGKNS15} as a framework for reasoning about one-round evaluation algorithms for conjunctive queries under arbitrary distribution policies. Specifically, we considered the case with union and negation. While parallel-correctness for unions of conjunctive queries can be tested by examining properties of single valuations, just like in the union-free case, the latter no longer holds true when negation is present. Consequently, we obtained that deciding parallel-correctness for unions of conjunctive queries remains in \pitwo, while the analog problem in the presence of negation is hard for \conexptime. Since conjunctive queries with negation are no longer monotone, we considered the related problems of parallel-completeness and parallel-soundness as well and obtained the same bounds. Interestingly, when negation is present, containment of conjunctive queries can be reduced to parallel-correctness (and its variants) allowing the transfer of lower bounds. We prove that containment for conjunctive queries with negation is hard for \conexptime, which, to the best of our knowledge, is a novel result. In an attempt to lower complexity, we show that parallel-correctness for unions of full conjunctive queries with negation is \coNP-complete. 

There are quite a number of directions towards future work. 
While parallel-correctness for first-order logic is undecidable, it would be interesting to determine the exact frontier for decidability. 
As the considered problem is a static analysis problem that relates to the size of the queries and not to the size of the instances (at least in the setting of $\classpenum$), exponential lower bounds do not necessarily 
exclude practical application. It could still be interesting to identify settings that would make parallel-correctness tractable. Possibly independent of tractability considerations, such settings could incorporate bag semantics, integrity constraints, or specific classes (and representations) of distribution policies. We also plan to consider evaluation algorithms that use knowledge about the distribution policy to compute better query results, locally. 
Another direction for future work is to investigate transferability of parallel-correctness for conjunctive queries as defined in \cite{DBLP:conf/pods/AmelootGKNS15} in the presence of union and negation.

\bibliographystyle{plain}
\bibliography{references}

\section*{Appendix}

\section{Proofs for Section~\ref{sec:containment}: Containment of \CQneg and \UCQneg}

\begin{proof}[Proof of Proposition~\ref{prop:cqneg-hard} (continued)]

It remains to show that the graph represented by circuit~$C$ is 3-colorable if 
and only if $\qr_1 \not\subseteq \qr_2$.

\IfDirection
Suppose $\qr_1 \not \subseteq \qr_2$. Thus, for some instance $I$, $T() \in \qr_1(I)$ while $T() \not \in
\qr_2(I)$. The former implies that $I$ contains a fact $\pred{DomainValues}(a_0,a_1,a_2)$, where $a_0,a_1,$ and $a_2$
are distinct values from $\dom$. Without loss of generality, we assume
$a_0=0, a_1=1, a_2=2$. The instance $I$ further contains the facts $\pred{Bool}(0)$
and $\pred{Bool}(1)$, and all \enquote{logical facts} induced by $\qr_1$. 

For every vector $\vec
n\in\{0,1\}^\ell$, there must be some $c\in\{0,1,2\}$ such
that $\pred{Label}(\vec n;c)\in I$, since otherwise $T()\in\qr^1_2$. Let
$\labelfunction:\{0,1\}^\ell\to\{0,1,2\}$ be chosen
such that,  for
every $\vec{n}\in\{0,1\}^\ell$, $\labelfunction(\vec{n}) = c$, for some $c \in \{0,1,2\}$, for which $\pred{Label}(\vec{n};c) \in I$.

We claim that  $\labelfunction$ is a valid coloring of the graph
represented by $C$. Towards a contradiction let us assume that there
are two nodes $\vec n$ and $\vec n'$ that are connected by an edge and
for which $\labelfunction(\vec
n)=\labelfunction(\vec n')=c$, for some $c$. Then $\qr^2_2$ could be satisfied over $I$
by choosing a valuation that corresponds to a computation of $C$ that
witnesses that there is an edge between $\vec n$ and $\vec n'$ and
mapping $u$ to $c$, the desired contradiction.

\OnlyIfDirection
Let, for some $\ell$, $C$ be a circuit with input length $2\ell$,
that describes a $3$-colorable graph $G$. Let
$\labelfunction:\{0,1\}^\ell\to\{0,1,2\}$ be a valid coloring for
$G$.  Let $I$ be the database with the following facts:
\begin{multline*}
  \{ \pred{Label}\big(\vec{n};\labelfunction(\vec{n})\big) \mid \vec{n} \in
\{0,1\}^\ell\} \cup \{ \pred{DomainValues}(0,1,2), \pred{Bool}(0),
\pred{Bool}(1) \}\cup\\
\{\pred{Neg}(1, 0), \pred{Neg}(0, 1), \pred{And}(0,0, 0), \pred{And}(0,1, 0), \pred{And}(1, 0,
    0), \pred{And}(1, 1, 1)\}\cup\\
\{ \pred{Or}(0, 0, 0), \pred{Or}(0, 1, 1), \pred{Or}(1,0,
1), \pred{Or}(1, 1, 1)\}.
\end{multline*}

Obviously, $T() \in \qr_1(I)$ and $T()\not\in\qr^1_2(I)$. However, since
$I$ only contains the \enquote{correct} logical facts, to
satisfy $\qr^2_2$ it would be necessary to find two nodes with the
same label whose adjacency is witnessed by the canonical valuation
corresponding to the semantics of $C$, which does not exist. Thus, $\qr_1 \not \subseteq \qr_2$. 
\end{proof}

\begin{proof}[Proof of Proposition \ref{prop:red:simplify:containment} (continued)]

    We now explain the reduction in more detail. We assume that
    $\CQneg$ $\qr_2^i$ is satisfiable, for every $i \in
    \{1,\dots,m\}$.\footnote{Notice that a \CQneg $\qr$ is satisfiable if and
    only if $\pos{\qr}\cap \negbody{\qr} = \emptyset$, which can easily be verified in polynomial time.} Set $\sch'\mydef \{\pred{R}'^{(k+1)} \mid \pred{R}^{(k)} \in \sch\} \cup
    \{\pred{Active}^{(m+2)}\}$.        
    As explained above, the relation $\pred{Active}$ serves as an index for the
    disjuncts in $\qr_2$. Whereas the first two positions encode the bits zero
    (0) and one (1), an atom of the form
    $\pred{Active}(0,1;0,\ldots,0,1,0,\ldots,0)$ with a 1 occurring on
    position $i+2$ is meant to indicate disjunct $i$. 
Recall that $\alpha(w,\qr)$ denotes the modification of the body of
$\qr$ by replacing every atom $\pred{R}(\vec{x})$ by $\pred{R}'(w, \vec{x})$,
while retaining existent negation symbols. We further define\footnote{This is
the same definition as in Section~\ref{sec:reduction:containment} and is
stated here for convenience only.} mapping $\alphainv{a}$, for $a \in \dom$, to
map sets of facts over $\sch'$ to sets of facts over $\sch$, by selecting all
facts with first parameter~$a$, deleting this parameter and replacing each
relation name $R'$ by $R$.

\noindent
Now, define $\qr_1'$ as:
    \begin{align*}
H() \leftarrow 
& \pred{Active}(w_0, w_1; w_1, w_0, \ldots, w_0),
\pred{Active}(w_0, w_1; w_0, w_1, \ldots, w_0), 
  \dots,\\
 & \pred{Active}(w_0, w_1; w_0, w_0, \ldots, w_1, w_0),
  \pred{Active}(w_0, w_1; w_0, \ldots, w_0,w_1), \\
&  \lnot\pred{Active}(w_1, w_0; w_1, w_0, \ldots, w_0), \\
& \alpha(w_1, {\qr_1}),
\alpha(w_0, {\qr_2^1}), \ldots, \alpha(w_0, {\qr_2^m});
\end{align*}
and $\qr_2'$ as:
\begin{align*}
H() \leftarrow \pred{Active}(w_0, w_1; z_1, \ldots, z_m),
\alpha(z_1, {\qr_2^1}), \ldots, \alpha(z_m, {\qr_2^m}),
\end{align*}
where $w_0, w_1, z_1, \ldots, z_m$ are fresh and distinct variables.

\medskip
\noindent
In the remainder we argue that
$\qr'_1 \subseteq \qr'_2$ if and only if $\qr_1 \subseteq
\qr_2$:

\IfDirection
Suppose $\qr_1 \subseteq \qr_2$. 
Let $I'$ be an arbitrary instance over $\sch'$ for which $H() \in \qr_1'(I)$. 
Let $V_1'$ be a satisfying  valuation for $\qr_1'$ over $I'$. 
Since
$V'_1$ has to satisfy both literals $\pred{Active}(w_0, w_1; w_1, w_0, \ldots,
w_0)$ and $\lnot\pred{Active}(w_1, w_0; w_1, w_0, \ldots, w_0)$, it
follows that $V'_1(w_1)\not=V'_1(w_0)$. For simplicity, we assume that
$V'_1(w_1)=1$ and $V'_1(w_0)=0$.

 Thus, $I'$ contains the facts $\pred{Active}(0,1;1,0,\ldots, 0)$, \ldots, 
$\pred{Active}(0,1;0,\ldots, 0, 1)$, but not $\pred{Active}(1,0;1,0,\ldots, 0)$.
Furthermore, for every $i$,  $V_1'\big(\alpha(w_0, {\qr_2^i})\big)$ consists of
facts from $I'$ labelled with $0$, and for $V_1'\big(\alpha(w_1, {\qr_1})\big)$
consists of facts from $I'$ labelled with $1$. 

Now, let  $I$ be the set $\alphainv{1}(I') = \{ R(\tup{t}) \mid R'(1, \tup{t})
\in I'\}$ consisting of all facts in
$I'$ labelled with $1$. By construction and since $I'$ contains all
facts from $V_1'\big(\alpha(w_1, {\qr_1})\big)$, $I$ contains all facts from
$V_1'(\qr_1)$. Note that $V_1'$ is a valuation for $\qr_1$ as well.
In fact, $V'_1$ is a satisfying valuation for $\qr_1$
over $I$. Consequently, thanks to $\qr_1 \subseteq \qr_2$, there must be a satisfying valuation $V_2$ for
some disjunct $\qr_2^i$ of $\qr_2$ over $I$. 

We define a satisfying valuation $V_2'$ for $\qr_2'$ over $I'$
witnessing $\qr'_1 \subseteq \qr'_2$ as follows: 
\[
V_2'(x) =
\begin{cases}
  0 & \text{if $x=w_0$},\\
  1 & \text{if $x=w_1$},\\
  1 & \text{if $x=z_i$},\\
  0 & \text{if $x=z_j$ and $j\not=i$},\\
  V_2(x) & \text{if $x$ occurs in $\qr^i_2$},\\
  V'_1(x) & \text{if $x$ occurs in $\qr^j_2$ and $j\not=i$}.\\
\end{cases}
\]
It is easy to verify that $V'_2$ satisfies $\qr'_2$: the first atom
and all conjuncts $\alpha(z_j, {\qr_2^j})$ with $j\not=i$ become true since the respective facts were guaranteed by
$\qr'_1$. Finally, $\alpha(z_i, {\qr_2^i})$ becomes true since $V_2$
satisfies $\qr_2^i$.

\OnlyIfDirection
The proof is by contraposition, that is, we show that $\qr_1
\not\subseteq \qr_2$ implies $\qr'_1 \not\subseteq \qr'_2$.
Therefore, let $I$ be an instance, and $V_1$ a valuation such that
$V_1$ satisfies $\qr_1$ over $I$, but no $\qr^i_2$ has a satisfying
valuation over $I$. Since every query $\qr^i_2$ is satisfiable, there is,
for every $i$, a satisfying valuation $V^i$. In fact, these valuations
can be chosen with pairwise disjoint range. Now, we define $I'$ as the
following set of facts:
\begin{displaymath}
	\begin{array}{lcl}
		I'
  		& = &
   		\alpha(1,I) \cup 
   		\alpha\big(0,V^1(\pos{\qr^1_2})\big) \cup 
   		\dots \cup
   		\alpha\big(0,V^m(\pos{\qr^m_2})\big) \\
   		& \cup &
   		\{
   			\Active(0,1;1,0,\dots,0),
   			\dots,
   			\Active(0,1;0,\dots,0,1)
   		\}.
	\end{array}
\end{displaymath}

It is easy to check\footnote{This can be proven along the lines of
Claim~\ref{clm:induced-val}.} that from $V_1$ and the $V^i$ a satisfying
valuation for $\qr'_1$ over $I'$ can be constructed.
On the other hand, any
satisfying valuation of $\qr'_2$ over $I'$ would require to use facts
from $\alpha(1,I)$ for at least one
$\alpha(z_i, {\qr_2^i})$ and would thus induce a valuation of
$\qr^i_2$ over $I$, the desired contradiction.
\end{proof}

\section{Proofs for Section~\ref{sec:negation}: Parallel-correctness: unions of conjunctive queries with negation}

\begin{proof}[Proof of Proposition~\ref{prop:paracomp-upper} (continued)]
It only remains to describe the algorithm that tests the complement of parallel
completeness non-deterministically. On
input $\qr$ and $\distp$ (specified by $(n,T)\in\classpnondet^k$), the algorithm simply guesses an instance $J$
over a domain with at most $\varmax{\qr}$ values from $\dom_n$, and verifies that $J$ is a counter-example
showing that \qr is not parallel-complete under \distp. From
Claim~\ref{claim:smallmodel} it follows that this algorithm is
correct, since a counter-example must exist if  \qr is not
parallel-complete under \distp, and the actual data values do not matter.
It remains to show the complexity bounds and, in particular, to
describe how the verification part can  done.

In the general case, without a bound on the arity of the schema, the
verification is done as follows. The algorithm guesses a valuation $V$
that produces some fact \fc globally and which is not derived at any
node. To test that \fc is not derived at any node, the algorithm
cycles through all nodes and all valuations $V$ over $\dom_n$. The
number of combinations is bounded\footnote{We assume a binary
  alphabet, here.} by
$2^n\times(2^n)^{\varmax{\qr}}=2^{n(\varmax{\qr}+1)}$. Each test can
be performed by a simulation of all runs of $T$ which amounts to at
most $2^{n^k}$ simulations of at most $n^k$ steps each. Altogether the
algorithm needs time at most $2^{|(\qr,\distp)|^{k+2}}$.

If there is a fixed bound $\ell$ on the arity of the underlying schema
then the maximum size of the minimal counter-example becomes
polynomial and the test that \fc is derived globally can be done
non-deterministically in polynomial time and the test that it is not
derived locally can be done universally in polynomial time, thus
altogether yielding a \pitwo-computation.

The case of parallel-soundness is completely analogous (using the
second statement of Claim~\ref{claim:smallmodel} and the case of
parallel-correctness follows since it suffices to test parallel
completeness and soundness).
\end{proof}

\begin{proof}[Proof of Proposition~\ref{prop:paracomp-lower}
  (continued)]
 It remains to show that the described function is a reduction
from $\containment(\classBCQneg,\classBCQneg)$ to all of $\paracom(\classCQneg,\classpenum)$,  $\parasound(\classCQneg,\classpenum)$,
         and $\paracor(\classCQneg,\classpenum)$.

	To this end, we show first that containment $\qr_1 \subseteq
        \qr_2$ implies parallel-completeness and  
	parallel-soundness of query~$\qr$ under policy~$\distp$, and
        that lack of containment, $\qr_1 \not\subseteq \qr_2$, implies
	that query~$\qr$ is neither parallel-complete nor parallel-sound under
	policy~$\distp$. 

In both directions, we will make use of the following easy observations.

	\begin{claim}
		\label{clm:induced-val}
		Let $I$ be an arbitrary instance and $i \in \{1,2\}$.
		\begin{enumerate}
			\item Let $V$ be a valuation for $\qr$, let
                          $a\mydef V(\ell_i)$ and $V_i$ be
			the restriction of valuation~$V$ to variables in $\qr_i$. If $V$ satisfies
			$\qr$ on $I$, then $V_i$ satisfies $\qr_i$ on
			~$\alphainv{a}(I)$.
			
			\item Let $V_1,V_2$ be valuations for queries
                          $\qr_1,\qr_2$. Let $a,b$ be data values such
                          that $V_1$ and $V_2$ satisfy $\qr_1$ and
                          $\qr_2$, respectively, on $\alphainv{a}(I)$,
                          and such that
                          $\Type(b),\StartA(a),\StartB(a) \in I$, and
                          $\Stop(a) \notin I$. 
Then the valuation
			$W^b_a$, that agrees with $V_1$ and $V_2$ on all variables in
			$\qr_1$ and $\qr_2$, respectively, and maps $\ell_1,\ell_2 \mapsto a$,
			and $t \mapsto b$, satisfies $\qr$ on
			$I$.
		\end{enumerate}
	\end{claim}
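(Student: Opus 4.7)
The plan is to prove both parts of Claim~\ref{clm:induced-val} by direct verification, unpacking the definitions of $\alpha$, $\alphainv{a}$, and the construction of $\body{\qr}$. The two statements are essentially bookkeeping: the query $\qr$ is engineered so that satisfaction at ``label'' $a$ of the $\qr_i$-fragment of $\body{\qr}$ corresponds exactly to satisfaction of $\qr_i$ on $\alphainv{a}(I)$.

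For part~(1), I would pick an arbitrary literal in $\body{\qr_i}$ --- say a positive atom $R(\vec{x})$; the negated case is analogous. By construction the literal $R'(\ell_i,\vec{x})$ (with the same polarity) belongs to $\body{\qr}$. Since $V$ satisfies $\qr$ on $I$ and $V(\ell_i)=a$, the fact $R'(a,V_i(\vec{x}))$ lies in $I$, which by the definition of $\alphainv{a}$ means $V_i(R(\vec{x}))\in\alphainv{a}(I)$. The same argument applied to every (positive and negated) literal of $\body{\qr_i}$ shows that $V_i$ satisfies $\qr_i$ on $\alphainv{a}(I)$; a short preliminary check using the same correspondence confirms that $V_i$ is consistent (i.e.\ a valuation) whenever $V$ is.

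For part~(2), I would walk through each conjunct of $\body{\qr}$ and check that $W^b_a$ satisfies it on $I$. Literals in $\alpha(\ell_1,\body{\qr_1})$ have the form (possibly negated) $R'(\ell_1,\vec{x})$ with $R(\vec{x})\in\body{\qr_1}$; since $W^b_a(\ell_1)=a$ and $W^b_a$ agrees with $V_1$ on $\vars{\qr_1}$, satisfaction on $I$ reduces via $\alphainv{a}$ to satisfaction of $R(V_1(\vec{x}))$ on $\alphainv{a}(I)$, which holds because $V_1$ satisfies $\qr_1$ there; the literals from $\alpha(\ell_2,\body{\qr_2})$ are handled symmetrically. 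The positive literals in $\atomset$ evaluate to $\Type(b),\StartA(a),\StartB(a)\in I$, which are precisely the assumed membership conditions, and the negated literals in $\atomset^\lnot$ require only $\Stop(a)\notin I$, again by hypothesis.

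I do not expect a genuine mathematical obstacle, since both implications are mechanical translations between literals of $\qr$ over $I$ and literals of $\qr_i$ over $\alphainv{a}(I)$. The one subtlety to spell out is well-definedness and consistency: for part~(2), that $W^b_a$ is a well-defined function on $\vars{\qr}$ (which holds because $\vars{\qr_1}\cap\vars{\qr_2}=\emptyset$ and neither set contains $\ell_1,\ell_2,t$), and that it is consistent --- any would-be clash between $\alpha(\ell_1,\cdot)$ and $\alpha(\ell_2,\cdot)$ at label $a$ is ruled out by the fact that $V_1$ and $V_2$ simultaneously satisfy their respective queries on the same instance $\alphainv{a}(I)$, so one cannot require a fact that the other prohibits.
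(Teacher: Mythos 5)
Your proof is correct, and it fills in exactly the mechanical verification that the paper leaves implicit (the claim is stated there as an ``easy observation'' without proof). The literal-by-literal translation via $\alpha$ and $\alphainv{a}$, together with your explicit checks of well-definedness and consistency of $W^b_a$, is the intended argument.
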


\medskip
\noindent
	Let us now assume $\qr_1 \subseteq \qr_2$. To show that $\qr$ is
	parallel-complete under $\distp$, let $V$ be a valuation that
	globally satisfies $\qr$ on some arbitrary instance~$I$ over $U$. Let $a \mydef
	V(\ell_1)$ and $b \mydef V(t)$. Satisfaction of $\qr$ on $I$ by $V$ then
	particularly implies $\Type(b),\StartA(a),\StartB(a) \in I$ and $\Stop(a)
	\notin I$.
	Let $V_1$ be the satisfying valuation for $\qr_1$ on instance
        $\alphainv{a}(I)$, as given by Claim~\ref{clm:induced-val}.1. Since $\qr_1 \subseteq \qr_2$, there exists a valuation~$V_2$ that
	satisfies $\qr_2$ on instance $\delayer{a}(I)$. By
	Claim~\ref{clm:induced-val}.2, valuation~$W\mydef W^b_a$ satisfies $\qr$ on
	$\pi_a(I) \cup \{\Type(b),\StartA(a),\StartB(a)\}$.
	If $W(t)=1$, then node~$\nodeComp_a$ is responsible for these facts; if
	$W(t)=2$, then node~$\nodeSound_a$ is responsible for these facts; and
	otherwise node~$\nodeRest$ is responsible for these facts. Hence, query~$\qr$
	is parallel-complete under policy~$\distp$.
	
The proof that $\qr$ is parallel-sound under $\distp$ is similar. To
this end,
	let $I$ be a global instance and  $V$ be a valuation that satisfies $\qr$ on
	the local instance~$I_\node$ of some node $\node \in \nw$. Let $a \mydef
	V(\ell_1)$ and $b \mydef V(t)$. By definition of $\distp$, we can infer
	$\Type(b),\StartA(a),\StartB(a) \in I$ and $\Stop(a) \notin I$ from
	satisfaction of $\qr$ by $V$. 	Let $V_1$ be the satisfying valuation for $\qr_1$ on instance
        $\alphainv{a}(I_\node)$, as given by Claim~\ref{clm:induced-val}.1. 
	Since
	$\qr_1 \subseteq \qr_2$, there exists a valuation~$V_2$ that satisfies $\qr_2$
	on instance $\delayer{a}(I_\node)$. By definition
	of $\distp$, we have $\delayer{a}(I_\node)=\delayer{a}(I)$,
        and therefore $V_2$ satisfies $\qr_2$
	on $\delayer{a}(I)$. Thus, by
	Claim~\ref{clm:induced-val}.2, valuation~$W\mydef W^b_a$ satisfies $\qr$ on
	$I$. Hence, query~$\qr$ is parallel-sound under
	policy~$\distp$.
	
	\medskip
	\noindent
	Let now $\qr_1 \not\subseteq \qr_2$ and let $I_1$ be an instance such that $\qr_1(I_1) \neq
	\emptyset$ and $\qr_2(I_1) = \emptyset$. Thanks to
        Lemma~\ref{lem:reduction_helper}, we can assume that $I_1$ is
        over a domain of size at most $\varmax{\qr_1}$. Thanks to
        genericity, we can assume that the domain is a subset of $U$.
Let $V_1$ be a valuation that
	satisfies query~$\qr_1$ on $I_1$. Furthermore, let $V_2$ be some consistent
	valuation for $\qr_2$ and $I_2 \mydef V_2(\bodypos{\qr_2})$,
        which exists thanks to our assumption that $\qr_2$ is satisfiable.
	
	To show that $\qr$ is not parallel-complete under $\distp$, we
	define $I \mydef \layer{1}(I_1) \cup \layer{2}(I_2) \cup
	\{\Type(1),\StartA(1),\StartB(1),\StartB(2)\}$. Then, valuation $V$ which maps
	all variables in $\qr_1$ and $\qr_2$ as $V_1$ and $V_2$, respectively, and
	$\ell_1 \mapsto 1$, $\ell_2 \mapsto 2$, and $t \mapsto 1$ satisfies query~$\qr$
	on instance~$I$.
	However, there is no locally satisfying valuation for $\qr$. For a
	contradiction, assume existence of such a valuation~$W$. Since $\Type(1)$
	and $\StartA(1)$ are the only $\Type$- and $\StartA$-facts contained in the
	global instance, we have $W(t)=W(\ell_1)=1$. By definition of $\distp$, this
	valuation can only be satisfying on node~$\nodeComp_1$. Since only $\StartB(1) \in
	\dist{I}(\nodeComp_1)$, Claim~\ref{clm:induced-val}.1 implies existence of a
	satisfying valuation $W_2$ for $\qr_2$ on $\delayer{1}(I) = I_1$, which
	contradicts the choice of instance~$I_1$. Hence, query~$\qr$ is not
	parallel-complete under policy~$\distp$.
	
	To show that $\qr$ is also not parallel-sound under $\distp$,
	let instance $I \mydef \layer{1}(I_1) \cup \layer{2}(I_2) \cup \{\Type(2),
	\StartA(1), \StartB(1), \StartB(2), \Stop(2)\}$. Then, valuation $V$ which maps
	all variables in $\qr_1$ and $\qr_2$ as $V_1$ and $V_2$, respectively, and
	$\ell_1 \mapsto 1$, $\ell_2 \mapsto 2$, and $t \mapsto 2$ satisfies
	query~$\qr$ on the local instance $\dist{I}(\nodeSound_1)$ of
	node~$\nodeSound_1$ because this node is not responsible for fact $\Stop(2)$.
	However, there is no globally satisfying valuation for $\qr$. Towards a
	contradiction, assume existence of such a valuation~$W$. Since
	$\StartB(1),\StartB(2)$ are the only $\StartB$-facts in the global instance, we
	have $W(\ell_2)=1$ or $W(\ell_2)=2$. The latter cannot hold because the
	valuation then prohibits the present fact $\Stop(2)$. The former implies, by
	Claim~\ref{clm:induced-val}.1, a satisfying valuation $W_2$ for $\qr_2$ on
	$\delayer{1}(I)=I_1$, which contradicts the choice of instance~$I_1$. Hence,
	query~$\qr$ is not parallel-sound under policy~$\distp$.

This completes the proof that problem
$\containment(\classBCQneg,\classBCQneg)$ is reducible to all three problems
$\paracom(\classCQneg,\classpenum)$,  $\parasound(\classCQneg,\classpenum)$, and
$\paracor(\classCQneg,\classpenum)$ in polynomial time and thus shows
$\coNEXPTIME$-hardness via Corollary~\ref{coro:cqneg-hard}.
\end{proof}

\section{Proofs for Section~\ref{sec:full}: Full conjunctive queries}

\newcommand{\subQcollapse}{\EM{\qr^{\forall h}}}
\newcommand{\subQeqineq}{\EM{\qr^{\pred{Neq}}}}
\newcommand{\subQfour}{\EM{\qr^{>3}}}

\newcommand{\subQwronginequality}{\EM{\qr^{\pred{Neq}}_{\text{equal}}}}
\newcommand{\subQinconsistent}{\EM{\qr^{\pred{Eq},\pred{Neq}}_{\text{amb}}}}
\newcommand{\subQeqasymmetry}{\EM{\qr^{\pred{Neq}}_{\text{asym}}}}
\newcommand{\subQineqasymmetry}{\EM{\qr^{\pred{Eq}}_{\text{asym}}}}
\newcommand{\subQincomplete}{\EM{\qr^{\pred{Eq},\pred{Neq}}_{\text{undef}}}}

In the proofs below we drop the convention that in \UCQ{}s all disjuncts have pairwise
disjoint variable sets (which was introduced to simplify the proofs in other sections). However, one can easily observe that a \UCQ that does not comply with
this convention can be easily transformed to one that does, by for example, adding
distinct indices to variables in separate disjuncts.

\subsection{Proof for Theorem~\ref{theo:full_containment_new}}

Theorem~\ref{theo:full_containment_new} follows from the hardness results in
Proposition~\ref{pro:full_containment_hardness}, and the upper bound in
Proposition~\ref{prop:containment_conpcomplete}, which are given below.

\begin{proposition}
	\label{the:full_containment_hardness}\label{pro:full_containment_hardness}
    \begin{enumerate}
        \item $\text{\sc Containment}(\classFCQnegineq, \classFCQnegineq)$ is in $\Poly$;
        \item 
    $\text{\sc Containment}(\classFCQnegineq, \classFUCQnegineq)$ is \coNP-hard.
    \end{enumerate}
\end{proposition}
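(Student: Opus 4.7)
The plan is to handle the two parts of Proposition~\ref{pro:full_containment_hardness} by distinct techniques: a direct polynomial-time structural test for~(1), and a polynomial-time reduction from 3-SAT for~(2).

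For part~(1), I would prove a Chandra--Merlin-style characterisation tailored to fullness: $\qr_1 \subseteq \qr_2$ iff there exists a substitution $\theta : \vars{\qr_2} \to \vars{\qr_1}$ with (a) $\theta(\head{\qr_2}) = \head{\qr_1}$ positionally, (b) $\theta(\pos{\qr_2}) \subseteq \pos{\qr_1}$, (c) $\theta(\negbody{\qr_2}) \subseteq \negbody{\qr_1}$, and (d) for every $s \ne s' \in \ineq{\qr_2}$ the pair $\theta(s), \theta(s')$ is forced distinct by $\qr_1$ (i.e., no satisfying valuation of $\qr_1$ equates them, which is itself polynomial-time checkable from the positive/negated atoms and inequalities of $\qr_1$). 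Fullness is crucial: since $\vars{\qr_i}$ coincides with the variables of $\head{\qr_i}$, head matching pins $\theta$ down to polynomially many candidates (even accounting for repeated head variables). The direction ``$\Leftarrow$'' is immediate by composition: for any satisfying valuation $V_1$ of $\qr_1$ on an instance~$I$, the valuation $V_2 \mydef V_1 \circ \theta$ satisfies $\qr_2$ on $I$ and derives the same head fact. The direction ``$\Rightarrow$'' proceeds by contraposition: if no admissible $\theta$ exists, I construct a counter-example instance over $\vars{\qr_1}$ (with $V_1$ the identity) from $V_1(\pos{\qr_1})$, augmented by carefully chosen facts outside $V_1(\negbody{\qr_1})$, so that each candidate substitution is ruled out while $V_1$ remains satisfying for $\qr_1$.

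For part~(2), I would reduce from 3-SAT, whose complement is \coNP-hard. Given a 3-CNF $\phi = C_1 \wedge \cdots \wedge C_m$ over variables $v_1, \ldots, v_n$, construct $\qr_1 \in \classFCQnegineq$ and $\qr_2 = \bigcup_{j=1}^m \qr_2^j \in \classFUCQnegineq$ over a schema containing a unary $\pred{Bool}$ and a binary distinctness-witness $\pred{Neq}$. All queries share the head $H(t, f, x_1, \ldots, x_n)$, where $t, f$ encode the two truth values and $x_i$ the value assigned to $v_i$. The body of $\qr_1$ uses $\pred{Neq}(t,f)$ together with $\pred{Bool}(t), \pred{Bool}(f), \pred{Bool}(x_1), \ldots, \pred{Bool}(x_n)$ to force $t \ne f$ and each $x_i$ to coincide with $t$ or $f$, so any satisfying valuation encodes a truth assignment. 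Each $\qr_2^j$ has the same head and a body firing precisely when clause $C_j$ is falsified: for each positive literal $v_i$ of $C_j$ the body positively pins $x_i$ to $f$, and for each negative literal $\neg v_i$ it pins $x_i$ to $t$, implemented via auxiliary atoms tying head variables positionally to the clause structure. Then $\qr_1 \not\subseteq \qr_2$ iff some head tuple encodes an assignment satisfying every clause, i.e., iff $\phi$ is satisfiable; equivalently, $\qr_1 \subseteq \qr_2$ iff $\phi$ is unsatisfiable.

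The main obstacle lies in part~(2): fullness forbids the existential witness variables usually employed in SAT reductions (``some literal of $C_j$ is false''). The construction must therefore carry the complete assignment tuple in the shared head and encode clause falsification positionally rather than via a quantified literal choice. A secondary subtlety in part~(1) is the joint treatment in the ``$\Rightarrow$'' direction of multiple candidate substitutions that may fail for distinct reasons (positive body, negated body, or inequality violations); merging these violations into a single counter-example instance is sound because adding facts outside $V_1(\negbody{\qr_1})$ preserves $\qr_1$'s satisfaction.
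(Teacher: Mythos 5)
Your part~(1) is essentially the paper's argument: the same homomorphism criterion ($\theta(\head{\qr_2})=\head{\qr_1}$, $\theta(\pos{\qr_2})\subseteq\pos{\qr_1}$, $\theta(\negbody{\qr_2})\subseteq\negbody{\qr_1}$), with composition for one direction. The paper proves the other direction \emph{directly} by evaluating $\qr_1\subseteq\qr_2$ on two canonical instances --- the minimal one $I^-=\pos{\qr_1}$ (frozen) and the maximal one $I^+$ of all frozen atoms outside $\negbody{\qr_1}$ --- and observing that fullness forces the \emph{same, unique} valuation to derive the head on both, so no case analysis over candidate substitutions or merging of counter-examples is needed; note that head matching determines at most one $\theta$, not ``polynomially many.'' Your contrapositive construction can be made to work but is strictly harder to get right than the paper's two-instance trick, because a fact you must \emph{add} to block one failure mode (a non-prohibited negated atom) can be a fact you must \emph{omit} to block another (a missing positive atom).

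Part~(2) has a genuine gap. The paper reduces from non-$3$-colorability; you reduce from unsatisfiability with the same architecture (the full head carries the entire candidate solution, and each disjunct of $\qr_2$ detects a defect), but your gadget for confining the assignment values fails: the atom $\pred{Bool}(x_i)$ only places $x_i$ in the unary relation $\pred{Bool}$, it does not force $x_i\in\{t,f\}$. On an adversarial instance where $\pred{Bool}$ contains a third value $c$, the valuation sending some $x_i$ to $c$ satisfies $\qr_1$, yet no clause disjunct fires (since $x_i$ equals neither $t$ nor $f$), so $\qr_1\not\subseteq\qr_2$ even when $\phi$ is unsatisfiable. The paper faces exactly this issue (a coloring using a fourth color) and resolves it with the extra disjuncts $\subQfour$ built from inequalities; you would need the analogous disjuncts asserting $x_i\neq t,\ x_i\neq f$ for each $i$, which is legitimate here since the target class is $\classFUCQnegineq$. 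A second, smaller problem: ``pinning $x_i$ to $f$'' must be implemented by \emph{identifying head variables}, i.e.\ applying a collision-revealing substitution to both the head and the body of $\qr_1$ (as the paper does), not by ``auxiliary atoms'': any extra positive body atom in a disjunct of $\qr_2$ is a fact a counter-example instance can simply omit, silencing that disjunct and breaking the only-if direction.
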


\begin{proof}
    \textbf{1.} 
     Let $\qr_1, \qr_2 \in \classFCQnegineq$. We show that  $\qr_1 \subseteq \qr_2$
     if and only if $h(\pos{\qr_2}) \subseteq \pos{\qr_1}$ and
     $h(\bodyneg{\qr_2}) \subseteq \bodyneg{\qr_1}$ for the substitution $h$
     that identifies the head relation,\footnote{In particular, containment
     implies existence of such a substitution despite possible multiple
     occurences of the same variable in $\head{\qr_2}$, as the following proof
     shows.
 }
 $h(\head{\qr_2}) =
     \head{\qr_1}$.
     It then immediately follows that
     $\containment(\classFCQnegineq, \classFCQnegineq)$ is in $\Poly$;

To show the claim, let $\qr_1\subseteq \qr_2$. 
Let $I^{-}$ be the minimal canonical database for $\qr_1$, i.e., $I^{-}$ consists of the
frozen atoms in $\pos{\qr_1}$. By $I^{+}$ we denote the maximal canonical database for
$\qr_1$, i.e., $I^{+}$ contains every frozen atom over $\vars{\qr}$ (and the relations where
$\qr_1$ and $\qr_2$ are defined over) that is not in $\negbody{\qr_1}$. 
By construction, 
$\head{\qr_1} \in \qr_1(I^{-})$ and $\head{\qr_1} \in \qr_1(I^{+})$, which implies by
containment that $\head{\qr_1} \in \qr_2(I^{-})$ and $\head{\qr_1} \in \qr_2(I^{+})$. By
fullness of $\qr_2$, both are derived by the same valuation $V$. 
Now, the former implies $V(\pos{\qr_2}) \subseteq I^{-} =
\pos{\qr_1}$, and the latter implies $V(\negbody{\qr_2})\cap I^+ = \emptyset$. Thus,
$V(\negbody{\qr_2}) \subseteq \negbody{\qr_1}$. Hence, $V$ describes the desired substitution.

For the other direction, suppose that substitution $h$ has the desired properties.
Let $I$ be an arbitrary instance, and $\fc \in \qr_1(I)$. Thus, there is a valuation
$V_1$, where $V_1(\head{\qr_1}) = \fc$, $V_1(\pos{\qr_1}) \subseteq I$, and
$V_1(\negbody{\qr_1}) \cap I =
\emptyset$. Let $V_2 \mydef V_1\circ h$. Then, 
$V_2(\head{\qr_2}) = V_1(\head{\qr_1}) = \fc$, $V_2(\pos{\qr_2}) \subseteq V_1(\pos{\qr_1})
\subseteq I$, and $V_2(\negbody{\qr_2}) \subseteq V_1(\negbody{\qr_1})$.
The latter implies $V_2(\negbody{\qr_2}) \cap I = \emptyset$, and thus $\fc \in \qr_2(I)$.


    \medskip \noindent \textbf{2.} 
    The reduction is from the graph $3$-colorability problem, which is well-known to be
    \NP-complete, and asks for a given graph $G$ whether there is a coloring of the nodes
    in $G$, using only $3$ colors, such that adjacent nodes have distinct colors.  Let $G$
    be an arbitrary input for the described problem. 

    We construct queries $\qr_1\in\classFCQnegineq$ and $\qr_2 \in \classFUCQnegineq$, and show 
    $\qr_1 \subseteq \qr_2$ if and only if $G$ is \emph{not} $3$-colorable.
    Intuitively, this is done by letting $\qr_1$ derive tuples that each represent a
    complete labelling function for (a substitution of) graph $G$. Semantically,
    query $\qr_2$ is very similar to 
    $\qr_1$, but derives only invalid labelling functions, that is, which either
    give two adjacent nodes the same color, or use more than three colors. 
    The latter is implemented as a union of queries where each disjunct
    detects a particular issue.

    Queries $\qr_1$ and $\qr_2$ are defined over database schema $\sch \mydef \{
        \pred{E}^{(2)}, \pred{L}^{(2)}\}$, where $\pred{E}$ represents the edges of $G$,
        and $\pred{L}$ is used to model mappings from nodes in $G$ onto colors.

Before going to the construction itself, we first define $\text{\sc edges}$ as the set of
atoms describing the edges of $G$. We do this by taking for every node in $G$ a unique variable in $\uvar$,
and adding the atom $\pred{E}(x,y)$ to $\text{\sc edges}$ if and only if the nodes
represented by $x$ and $y$ are adjacent in $G$.
Second, we define the set $\text{\sc labels}$, consisting of atoms $\pred{L}(x, y_x)$
for every node-representing variable $x$, where $y_x$ denotes a fresh variable. We call
$y_x$ a color-representing variable.

Henceforth, we denote by $\vec{x}$ the node-representing variables, and by $\vec{y}$ the
color-representing variables. Both are assumed to have a fixed order.

Now, we define query $\qr_1$ as:
\begin{align*}
    H(\vec{x}, \vec{y}) \leftarrow \text{\sc edges},\text{\sc labels}.
\end{align*}
Notice that, on given instance $I$, $\qr_1$ outputs every possible labelling
function for substitutions of graph $G$ representable with the facts in $I$.

Query $\qr_2$ is slightly more complex, therefore we chop $\qr_2$ down in 
two semantically-meaningful $\FUCQneg$s: 
\subQcollapse, which detects whether a described labelling function assigns the
same color to some pair of adjacent nodes; and \subQfour, which detects whether a
described labelling function uses more than three colors. 

We describe the sketched queries in more detail.
To this end, let $x_1, x_2 \in \vec{x}$ be two distinct variables, representing adjacent
nodes in $G$. Based on $x_1$ and $x_2$, we can define a substitution $h$
mapping all the variables of $\vec{x}$ and $\vec{y}$ onto themselves, except for $y_{x_1}$ and $y_{x_2}$, which are mapped onto a fresh variable $y$.
We call $h$ a \emph{collision-revealing substitution} for $\vec{x}$ and $\vec{y}$.

Now $\subQcollapse$ is the query defined as the union of queries:
\begin{align*}
    h(\head{\qr_1}) \leftarrow h(\body{\qr_1}),
\end{align*}
for all collision-revealing substitutions $h$ for $\vec{x}$ and $\vec{y}$ as described
above.

Query $\subQcollapse$ outputs exactly those labelling functions outputted by
$\qr_1$ that assign a same color to at least some pair of adjacent nodes.

Query $\subQfour$ is defined as the union of queries:
\begin{align*}
    \head{\qr_1} \leftarrow \body{\qr_1}, \forall_{i<j} y_i \ne  y_j,
\end{align*}
for every combination of distinct variables $y_1, y_2, y_3, y_4$ in $\vec{y}$.

As the construction of the individual $\FCQneg$ can be done in polynomial time in the
size of $G$, and there are only $n^2$ many collision-revealing substitutions, the construction
of $\qr_1$ and $\qr_2$ can be done in polynomial time in the size of $G$. 

\smallskip
	\noindent
	\emph{Correctness.}
We show that $\qr_1 \not\subseteq \qr_2$ if
and only if $G$ is $3$-colorable.

\IfDirection
Suppose $G$ is $3$-colorable. So, there is a labelling function $\ell$ mapping the nodes
in $G$ onto $3$-colors, such that no two adjacent nodes are assigned the same color.
We abuse notation and assume nodes in $G$, as well as the colors in the image of
$\ell$, are over $\dom$. 
Let $I \mydef \{ \pred{E}(n_1,n_2) \mid \text{$(n_1,n_2)$ is an edge in $G$}\} 
\cup \{ \pred{L}\big(n,\ell(n)\big) \mid \text{$n$ is a node in $G$}\}$.

Now, let $V_1$ be the valuation mapping the node-representing variables in $\qr_1$ onto the nodes of $G$ that they
represent, and the color-representing variables onto colors as defined by $\ell$. Obviously,
$V_1$ satisfies on $I$ and derives a fact $\fc$. 

By choice of $\ell$, $\fc \not \in \subQcollapse(I)$, because none of its
adjacent nodes are labelled the same color. 
Further, $\ell$ introduces only three colors, and thus $\fc \not \in \subQfour(I)$, which
implies $\fc \not \in \qr_2(I)$.
Hence, $\qr_1 \not\subseteq \qr_2$.

\OnlyIfDirection
Suppose $\qr_1\not\subseteq \qr_2$. Thus, there is an instance $I$ and fact $\fc$, where
$\fc \in \qr_1(I)$, and $\fc\not \in\qr_2(I)$.  Let $V_1$ be the valuation for
$\qr_1$ on $I$.

The former implies that $I$
contains an interpretation of the described graph. Notice that 
$\fc$ does not necessarily describe $G$ itself, but rather a substitution of $G$ under
which some nodes might have been collapsed.
We consider the labelling function $\ell$, defined as $\ell(n) \mydef V_1(x)$,
for every $x \in \vec{x}$, where $x$ is the variable representing node $n$ in
$G$.

By fullness of the considered queries, for each of the disjuncts of $\qr_2$, there is only
one valuation that can possibly derive $\fc$, and $\fc$ uniquely defines this valuation
for the respective disjunct.
Therefore, from $\fc \not \in \subQcollapse(I)$ we directly obtain 
that for every pair of adjacent nodes $n_1,n_2$ in $G$: $\ell(n_1) \ne \ell(n_2)$.
So, $\ell$ describes a valid labelling function for $G$. 
It remains to argue that $\ell$ does not use more than three colors.
The latter follows from $\fc \not \in \subQfour(I)$, 
and thus the substitution of $G$ described by $\fc$ must be $3$-colorable.
From this it immediately follows that $G$ must be $3$-colorable as well.
\end{proof}

We next show that the problem remains \coNP-hard, even for queries
without inequalities.

\begin{lemma}
   $\containment(\classFCQnegineq,
    \classFUCQnegineq)\le_p\containment(\classFCQneg,
    \classFUCQneg)$.
\end{lemma}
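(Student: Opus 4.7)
The plan is to simulate each inequality by an atom on a fresh binary relation. Given the input queries $\qr_1 \in \classFCQnegineq$ and $\qr_2 \in \classFUCQnegineq$ over schema $\sch$, I would introduce a new binary relation $\pred{Neq}$, and define the output queries over $\sch \cup \{\pred{Neq}\}$ as follows. To obtain $\qr_1'$, I would replace every inequality $x \neq y$ of $\qr_1$ by the positive atom $\pred{Neq}(x, y)$, and, for every variable $z$ appearing in $\qr_1$, add the negated atom $\lnot \pred{Neq}(z, z)$. The combination of a positive $\pred{Neq}(x, y)$ with the negated atoms $\lnot \pred{Neq}(x, x)$ and $\lnot \pred{Neq}(y, y)$ forces any satisfying valuation $V$ to map $x$ and $y$ to distinct values, since otherwise $\pred{Neq}(V(x), V(x))$ would need to be simultaneously present and absent in the instance. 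I would construct each disjunct of $\qr_2'$ from the corresponding disjunct of $\qr_2$ in exactly the same fashion.

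Then I would prove the two directions of the reduction separately. For the direction $\qr_1 \not\subseteq \qr_2 \Rightarrow \qr_1' \not\subseteq \qr_2'$, I would start from a counterexample instance $I$ (of polynomial size, by Lemma~\ref{lem:reduction_helper}) together with a valuation $V$ of $\qr_1$ deriving a fact \fc that is not in $\qr_2(I)$, and extend $I$ to $I^\ast = I \cup \{\pred{Neq}(a,b) \mid a,b \in \adom{I}, a \neq b\}$. A direct check shows $V$ still satisfies $\qr_1'$ on $I^\ast$; and if some disjunct of $\qr_2'$ were satisfiable on $I^\ast$ via a valuation $V^\ast$, then the structural completeness of $\pred{Neq}$ on $\adom{I^\ast}$ would translate the $\pred{Neq}$ atoms back into genuine inequalities, giving a satisfying valuation of the corresponding disjunct of $\qr_2$ on $I$ deriving \fc, contradicting our choice of counterexample.

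For the direction $\qr_1 \subseteq \qr_2 \Rightarrow \qr_1' \subseteq \qr_2'$, the central obstacle is that an arbitrary instance $I'$ over the enlarged schema may contain $\pred{Neq}$ facts that are inconsistent with the intended inequality semantics, so that a satisfying valuation of $\qr_1'$ on $I'$ does not immediately carry over to a satisfying valuation of some disjunct of $\qr_2'$. To handle this, I would exploit fullness of both queries: since every variable appears in the head, a valuation of a full query is uniquely determined by the derived fact (up to duplicate head positions), so to every disjunct of $\qr_2$ whose head pattern is compatible with that of $\qr_1$ there corresponds a canonical substitution $\sigma_j$ from its variables to those of $\qr_1$. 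I would then strengthen $\qr_1'$ by additionally including, for every such compatible disjunct $\qr_{2,j}$ and every inequality $u \neq w$ in $\qr_{2,j}$, the positive atom $\pred{Neq}(\sigma_j(u), \sigma_j(w))$. This guarantees that any instance $I'$ on which $\qr_1'$ fires contains exactly the $\pred{Neq}$ facts required for the disjunct of $\qr_2'$ that handles the derivation to fire as well; together with $\qr_1 \subseteq \qr_2$, which provides that some such disjunct does fire on $I' \cap \facts{\sch}$, this closes the argument.

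The main difficulty I expect is the forward direction, specifically verifying that the strengthening of $\qr_1'$ by the additional $\pred{Neq}(\sigma_j(u), \sigma_j(w))$ atoms neither breaks the backward direction (one needs to check that the witnessing $V$ from a counterexample to $\qr_1 \subseteq \qr_2$ can still satisfy these atoms on $I^\ast$, relying on the fact that any inequality of $\qr_{2,j}$ violated by the induced valuation already prevents $\qr_{2,j}$ from firing and is therefore consistent with $V$ being a counterexample) nor over-restricts $\qr_1'$ in a way that loses essential witnesses. Polynomiality of the reduction is immediate, since the number of added atoms is bounded by the total number of inequalities across all disjuncts of $\qr_2$ times $\varmax{\qr_2}$.
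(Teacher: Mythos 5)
Your treatment of the inequalities in $\qr_1$ (replacing $x \neq y$ by $\pred{Neq}(x,y)$ and adding $\lnot\pred{Neq}(z,z)$ for every variable $z$) is a correct and clean simulation --- indeed it handles something the paper's own proof glosses over, since the paper leaves the left-hand query untouched. The genuine gap is on the $\qr_2$ side, and your proposed fix --- strengthening $\qr_1'$ with positive atoms $\pred{Neq}(\sigma_j(u),\sigma_j(w))$ for every inequality of every head-compatible disjunct --- breaks the backward direction irreparably. Concretely, take $\qr_1 \colon H(x,y) \leftarrow R(x,y)$ and $\qr_2 \colon H(u,w) \leftarrow R(u,w),\, u \neq w$. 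Here $\qr_1 \not\subseteq \qr_2$ (witness $\{R(a,a)\}$), but your strengthened $\qr_1'$ becomes $H(x,y) \leftarrow R(x,y), \pred{Neq}(x,y), \lnot\pred{Neq}(x,x), \lnot\pred{Neq}(y,y)$, which is identical up to renaming to the single disjunct of $\qr_2'$, so $\qr_1' \subseteq \qr_2'$ and the reduction returns the wrong answer. The root cause: the added positive atoms, combined with the $\lnot\pred{Neq}(z,z)$ atoms, semantically impose the inequalities of $\qr_2$'s disjuncts onto $\qr_1'$. Whenever the genuine counterexample valuation $V$ identifies $\sigma_j(u)$ with $\sigma_j(w)$ --- which is exactly the interesting case, where $\qr_{2,j}$ fails to fire \emph{because of} its inequality --- $V$ can no longer satisfy $\qr_1'$ on any instance, so the witness is destroyed. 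Your own caveat about this case does not resolve it; the violated inequality being ``consistent with $V$ being a counterexample'' is beside the point.

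The missing idea is to put the repair on the union side, which is what the paper does: since $\qr_2'$ is allowed to be a union, one adds extra full disjuncts to $\qr_2'$ (all of the form $\head{\qr_1} \leftarrow \body{\qr_1}, \dots$, so they remain full and derive the same fact) that fire precisely on those instances where $\pred{Neq}$, together with a companion relation $\pred{Eq}$, fails to encode genuine inequality on the active domain of the witnessing valuation --- a disjunct detecting a reflexive fact $\pred{Neq}(y,y)$, one detecting a pair in neither $\pred{Eq}$ nor $\pred{Neq}$ via $\lnot\pred{Eq}(y_1,y_2), \lnot\pred{Neq}(y_1,y_2)$, symmetry and consistency checks, and so on. On ``bad'' instances the containment then holds vacuously via a sanity disjunct; on ``good'' instances the translated $\qr_2^*$ behaves exactly like $\qr_2$; and the backward direction survives because the canonical counterexample instance, enriched with the complete correct $\pred{Eq}$ and $\pred{Neq}$ relations, defeats every sanity disjunct. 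Fullness of $\qr_1$ is what makes these checks expressible as disjuncts with the required head. Without some device of this kind on the $\qr_2$ side, no strengthening of $\qr_1'$ alone can work, because any constraint strong enough to force the needed $\pred{Neq}$ facts into the instance is also strong enough to exclude the counterexamples you must preserve.
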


\begin{proof}
For this to see, let $\qr_1 \in \classFCQnegineq$ and $\qr_2 \in
\classFUCQnegineq$ over some schema $\sch$. Let $\vec{y}$ represent the variables
used in $\qr_1$, in some fixed order.
We construct a query $\qr_2' \in \classFUCQneg$ such that $\qr_1 \subseteq \qr_2$ if and
only if $\qr_1 \subseteq \qr_2'$. 
For this, we extend $\sch$ to a schema $\sch'$ that also defines relations
$\pred{Eq}^{(2)}$ and $\pred{Neq}^{(2)}$.
Intuitively, $\pred{Eq}^{(2)}$ models the equality relation $=$, and $\pred{Neq}$ models
the inequality relation $\ne$.
We assume $\pred{Eq}$ and $\pred{Neq}$ are not in $\sch$. 

For the construction, we divide $\qr_2'$ into two subqueries $\qr^\ast_2$ and
$\subQeqineq$. Query~$\qr^\ast_2$ results from $\qr_2$ by replacing every
inequality $x \neq y$ by atom $\pred{Eq}(x,y)$. The purpose of query
$\subQeqineq$ is to allow derivation of a fact derivable by $\qr_1$ on every
instance where $\pred{Eq}$ and $\pred{Neq}$ do not represent equality
relation~$=$ or inequality relation~$\neq$, respectively.

To this end,
we further divide $\subQeqineq$ into the following queries:
\subQwronginequality, which detects values that are wrongly identified as
being unequal (by $\pred{Neq}$);
\subQinconsistent, which detects that some values occur in both $\pred{Eq}$ and $\pred{Neq}$;
\subQineqasymmetry, which detects values where $\pred{Neq}$ is not symmetric for;
\subQeqasymmetry, which detects values where $\pred{Eq}$ is not symmetric for;
\subQincomplete, which detects that certain values are not in $\pred{Neq}$, nor in
$\pred{Eq}$; and finally, $\subQeqineq$, in which all occurrences of $x\ne y$ are replaced by
atoms $\pred{Neq}(x,y)$.

More formally, $\subQwronginequality$ is defined as the union of the queries:
\begin{align*}
    h(\head{\qr_1}) \leftarrow h(\body{\qr_1}), \pred{Neq}(y,y),
\end{align*}
for all collision-revealing substitutions $h$ for $\vec{x}$ and $\vec{y}$.

Thus, $\subQwronginequality$ outputs on a given instance exactly those facts outputed by
$\qr_1$ in which some values are wrongly identified as being inequal.

We define query $\subQinconsistent$, as the union over queries:
\begin{align*}
    \head{\qr_1} \leftarrow \body{\qr_1}, \pred{Eq}(y_1, y_2), \pred{Neq}(y_1, y_2),
\end{align*}
for every two distinct variables $y_1,y_2 \in \vec{y}$.
Query  $\subQineqasymmetry$ is defined as the union over queries:
\begin{align*}
    \head{\qr_1} \leftarrow \body{\qr_1}, \pred{Eq}(y_1, y_2), \lnot\pred{Eq}(y_2, y_1),
\end{align*}
for every two distinct variables $y_1,y_2 \in \vec{y}$.
Query $\subQeqasymmetry$ is defined as the union over queries:
\begin{align*}
    \head{\qr_1} \leftarrow \body{\qr_1}, \pred{Neq}(y_1, y_2), \lnot\pred{Neq}(y_2, y_1).
\end{align*}
for every two distinct variables $y_1,y_2 \in \vec{y}$.
And, eventually,  query $\subQincomplete$ is defined as the union over queries:
\begin{align*}
    \head{\qr_1} \leftarrow \body{\qr_1}, \lnot\pred{Eq}(y_1, y_2), \lnot\pred{Neq}(y_1,
    y_2),
\end{align*}
for every two distinct variables $y_1,y_2 \in \vec{y}$.

    \smallskip \noindent \emph{Correctness.}
    It remains to show $\qr_1 \subseteq \qr_2$ if
    and only if $\qr_1 \subseteq \qr_2'$. 

\IfDirection
Suppose $\qr_1 \not\subseteq \qr_2$. Thus, for some instance $I$ and fact $\fc$, $\fc
\in \qr_1(I)$, while $\fc \not \in \qr_2(I)$.  We consider instance $I'$ over $\sch'$,
which consists of all facts in
$I$, and for every value $a \in\adom{I}$ a fact $\pred{Eq}(a,a)$, and for every pair of
distinct values $a,b \in \adom{I}$ a fact $\pred{Neq}(a,b)$.

Because $\qr_1$ does not reference relations $\pred{Eq}$ or $\pred{Neq}$, it follows that $\fc \in \qr_1(I')$. 
As relations $\pred{Eq}$ and $\pred{Neq}$ express exactly $=$ and $\ne$, over the values
in $\adom{I}$, we obtain $\qr_2(I) = \qr_2^*(I')$.
Consequently, $\fc \not \in \qr_2(I)$ implies $\fc \not \in \qr_2^*(I')$.
Further, as $\pred{Eq}$ and $\pred{Neq}$ are symmetric, consistent, and completely defined over
$\adom{I'}$, $\subQeqineq(I') = \emptyset$.
Hence, $\fc \not \in \qr_2'(I')$. 

\OnlyIfDirection
Suppose $\qr_1 \not\subseteq \qr_2'$. Thus, for some instance $I$ and fact $\fc$, $\fc \in
\qr_1(I)$, and $\fc \not \in \qr_2'(I)$.  Let $V$ be the valuation deriving $\fc$ for
$\qr_1$ on $I$ and $D \mydef \adom{V(\body{\qr_1})}$. 
From $\fc \not \in \qr_2'(I)$ it now follows that $\pred{Eq}$ and $\pred{Neq}$ are
well-defined over $D$, that is, the relations $\pred{Eq}$ and $\pred{Neq}$ are
identical to $=$ and $\ne$ over values in $D$. 
Indeed, for all $a,b \in D$:
\begin{itemize}
    \item there are no facts $\pred{Neq}(a,a)$
        (because
        $\fc \not \in \subQwronginequality(I)$);
    \item $\pred{Eq}(a, b)$ in $I$ implies $\pred{Eq}(b,a)$ (because $\fc \not \in
        \subQeqasymmetry(I)$);
    \item $\pred{Neq}(a,b)$ in $I$ implies $\pred{Neq}(b,a)$
        (because $\fc \not \in
        \subQineqasymmetry(I)$); and
    \item there is a fact $\pred{Eq}(a, b)$ or $\pred{Neq}(a,b)$ in $I$ (from $\fc \not \in \subQincomplete(I)$).
\end{itemize}

As a result, we can simply replace every occurrence of $\pred{Neq}(x,y)$ in
query $\qr_2^*$ by $x\ne y$ without changing its semantics over $D$.
More formally, we obtain $\qr_2^*(\restrict{I}{D}) = \qr_2(\restrict{I}{D})$. 

By fullness of $\qr_2$ and the definition of $D$, $\fc \in \qr_2(I)$ would imply
$\fc\in \qr_2(\restrict{I}{D})$ and thus $\fc \in \qr'_2(I)$. Consequently, it must be that $\fc
 \not \in \qr_2(I)$.
\end{proof}

\begin{proposition}
	\label{prop:containment_conpcomplete}
    $\text{\sc Containment}(\classUFCQnegineq, \classFUCQnegineq)$ is in \coNP.
\end{proposition}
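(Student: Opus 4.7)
The plan is to show the complementary problem, non-containment $\qr_1 \not\subseteq \qr_2$, lies in $\NP$, by establishing a polynomial-size counter-example property and then guessing such a counter-example. The key fact extracted from fullness is that, for any disjunct $\qr_2^j$ of $\qr_2$ and any candidate output fact $\fc$, there is at most one valuation $W^j$ with $W^j(\head{\qr_2^j})=\fc$: since all variables of $\qr_2^j$ appear in $\head{\qr_2^j}$, the assignment $W^j$ is either forced coordinate-wise by $\fc$ or no such $W^j$ exists. Hence $W^j$ can be read off from $\fc$ and $\qr_2^j$ in polynomial time, and the same uniqueness applies to the disjuncts of $\qr_1$.

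The main step is the following small-model claim: if $\qr_1 \not\subseteq \qr_2$, then there is a counter-example $I$ whose size is polynomial in $|\qr_1|+|\qr_2|$. Starting from an arbitrary counter-example $I^*$ together with a disjunct $\qr_1^i$ and a valuation $V$ deriving $\fc \in \qr_1(I^*) \setminus \qr_2(I^*)$ on $I^*$, I build $I$ in two stages. First, include every fact of $V(\pos{\qr_1^i})$. Second, for every disjunct $\qr_2^j$ whose unique candidate $W^j$ (read off from $\fc$) is a consistent valuation with $W^j(\pos{\qr_2^j}) \subseteq I^*$, pick one fact $\factB_j \in W^j(\negbody{\qr_2^j}) \cap I^*$, which must exist since otherwise $W^j$ would derive $\fc$ on $I^*$, and add $\factB_j$ to $I$. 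The result has at most $|\pos{\qr_1^i}|$ plus the number of disjuncts of $\qr_2$ many facts, over a domain of size at most $|\vars{\qr_1^i}|$.

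To verify that $I$ witnesses non-containment, note that $I \subseteq I^*$ by construction. The valuation $V$ still satisfies $\qr_1^i$ on $I$: the positive atoms $V(\pos{\qr_1^i})$ are present, the inequalities of $\qr_1^i$ are unchanged, and $V(\negbody{\qr_1^i}) \cap I \subseteq V(\negbody{\qr_1^i}) \cap I^* = \emptyset$. Conversely, no disjunct $\qr_2^j$ derives $\fc$ on $I$: either its unique candidate $W^j$ fails to exist or is inconsistent; or some atom in $W^j(\pos{\qr_2^j})$ was already missing from $I^*$, hence from $I$; or $W^j(\pos{\qr_2^j}) \subseteq I^*$, and then by construction $\factB_j \in I$ blocks the negated side, i.e.\ $W^j(\negbody{\qr_2^j}) \cap I \neq \emptyset$. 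The delicate point to guard against is whether adding a blocking fact $\factB_j$ could accidentally unlock satisfaction of another disjunct $\qr_2^{j'}$; but because $I \subseteq I^*$, the positive check for every $W^{j'}$ is no easier on $I$ than on $I^*$, so the above case analysis remains airtight.

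Given the small-model property, an $\NP$ algorithm for non-containment guesses a disjunct index $i$, a valuation $V$ for $\qr_1^i$ (equivalently, a partition of $\vars{\qr_1^i}$ assigning fresh distinct data values to classes), and an instance $I$ of the polynomial size bounded above; it then verifies in polynomial time that $V$ satisfies $\qr_1^i$ on $I$ and that, for each disjunct $\qr_2^j$ of $\qr_2$, the unique $W^j$ derived from $\fc = V(\head{\qr_1^i})$ does not satisfy $\qr_2^j$ on $I$. This verification amounts to a polynomial number of set containment and membership tests, so the procedure runs in polynomial time, placing $\text{\sc Containment}(\classUFCQnegineq,\classFUCQnegineq)$ in $\coNP$.
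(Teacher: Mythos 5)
Your proposal is correct and follows essentially the same route as the paper's proof: exploit fullness to get a unique candidate valuation per disjunct of $\qr_2$, shrink an arbitrary counter-example to $V(\pos{\qr_1^i})$ plus one blocking fact per otherwise-satisfied disjunct, and use the resulting polynomial-size certificate in an $\NP$ procedure for non-containment. The size bound and the verification argument match the paper's; your explicit remark that $I\subseteq I^*$ keeps the positive checks from becoming easier is a correct (and slightly more careful) phrasing of the same point.
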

\begin{proof}
    We observe that when $\qr\not\subseteq\qr'$, there is an instance $I$ and
    fact $\fc$, such that $\fc \in \qr(I)$, and $\fc \not \in \qr'(I)$. 
    In particular, by fullness of $\qr_1$ and $\qr_2$, there is such an instance of size
    at most $\ell \mydef \max_{i\in\{1, \ldots, n\}}\{\ssize{\bodypos{\qr_i}}\}
    + m$, where $n$ denotes the number of disjuncts of $\qr$, and $m$
    the number of disjuncts in $\qr'$.

    Indeed, to see this, let $V$ and $\qr_i$ be the valuation and disjunct of $\qr$ where $\fc$ is
    derived by on $I$. Let $J \mydef V(\pos{\qr_i})$.
    By fullness it follows that for each disjunct $\qr_j$ of $\qr'$ there is at most one
    valuation $V_j$ eligible to derive $\fc$ for $\qr_j$. As, by choice of $\fc$, $V_j$ does
    not satisfy on $I$ it must be that either $V_j(\pos{\qr_j}) \not\subseteq I$, or
    $V_j(\negbody{\qr_j})\cap I \ne \emptyset$. We ignore the former. In the latter case
    we choose one fact from  $V_j(\negbody{\qr_j})\cap I$ and add it to $J$. 
    One can now easily verify that $J$ is as desired.

    As the above shows that there always is a witnessing instance $I$ of size $\ell$,
    given $I$, $V$, and $\qr_i$ as polynomial size certificate, we can easily verify that
    $\qr$ is indeed not contained in $\qr'$, by simply verifying that $V$ indeed satisfies
    for disjunct $\qr_i$ of $\qr$ on $I$, and for all $m$ eligible valuations for
    conjuncts of $\qr'$, either not all required facts are in $I$, or at least one of
    the prohibited facts is present.
\end{proof}

\subsection{Proof of Proposition~\ref{the:full_parallelXreductions:new}}

The result follows from Proposition~\ref{pro:full_containment_hardness} and the reductions below.

\begin{proposition}
	\label{the:full_parallelXreductions}
	The following reducibility relations hold, for every $\classp \in
	\{\classpenum\} \cup \Classpnondet$:
	\begin{enumerate}
		\item $\containment(\classFCQneg,\classUFCQneg) \polred \parasound(\classUFCQneg,
            \classp)$;
		\item $\containment(\classFCQneg,\classUFCQneg) \polred \paracom(\classUFCQneg,
            \classp)$;
		\item $\containment(\classFCQneg,\classUFCQneg)
                  \polred \paracor(\classUFCQneg,\classp)$. 
	\end{enumerate}
\end{proposition}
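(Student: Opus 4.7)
The plan is to adapt the single-reduction-for-three-problems strategy of Proposition~\ref{prop:paracomp-lower} from the Boolean non-union setting to the full-union setting. Given $\qr_1\in\classFCQneg$ and $\qr_2=\bigcup_{i=1}^m \qr_2^i\in\classUFCQneg$, I would first test satisfiability of each component in polynomial time (by checking $\bodypos\cap\bodyneg=\emptyset$) and dispose of trivial inputs by outputting a fixed yes- or no-instance. Otherwise, working with the extended schema $\sch'=\{R'^{(\ar(R)+1)}\mid R\in\sch\}\cup\{\Type,\StartA,\StartB,\Stop\}$ and the tagging operator $\alpha$ from Proposition~\ref{prop:paracomp-lower}, I would build a combined query $\qr=\bigcup_{i=1}^m \qr^i\in\classUFCQneg$ in which the $i$-th disjunct has body
\[ \alpha(\ell_1,\body{\qr_1})\cup\alpha(\ell_2,\body{\qr_2^i})\cup\{\Type(t),\StartA(\ell_1),\StartB(\ell_1),\StartB(\ell_2),\lnot\Stop(\ell_1),\lnot\Stop(\ell_2)\} \]
and a head $H(\cdot)$ containing every body variable (padded to a common arity across disjuncts by fresh variables pinned to an auxiliary positive atom, preserving safety and fullness). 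The distribution policy $\distp\in\classpenum$ over universe $U=\{1,\dots,m'\}$ with $m'=\max\{\varmax{\qr_1},\varmax{\qr_2}\}$ and network $\{\nodeComp_i\}_i\cup\{\nodeSound_i\}_i\cup\{\nodeRest\}$ is taken with essentially the same responsibility pattern as in Proposition~\ref{prop:paracomp-lower}.

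The correctness argument would replay the two directions of Proposition~\ref{prop:paracomp-lower} via an analogue of Claim~\ref{clm:induced-val}, which is body-local and hence unaffected by fullness. The forward direction (containment $\Rightarrow$ parallel-correctness) uses that claim together with $\qr_1\subseteq\qr_2$ to turn a globally satisfying valuation $V$ of some $\qr^i$ into a locally satisfying valuation deriving the \emph{same} head fact at an appropriate $\nodeComp_a$, $\nodeSound_a$, or $\nodeRest$. The backward direction uses Lemma~\ref{lem:reduction_helper} to obtain a polynomial-size counter-example $I_1$ with $\qr_1(I_1)\not\subseteq\qr_2(I_1)$ and combines it with the canonical positive-body instance $I_2$ of some satisfiable $\qr_2^i$ into the two witness instances $\alpha(1,I_1)\cup\alpha(2,I_2)\cup\{\Type(1),\StartA(1),\StartB(1),\StartB(2)\}$ and $\alpha(1,I_1)\cup\alpha(2,I_2)\cup\{\Type(2),\StartA(1),\StartB(1),\StartB(2),\Stop(2)\}$, which refute parallel-completeness and parallel-soundness, respectively.

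The main obstacle I anticipate is that fullness couples each derived head fact to the entire valuation that produced it, so the local re-derivation in the forward direction must preserve the exact activation triple $(V(\ell_1),V(\ell_2),V(t))$ rather than collapse to $\ell_1=\ell_2$ as was harmless in the Boolean proof. I expect this to force a refinement of the policy---adding auxiliary nodes $\node_{a_1,a_2,b}$ responsible for exactly $\{\Type(b),\StartA(a_1),\StartB(a_1),\StartB(a_2)\}$ together with all $\sch'$-facts, so that every globally valid activation triple $(a_1,a_2,b)$ has a matching node at which the same valuation satisfies locally. The backward-direction witness instances must then be chosen so that the $\Type$-facts making these auxiliary nodes usable are absent whenever the triple could also bypass the non-containment obstruction, preserving the global-vs-local mismatch that drives the lower bounds. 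Since the resulting policy is still describable by polynomially many rules in $\classpenum$ and the remaining bookkeeping---in particular, padding the $m$ disjunct heads to a common arity---is routine, the plan yields a single polynomial-time reduction establishing the three statements simultaneously.
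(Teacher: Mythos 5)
Your plan diverges fundamentally from the paper's proof, and the divergence exposes a genuine gap. The paper does \emph{not} adapt the machinery of Proposition~\ref{prop:paracomp-lower} here; it uses a much lighter construction: add a nullary literal $\Global()$ positively to one query and negatively to the other to form, e.g., $\qr^\ast=\qrnotglobal\cup\qrBglobal$ for soundness, and take a two-node policy sending every fact except $\Global()$ to $\node_1$ and $\Global()$ alone to $\node_2$. On instances containing $\Global()$, the query $\qr^\ast$ computes $\qr'$ globally while computing $\qr$ on the local instance at $\node_1$ (where $\Global()$ has been stripped away), so parallel-soundness is \emph{literally} the containment $\qr\subseteq\qr'$; since $\Global$ is nullary, the heads, safety, and fullness are untouched and the very same output facts appear on both sides. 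Swapping the two literals handles completeness, and $\qr^\ast=\qrnotglobal\cup\qr'$ handles correctness.

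The gap in your approach is in the repair you propose for the fullness obstacle. You correctly observe that a full head determines the entire valuation, so the Boolean trick of re-deriving $H()$ via a collapsed valuation with $\ell_1=\ell_2$ is unavailable, and you propose auxiliary nodes $\node_{a_1,a_2,b}$ responsible for $\Type(b),\StartA(a_1),\StartB(a_1),\StartB(a_2)$ together with all $\sch'$-facts, so that every globally satisfying valuation also satisfies locally at its matching node. But this is self-defeating for parallel-completeness: if every globally satisfying valuation re-derives its fact at the matching auxiliary node, then the combined query is parallel-complete under $\distp$ \emph{unconditionally}, independently of whether $\qr_1\subseteq\qr_2$, so the map is no longer a reduction to $\paracom$. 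Your escape hatch---choosing the backward-direction witness so that the enabling $\Type$-facts are absent---cannot work: any instance refuting parallel-completeness must carry a globally satisfying valuation $V$, which itself forces $\Type(V(t))$, $\StartA(V(\ell_1))$, $\StartB(V(\ell_1))$, $\StartB(V(\ell_2))$ into the instance; these, together with all $\sch'$-facts, land on $\node_{V(\ell_1),V(\ell_2),V(t)}$, and the prohibited facts are absent from that local instance because they are absent from the global one, so $V$ satisfies there locally and no counter-example survives. Within your design the two requirements---the same valuation succeeds locally whenever containment holds, yet fails locally on some instance whenever it does not---cannot be reconciled; the paper sidesteps the dilemma by letting the global and local computations run \emph{different disjuncts} ($\qr'$ versus $\qr$) that nevertheless output identical facts.
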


\begin{proof}
	The idea underlying the following reductions is simple: extend both
	$\CQneg$s $\qr$ and $\qr'$ for the containment problem by a nullary
	atom~$\Global()$ or its negation $\lnot\Global()$, respectively, and combine
	them (by union) into a single query~$\qr^\ast$. By mapping the fact $\Global()$ onto an isolated node, the
	distribution policy then allows to control global and local derivability on
	behalf of $\qr$ and $\qr'$. Without loss of generality, we always assume that queries $\qr$ and $\qr'$ do not use the auxiliary relation $\Global$.

	For a $\CQneg$ $\qr$, let $\qrglobal$ and
	$\qrnotglobal$ denote the queries obtained by adding the
        literal $\Global()$ or $\lnot\Global()$ to $\qr$,
	respectively.
    For unions of $\CQneg{}$s this particularly means adding $\Global()$ or
    $\lnot\Global()$ to every disjunct of $\qr$. 
	The following identities can be easily proven to hold for every query~$\qr$
	and every instance~$I$:
	\begin{subequations}
        \begin{align}
			\qrglobal(I \cup \{\Global()\}) & =  \qr(I); \label{eqn:glA}\\
            \qrglobal(I \setminus \{\Global()\}) & =  \emptyset; \label{eqn:glB}\\
			\qrnotglobal(I \cup \{\Global()\}) & =  \emptyset; \label{eqn:glC}\\
		    \qrnotglobal(I \setminus \{\Global()\}) & =  \qr(I).\label{eqn:glD}
        \end{align}
	\end{subequations}

	\medskip
	\noindent
	We only argue the reductions for policy class~$\classpenum$. It is obvious that
	such a policy can also be represented by a policy from any class
	$\classpnondet^k \in \Classpnondet$.

	\noindent
	\textbf{1.}
	We start with $\containment(\classFCQneg,\classFCQneg) \polred
	\parasound(\classUFCQneg, \classpenum)$.

    Let $\qr$ and $\qr'$ be $\CQneg$s. We
	define a $\UCQneg$ $\qr^\ast$ and a policy~$\distp$ as
	follows. For this, let $\qr^\ast \mydef \qrnotglobal \cup \qrBglobal$. 
    
    We construct a subset $D$ of $\dom$, where $\ssize{D}$ = $\ssize{\vars{\qr_1}}$.
    Since the actual data values do not matter, we can choose those with
    the shortest representation length and thus also represent set~$D$
    polynomially in the size of query~$\qr_1$.
    Now, $\distp$ is defined as a distribution policy over network $\nw=\{\node_1, \node_2\}$
	that forwards every fact over $D$ except $\Global()$ to $\node_1$, and $\Global()$ to
    node $\node_2$.
    As the described distribution policy can be straightforwardly expressed with a
    distribution policy in $\classpfin$, the construction of both $\qr^\ast$ and
    $\distp$ can be done in polynomial time.
	
	\smallskip
	\noindent
	\emph{Correctness.}
	It remains to show that $\qr \subseteq \qr'$ if and only if $\qr^\ast$ is
    parallel-sound under $\distp$. 	
    The following observations are crucial for the correctness argument. 
    
    First of all, for
	each instance~$I$ that contains $\Global()$, $\qr^\ast$
	is equivalent to $\qr'$:
	\begin{equation}
		\label{eqn:redtoqrB}
		\qr^\ast(I)
		=
		\qrnotglobal(I) \cup \qrBglobal(I)
		=
		\emptyset \cup \qr'(I)
		=
		\qr'(I).
	\end{equation}
    Second, for each instance~$I$ that does not contain
	$\Global()$, $\qr^\ast$ is equivalent to $\qr$:
	\begin{equation}
		\label{eqn:redtoqr}
		\qr^\ast(I)
		=
		\qrnotglobal(I) \cup  \qrBglobal(I)
		=
		\qr(I) \cup \emptyset
		= \qr(I) = \qr\big(I \cup \{\Global()\}\big).
	\end{equation}

    Further, as $\node_2$ can only contain the fact $\pred{Global}()$, it follows that for
    each instance $I$ we have 
    $\qr(\dist{I}(\kappa_2)) =
    \emptyset$.

    \OnlyIfDirection
    Assume $\qr \subseteq \qr'$. Let $I$ be an arbitrary subset of $\facts{\distp}$. 
        If $\Global()
	\notin I$, the local instance of node~$\node_1$ is identical to the
	global instance, that is $\dist{I}(\node_1)=I$. Now, by definition of $\distp$ and
    thus also 	the result sets, $\qr^\ast\big(\dist{I}(\node_1)\big) = \qr^\ast(I)$. 
    In particular, this implies $\qr^\ast\big(\dist{I}(\node_1)\big) \subseteq
	\qr^\ast(I)$, that is, parallel-soundness of $\qr^\ast$ under $\distp$ on
	instance~$I$. 
	
	If $\Global() \in I$, the local instance is 
	$\dist{I}(\node_1)=I \setminus \{\Global()\}$. By
	Equations~\eqref{eqn:redtoqrB} and \eqref{eqn:redtoqr},
	$\qr^\ast\big(\dist{I}(\node_1)\big) \subseteq \qr^\ast(I)$ if and only if
	$\qr(I) \subseteq \qr'(I)$, which holds by assumption of containment.
	Therefore, in both cases query~$\qr^\ast$ is parallel-sound under policy~$\distp$.
	
	\IfDirection
	For a proof by contraposition assume $\qr \not\subseteq \qr'$. This implies
	existence of an instance~$I$ where $\qr(I) \not\subseteq \qr'(I)$.
    Without loss of generality we can assume 
    $\adom{I} \subseteq  D$.
    The latter is a safe assumption as from Lemma~\ref{lem:reduction_helper} it follows
    that an instance $J\subseteq I$ exists that preserves the desired property, and
    where $\ssize{\adom{J}} \le \ssize{D}$. Further, by genericity of $\qr_1$ and $\qr_2$,
    we can uniquely rename data values in $J$ to data values in $D$, resulting
    in an instance with the desired properties.

    Additionally, we may
	safely assume that $\Global() \in I$ because neither $\qr$ nor $\qr'$ refers to
	relation~$\Global$. This results in a local instance
	$\dist{I}(\node_1) = I \setminus \{\Global()\}$.
	Again, by Equations~\eqref{eqn:redtoqrB} and \eqref{eqn:redtoqr}, we conclude
	$\qr^\ast\big(\dist{I}(\node_1)\big) = \qr(I) \not\subseteq \qr'(I) =
	\qr^\ast(I)$, that is $\qr^\ast$ is not parallel-sound under $\distp$ on instance~$I$. Therefore, by contraposition,
	parallel-soundness of query~$\qr^\ast$ under policy~$\distp$ and domain $D$ implies
	containment $\qr \subseteq \qr'$.
	

    \medskip \noindent \textbf{2.} For $\containment(\classFCQneg,\classFCQneg) \polred
    \paracom(\classUFCQneg, \classpenum)$ the proof is analogous.
    Policy~$\distp$ is defined as before, while query $\qr^\ast \mydef \qrglobal \cup \qrBnotglobal$,  i.e.
    atoms $\Global()$ and $\lnot\Global()$ are swapped between the $\CQneg$s compared to
    the soundness reduction. 

    Analogously to Equations~\eqref{eqn:redtoqrB} and \eqref{eqn:redtoqr}, this leads to
    $\qr^\ast(I)=\qr(I)$ and $\qr^\ast(I)=\qr'(I \cup \{\Global()\})$, for
    every instance~$I$ where $\Global() \in I$ and $\Global() \notin I$, respectively.
    Correctness then follows again from Lemma~\ref{lem:reduction_helper}.


    \medskip
	\noindent
	\textbf{3.}
	For $\containment(\classFCQneg,\classFCQneg) \polred
	\paracor(\classUFCQneg,\classpenum)$ is also similar. Policy~$\distp$ is
	defined as before while query $\qr^\ast \mydef \qrnotglobal \cup \qr'$. If $\Global()\in I$ then $\qr^\ast(I)=\qr'(I)$ just as in the first
        reduction. However, if $\Global()\not\in I$
        then $\qr^\ast(I)=\qr(I)\cup\qr'(I)$.

	\OnlyIfDirection
	Assume $\qr \subseteq \qr'$. Thus, $\qr\cup \qr'
        \equiv \qr'$ imply $\qr^\ast \equiv \qr'$. Let $I$ be an arbitrary
        instance such that $\adom{I} \subseteq D$.
        If $\Global()
	\notin I$, the local instance of node~$\node_1$ is identical to the
	global instance (i.e., $\dist{I}(\node_1)=I$) by definition of $\distp$, and thus also
	$\qr^\ast\big(\dist{I}(\node_1)\big) = \qr^\ast(I)$. 
	
	If $\Global() \in I$ is 
	$\dist{I}(\node_1)=I \setminus \{\Global()\}$. Thus,
    $\qr^\ast\big(\dist{I}(\node_1)\big) = \qr'(\dist{I}(\node_1)) = \qr'(I) = \qr^\ast(I)$.
	Therefore, in both cases query~$\qr^\ast$ is parallel-correct under policy~$\distp$.
	
	\IfDirection
	We assume $\qr \not\subseteq \qr'$. This implies again
    by Lemma~\ref{lem:reduction_helper} and genericity of $\qr$ and $\qr'$, the existence of an instance~$I$ where $\qr(I)
    \not\subseteq \qr'(I)$ and $\adom{I} \subseteq D$. It follows $\qr(I)\cup \qr'(I)
        \not\equiv \qr'(I)$. We may
	safely assume that $\Global() \in I$ because neither $\qr$ nor $\qr'$ refers to
	relation~$\Global$. Therefore,
	$\dist{I}(\node_1) = I \setminus \{\Global()\}$.
	Again, by Equations~\eqref{eqn:redtoqrB} and \eqref{eqn:redtoqr}, we conclude
	$\qr^\ast\big(\dist{I}(\node_1)\big) = \qr(I) \cup \qr'(I) \not= \qr'(I) =
	\qr^\ast(I)$, that is, $\qr^\ast$ is not parallel-correct under
        $\distp$ on instance~$I$, the desired contradiction.
\end{proof}

\subsection{Proof for Theorem~\ref{the:parallelX_conpcomplete_full}}

\begin{proof}
    Let $\qr$ be a \FUCQnegineq with $m$ disjuncts, and $\distp = (U, \respp) \in \classpenum$.


    \medskip \noindent \textbf{1.}
From the definition of parallel-soundness we immediately obtain that
$\qr$ is not parallel-sound on $\distp$ if
and only if there is an instance $I\subseteq \facts{\sch, U}$, a fact $\fc$, and a node $\kappa$, where $\fc \in
\qr(\distvar{\distp}{I}(\kappa))$  and $\fc \not \in \qr(I)$.
The former implies that there is a valuation $V$ for some of the disjuncts of
$\qr$, say $\qr_i$, which derives $\fc$ on $\distvar{\distp}{I}(\kappa)$. The latter means that
for all disjuncts $\qr_j$, and valuations $V_j$ for $\qr_j$, where $V_j(\head{\qr_j}) =
\fc$, $V_j$ fails to satisfy for $\qr_j$ on $I$.

As $\qr$ is full, for each disjunct $\qr_j$ there is at most
one eligible valuation $V_j$, which is uniquely defined by the head of $\qr_j$ and the
desired output fact $\fc$. We thus have to check only $m$ valuations $V_j$. 
From this observation it also follows that we can
restrict the above condition to instances $I$ of size at most $\max_{j \in
    \{1,\ldots,m\}}\{\ssize{\bodypos{\qr_j}}\} +
m$. Indeed, given $I$, $\kappa$, and valuation $V$ (being the valuation showing $\fc \in
\qr(\distvar{\distp}{I}(\kappa))$), obviously $V(\pos{\qr_i}) \subseteq I$, and, for every $j\in\{1,
\ldots, m\}$, either $V_j(\pos{\qr_j})\not\subseteq I$ or there are facts in
$V_j(\negbody{\qr_j})\cap I$. In the latter case, let $\fcB_j$ be one of these facts.
Now, we can construct an instance $I'$ of the desired size by simply taking the facts in
$V(\pos{\qr_i})$ and the chosen facts $\fcB_j$. It is easy to see that $\fc\in
\qr\big(\distvar{\distp}{I'}(\kappa)\big)$, while $\fc \not \in \qr(I')$. 

As $V(\pos{\qr_i}) \subseteq \respvar{\distp}{\kappa}$ and $V(\negbody{\qr_i}) \cap \respvar{\distp}{\kappa} \cap I =\emptyset$
can be verified in polynomial time in $\distp$, it follows that verifying
violation of parallel-soundness can be done in polynomial time in the size of
the input, by additionally providing an appropriate node $\kappa$, instance $I$,
where $\ssize{I} \le\max_{j \in \{1,\ldots,m\}}\{\ssize{\bodypos{\qr_j}}\}+m$,
and integer $i$ denoting disjunct $\qr_i$.


    \medskip \noindent \textbf{2.}
From the definition of parallel-completeness, it follows that $\qr$ is not
parallel-complete on $\distp$ if and only if there is an instance
$I\subseteq\facts{\sch, U}$ and fact $\fc$, such
that $\fc \in \qr(I)$, and for all nodes $\kappa$, $\fc\not \in \qr(\distvar{\distp}{I}(\kappa))$.
The former particularly implies that for some valuation $V$ and integer $i\in\{1,\ldots,
m\}$,
$V(\head{\qr_i}) = \fc$, $V(\pos{\qr_i}) \subseteq I$, and $V(\negbody{\qr_i})\cap I =
\emptyset$. The latter implies that for all integers $j \in
\{1,\ldots, m\}$, the valuation $V_j$ defined by the head of $\qr_j$ and fact $\fc$,
does not satisfy on any of the nodes $\kappa$, that is, $V_j(\pos{\qr_j}) \not\subseteq
\distvar{\distp}{I}(\kappa)$ or $V_j(\negbody{\qr_j}) \cap \distvar{\distp}{I}(\kappa) \ne \emptyset$.

Again the size of $I$ can be bounded, in particular, it suffices to consider only
instances of size at most $\max_{j \in \{1, \ldots,
m\}}\{\ssize{\bodypos{\qr_j}}\}+\ssize{\nw}\cdot m$, where $\nw$ denotes the
network where $\distp$ is defined over.
Indeed, for every instance $I$ as in the condition, we can construct an instance $I'$
containing all the facts in $V(\pos{\qr_i})$ and for every node $\kappa$, and every
$j\in\{1, \ldots m\}$ either add nothing (when $V_j(\pos{\qr_j})\not\subseteq I$), or add
some fact $\fcB_j \in V_j(\negbody{\qr_j}) \cap I \cap \respvar{\distp}{\kappa}$. 

By definition, the considered distribution policies have only polynomially many nodes in
$n$, and verifying whether a fact is available on some node can be done in polynomial
 as well. Hence, the result follows.


    \medskip \noindent \textbf{3.}
    As parallel-correctness for $\qr$ under $\distp$ means $\qr$ being parallel-sound
    and parallel-complete under $\distp$, it follows immediately from $(1)$ and
    $(2)$ that deciding parallel-correctness for $\classUFCQnegineq$ and
    $\classpenum$ is in $\coNP$.
\end{proof}

\end{document}